\newcommand{\reals}{\mathbb{R}}
\newcommand{\flowpipe}{\mathcal{F}}
\renewcommand\footnotetextcopyrightpermission[1]{} % removes footnote with conference information in first column
\newtheorem{remark}{Remark}
\newtheorem{proposition}{Proposition}
\DeclareMathOperator{\sgn}{sgn}
\DeclareMathOperator{\diag}{diag}
\newenvironment{myitemize}{\begin{list}{$\bullet$}
{\setlength{\topsep}{1mm}
\setlength{\itemsep}{0.25mm}
\setlength{\parsep}{0.25mm}
\setlength{\itemindent}{0mm}
\setlength{\partopsep}{0mm}
\setlength{\labelwidth}{15mm}
\setlength{\leftmargin}{4mm}}}{\end{list}}
 \newcommand{\minitab}[2][l]{\begin{tabular}{#1}#2\end{tabular}} 
\begin{document}
\title{ReachNN: Reachability Analysis of Neural-Network\\ Controlled Systems}

\author{Chao Huang}
\affiliation{%
 \institution{Northwestern University}
 \city{Evanston}
 \state{Illinois}
}
\email{chao.huang@northwestern.edu}

\author{Jiameng Fan}
\affiliation{%
 \institution{Boston University}
 \city{Boston}
 \state{Massachusetts}
}
\email{jmfan@bu.edu}

\author{Wenchao Li}
\affiliation{%
 \institution{Boston University}
 \city{Boston}
 \state{Massachusetts}
}
\email{wenchao@bu.edu}

\author{Xin Chen}
\affiliation{%
 \institution{Dayton University}
 \city{Dayton}
 \state{Ohio}
}
\email{xchen4@udayton.edu}

\author{Qi Zhu}
\affiliation{%
	\institution{Northwestern University}
	\city{Evanston}
	\state{Illinois}
}
\email{qzhu@northwestern.edu}

\begin{abstract}
    Applying neural networks as controllers in dynamical systems has shown great promises. However, it is critical yet challenging to verify the safety of such control systems with neural-network controllers in the loop. Previous methods for verifying neural network controlled systems are limited to a few specific activation functions. In this work, we propose a new reachability analysis approach based on Bernstein polynomials that can verify neural-network controlled systems with a more general form of activation functions, i.e., as long as they ensure that the neural networks are Lipschitz continuous. Specifically, we consider abstracting feedforward neural networks with Bernstein polynomials for a small subset of inputs. To quantify the error introduced by abstraction, we provide both theoretical error bound estimation based on the theory of Bernstein polynomials and more practical sampling based error bound estimation, following a tight Lipschitz constant estimation approach based on forward reachability analysis. Compared with previous methods, our approach addresses a much broader set of neural networks, including heterogeneous neural networks that contain multiple types of activation functions. Experiment results on a variety of benchmarks show the effectiveness of our approach.
\end{abstract}

%
% The code below should be generated by the tool at
% http://dl.acm.org/ccs.cfm
% Please copy and paste the code instead of the example below.
%

\maketitle

\section{Introduction} \label{sec:intro}

%Recent research in robotics and control theory has been addressing the development of autonomous systems such as under-actuated robotic systems~\cite{Rajeswaran-RSS-18} and self-driving cars~\cite{bojarski2016end}. Due to the growing complexity of these autonomous systems, traditional model-based approaches suffer from low efficiency and are often ineffective in handling large-scale systems with model-based controller synthesis. 
Data-driven control systems, especially neural-network-based controllers~\cite{mnih2015human, Lillicrap2016ContinuousCW, pan2018agile}, have recently become the subject of intense research and demonstrated great promises. %for synthesizing controllers.
%However, as many of these autonomous systems are safety-critical, it is important to develop verification techniques to prove their correctness and safety. Despite the explosion in the use of neural-network controllers, 
Formally verifying the safety of these systems however still remains an open problem. %\zhu{neural-network controlled systems or neural network controlled systems? It would be good to make it consistent.}
%The central question for neural-network controlled systems is the safety, which is typically solved by the \emph{reachability} problem, namely whether the system evolution in state space satisfy the reach/avoid property.
A Neural-Network Controlled System (NNCS) is essentially a continuous system controlled by a neural network, which produces control inputs at the beginning of each control step based on the current values of the state variables and feeds them back to the continuous system. 
Reachability of continuous or hybrid dynamical systems with traditional controllers has been extensively studied in the last decades. It has been proven that reachability of most nonlinear systems is undecidable~\cite{Alur+/1995/hybrid_systems,henzinger1998s}. Recent approaches mainly focus on the overapproximation of reachable sets~\cite{dreossi2016parallelotope,lygeros1999controllers,frehse2005phaver,yang2016linear,prajna2004safety,huang2017probabilistic}. The main difficulty impeding the direct application of these approaches to NNCS is the hardness of formally characterizing or abstracting the input-output mapping of a neural network.

%current works mainly focus on how to construct an overapproximated or equivalent tractable model for the neural network such that existing approaches on reachability analysis can be applied.

Some recent approaches considered the problem of computing the output range of a neural network. Given a neural network along with a set of the inputs, these methods seek to compute an interval or a box (vector of intervals) that contains the set of corresponding outputs. These techniques are partly motivated by the study of robustness~\cite{duttoutputa2018} of neural networks to adversarial examples~\cite{szegedy2013intriguing}. \citet{katz2017reluplex} propose an SMT-based approach called Reluplex by extending the simplex algorithm to handle ReLU constraints.
\citet{huang2017safety} use a refinement-by-layer technique to prove the absence or show the presence of adversarial examples around the neighborhood of a specific input. General neural networks with Lipschitz continuity are then considered by \citet{ruan2018reachability}, where the authors show that a large number of neural networks are Lipschitz continuous and the Lipschitz constant can help in estimating the output range which requires solving a global optimization problem. \citet{duttoutputa2018} propose an efficient approach using mixed integer linear programming to compute the exact interval range of a neural network with only ReLU activation functions. 

However, these existing methods cannot be directly used to analyze the reachability of dynamical systems controlled by neural networks. As the behavior of these systems is based on the interaction between the continuous dynamics and the neural-network controller, we need to not only compute the output range but also describe the input-output mapping for the controller. More precisely, we need to compute a tractable function model whose domain is the input set of the controller and its output range contains the set of the controller's outputs. We call such a function model a higher-order set, to highlight the distinction from intervals which are $0$-order sets. Computing a tractable function model from the original model can also be viewed as a form of \emph{knowledge distillation}~\cite{Hinton2015DistillingTK} from the verification perspective, as the function model should be able to produce comparable results or replicate the outputs of the target neural network on specific inputs.

There have been some recent efforts on computing higher-order sets for the controllers in NNCS. \citet{ivanov2018verisig} present a method to equivalently transform a system to a hybrid automaton by replacing a neuron in the controller with an ordinary differential equation (ODE). This method is however only applicable to differentiable neural-network controllers -- ReLU neural networks are thus excluded. \citet{Dutta_Others__2019__Reachability} use a flowpipe construction scheme to compute overapproximations for reachable set segments. A piecewise polynomial model is used to provide an approximation of the input-output mapping of the controller and an error bound on the approximation. This method is however limited to neural networks with ReLU activation functions. 
We will discuss technical features of these related works in more detail in Section~\ref{sec:problem} when we introduce the problem formally.

Neural network controllers in practical applications could involve multiple types of activation functions~\cite{beer1989heterogeneous,Lillicrap2016ContinuousCW}. The approaches discussed above for specific activation function may not be able to handle such cases, and a more general approach is thus needed.
%for neural networks with specific activation functions can hardly handle such cases. A general approach to handle neural networks that contain various activation functions becomes a critical issue.

In this paper, we propose a new reachability analysis approach for verifying NNCS with general neural-network controllers called ReachNN based on Bernstein polynomial. More specifically, given an input space and a degree bound, we construct a polynomial approximation for a general neural-network controller based on Bernstein polynomials. For the critical step of estimating the approximation error bound, inspired by the observation that most neural networks are Lipschitz continuous~\cite{ruan2018reachability}, we present two techniques -- a priori theoretical approach based on existing results on Bernstein polynomials and a posteriori approach based on adaptive sampling.
%We first give the theoretical bound based on Lipschitz constant, which is inspired by the observation that most neural networks are Lipschitz continuous~\cite{ruan2018reachability}. We also give a posteriori approach based on the interval analysis. 
By applying these two techniques together, we are able to capture the behavior of a neural-network controller during verification via Bernstein polynomial with tight error bound estimation. %\zhu{what does satisfactory mean here? better to be more specific.} \fan{I have revised the sentence here.} 
Based on the polynomial approximation with the bounded error, we can iteratively compute an overapproximated reachable set of the neural-network controlled system via flowpipes~\cite{Zhao/1992/phd}. By the Stone-Weierstrass theorem~\cite{de1959stone}, our Bernstein polynomial based approach can approximate most neural networks with different activation functions (e.g., ReLU, sigmoid, tanh) to arbitrary precision. Furthermore, as we will illustrate later in Section~\ref{sec:approach}, the approximation error bound can be conveniently calculated. %Our approach can also handle heterogeneous neural networks that have multiple tupes of activation functions.% \zhu{Add a statement about how our approach can handle NNs with multiple activation functions.}

\smallskip
\noindent
Our paper makes the following contributions.
\begin{myitemize}
    \item We proposed a Bernstein polynomial based approach to generate high-order approximations for the input-output mapping of general neural-network controllers, which is much tighter than the interval based approaches.
    \item We developed two techniques to analyze the approximation error bound for neural networks with different activation functions and structures based on the Lipschitz continuity of both the network and the approximation. One is based on the theory of Bernstein polynomials and provides a priori insight of the theoretical upper bound of the approximation error, while the other achieves a more accurate estimation in practice via adaptive sampling.%\zhu{how different?}\fan{I add some differences, not sure if they are useful.}
    \item We demonstrated the effectiveness of our approach on multiple benchmarks, showing its capability in handling dynamical systems with various neural-network controllers, including heterogeneous neural networks with multiple types of activation functions. 
    %can only be handled by our approach except for naive output range analysis. 
    For homogeneous networks, compared with state-of-the-art approaches Sherlock and Verisig, our ReachNN approach can achieve comparable or even better approximation performance, albeit with longer computation time.
\end{myitemize}

The rest of the paper is structured as follows. Section~\ref{sec:problem} introduces the system model, the reachability problem, and more details on the most relevant works. Section~\ref{sec:approach} presents our approach, including the construction of polynomial approximation and the estimation of error bound. Section~\ref{sec:experiment} presents the experimental results. Section~\ref{sec:challenges} provides further discussion of our approach and Section~\ref{sec:conclusion} concludes the paper.
%shows our further insights of our approach and we conclude the paper in Section~\ref{sec:conclusion}.

\section{Problem Statement} \label{sec:problem}

%\textcolor{red}{Xin: I will edit this section.}

In this section, we describe the reachability of NNCS and a solution framework that computes overapproximations for reachable sets. In the paper, a set of ordered variables $x_1,x_2,\dots,x_n$ is collectively denoted by $x$. For a vector $x$, we denote its $i$-th component by $x_i$.

%We describe the safety problem and give a solution framework that explains the state-of-the-art works from a unified perspective.

A NNCS is illustrated in the Figure~\ref{fig:snns_structure}. The plant is the formal model of a physical system or process, defined by an ODE in the form of $\dot{x}=f(x,u)$ such that $x$ are the $n$ state variables and $u$ are the $m$ control inputs. We require that the function $f:\,\reals^m\times \reals^n \rightarrow \reals^m$ is Lipschitz continuous in $x$ and continuous in $u$, in order to guarantee the existence of a unique solution of the ODE from a single initial state (see~\cite{Meiss/2007/Differential}).

The controller in our system is implemented as a feed-forward neural network, which can be defined as a function $\kappa$ that maps the values of $x$ to the control inputs $u$. It consists of $S$ layers, where the first $S-1$ layers are referred as ``hidden layers'' and the $S$-th layer represents the network's output. Specifically, we have
\begin{displaymath}
\kappa(x) = \kappa_S(\kappa_{S-1}(\dots \kappa_1(x; W_1, b_1); W_2, b2); W_L, b_L)
\end{displaymath}
where $W_s$ and $b_s$ for $s=1,2,\dots,S$ are learnable parameters as linear transformations connecting two consecutive layers, which is then followed by an element-wise nonlinear activation function. $\kappa_{i}(z_{s-1}; W_{s-1}, b_{s-1})$ is the function mapping from the output of layer $s-1$ to the output layer $s$ such that $z_{s-1}$ is the output of layer $s-1$.
An illustration of a neural network is given in Fig~\ref{fig:snns_structure}.

A NNCS works in the following way. Given a control time stepsize $\delta_c > 0$, at the time $t = i \delta_c$ for $i=0,1,2,\dots$, the neural network takes the current state $x(i \delta_c)$ as input, computes the input values $u(i \delta_c)$ for the next time step and feeds it back to the plant. More precisely, the plant ODE becomes $\dot{x} = f(x, u(i\delta_c))$ in the time period of $[i\delta_c, (i+1)\delta_c]$ for $i=0,1,2,\dots$. Notice that the controller does not change the state of the system but the dynamics. The formal definition of a NNCS is given as below.

\begin{figure}[tbp]
	\centering	
	\includegraphics[width=0.4\textwidth]{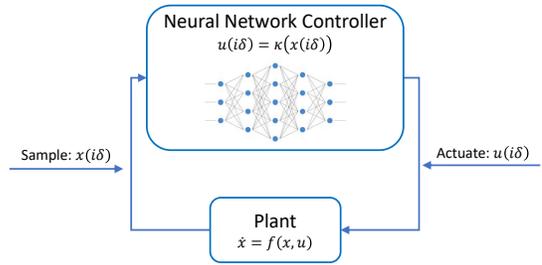}%
	\caption{Neural-network controlled system (NNCS).}
	\label{fig:snns_structure}
\end{figure}

\begin{definition}[Neural-Network Controlled System]
A \emph{neural-network controlled system (NNCS)} can be denoted by a tuple $(\mathcal{X}, \mathcal{U}, F, \kappa, \delta_c, X_0)$, where $\mathcal{X}$ denotes the state space whose dimension is the number of state variables, $\mathcal{U}$ denotes the control input set whose dimension is the number of control inputs, $F$ defines the continuous dynamics $\dot{x} = f(x, u)$, $\kappa: \mathcal{X} \rightarrow \mathcal{U}$ defines the input/output mapping of the neural-network controller, $\delta_c$ is the control stepsize, and $X_0\subseteq \mathcal{X}$ denotes the initial state set.
\end{definition}

Notice that a NNCS is deterministic when the continuous dynamics function $f$ is Lipschitz continuous. The behavior of a NNCS can be defined by its flowmap. The \emph{flowmap} of a system $(\mathcal{X}, \mathcal{U}, F$, $\kappa, \delta_c, X_0)$ is a function $\varphi:X_0 \times \reals_{\geq 0} \rightarrow \mathcal{X}$ that maps an initial state $x_0$ to the state $\varphi(x_0, t)$, which is the system state at the time $t$ from the initial state $x_0$. Given an initial state $x_0$, the flowmap has the following properties for all $i=0,1,2,\dots$: (a) $\varphi$ is the solution of the ODE $\dot{x} = f(x, u(i\delta_c))$ with the initial condition $x(0) = \varphi(x_0, i\delta_c)$ in the time interval $t \in [t - i\delta_c, t - i\delta_c + \delta_c]$; (b) $u(i\delta_c) = \kappa(\varphi(x_0, i\delta_c))$.

We call a state $x$ \emph{reachable} at time $t\geq 0$ on a system $(\mathcal{X}, \mathcal{U}, F$, $\kappa, \delta_c, X_0)$, if and only if there is some $x_0\in X_0$ such that $x = \varphi(x_0, t)$. Then, the set of all reachable states is called the \emph{reachable set} of the system.

\begin{definition}[Reachability Problem]
 The \emph{reachability problem} on a NNCS is to decide whether a given state is reachable or not at time $t\geq 0$.
\end{definition}

%\begin{definition}[Safety]
% Given a safety specification $\gamma(x)$, we call a system \emph{safe} if it has no reachable state $s$ that violates $\gamma(s)$.
%\end{definition}

In the paper, we focus on the problem of computing the reachable set for a NNCS. Since NNCSs are at least as expressive as nonlinear continuous systems, the reachability problem on NNCSs is undecidable. Although there are numerous existing techniques for analyzing the reachability of linear and nonlinear hybrid systems~\cite{Frehse+/2011/SpaceEx,Chen+/2012/taylor_models,Kong+/2015/dReach,Duggirala+/2015/C2E2,Althoff/2015/CORA}, none of them can be directly applied to NNCS, since equivalent transformation from NNCS to a hybrid automaton is usually very costly due to the large number of locations in the resulting automaton. Even an on-the-fly transformation may lead to a large hybrid automaton in general. Hence, we compute flowpipe overapproximations (or flowpipes) for the reachable sets of NNCS.

Similar to the flowpipe construction techniques for the reachability analysis of hybrid systems, we also seek to compute overapproximations for the reachable segments of NNCS. A continuous dynamics can be handled by the existing tools such as SpaceEx~\cite{Frehse+/2011/SpaceEx} when it is linear, and Flow*~\cite{Chen+/2013/flowstar} or CORA~\cite{Althoff/2015/CORA} when it is nonlinear. The challenge here is to compute an accurate overapproximation for input/output mapping of the neural-network controller in each control step, and we will do it in the following way.
 
Given a bounded input interval $X_I$, we compute a model $(g(x), \epsilon)$ where $\epsilon \geq 0$ such that for any $x\in X_I$, the control input $\kappa(x)$ belongs the set
$\{g(x) + z \,|\, z\in B_{\epsilon}\}$. $g$ is a function of $x$, and $B_{\epsilon}$ denotes the box $[-\epsilon, \epsilon]$ in each dimension.
%which is also the main difference between the existing works. We summarize these works as follows:
We provide a summary of the existing works which are close to ours.

%$H$ equivalently by letting a discrete mode of $H$ represent a neuron. Thus there is no approximation error:
%\begin{equation}  \label{eq:mode2}
%    g(\vx) = H(\vx), \quad \varepsilon = 0.
%\end{equation}
%However, it is only applicable for networks with sigmoid activation functions. Furthermore, such a treatment needs to introduce a large number of discrete variables, maintaining the complexity of the neural network. Thus, it may hardly handle large neural networks.

\smallskip
\noindent
\textbf{Interval overapproximation.}
The methods described in~\cite{ruan2018reachability,xiang2018reachability} compute intervals as neural-network input/output relation and directly feed these intervals in the reachability analysis. Although they can be applied to more general neural-network controllers,
%the dependencies of state variables cannot be preserved in the reachability analysis for each control step, as it is addressed in~\cite{Dutta_Others__2019__Reachability}.
using interval overapproximation in the reachability analysis cannot capture the dependencies of state variables for each control step, and it is reported in \cite{duttoutputa2018}.

\smallskip
\noindent
\textbf{Exact neural network model.} The approach presented in~\cite{ivanov2018verisig} equivalently transforms the neural-network controller to a hybrid system, then the whole NNCS becomes a hybrid system and the existing analysis methods can be applied. The main limitations of the approach are: (a) the transformation could generate a model whose size is prohibitively large, and (b) it only works on neural networks with sigmoid and tanh activation functions. 
%Some work have been devoted to output range analysis of neural networks due to the intrinsic relatedness between a neural network's output range and its adversarial examples \cite{ruan2018reachability}. These results can be directly applied here \cite{xiang2018reachability}. That is, given the input space $X$, let $U$ be the output range of the neural network controller $\kappa$. Then $\tau$ can be constructed as:
%\begin{equation} \label{eq:mode1}
%    g(\vx) = 0, \quad \varepsilon \in U'.
%\end{equation}
%However as shown in \cite{Dutta_Others__2019__Reachability}, using a fixed output range to approximate the neural network controller in a close loop dynamic system is always over-conservative.

\smallskip
\noindent
\textbf{Polynomial approximation with error bound.}
In~\cite{Dutta_Others__2019__Reachability}, the authors describe a method to produce higher-order sets for neural-network outputs. It is the closest work to ours. In their paper, the approximation model is a piecewise polynomial over the state variables, and the error bound can be well limited when the degrees or pieces of the polynomials are sufficiently high. The main limitation of the method is that it only applies to the neural networks with ReLU activation functions.

%A more promising way is to construct $\tau$ by an approximate model $P$ with lower complexity along with a bounded error:
%\begin{equation}  \label{eq:mode3}
%    g(\vx) = P(\vx), \quad \varepsilon \in E \subseteq U'.
%\end{equation}
%Following this idea, \cite{Dutta_Others__2019__Reachability} presents a piecewise polynomial approximation method for networks with ReLU activation functions. 

\section{Our Approach} \label{sec:approach}

Our approach exploits high-order set approximation of the output of general types neural network in NNCS reachability analysis via Bernstein polynomials and approximation error bound estimation. The reachable set of NNCS is overapproximated by a finite set of Taylor model flowpipes in our method. The main framework of flowpipe construction is presented in Algorithm~\ref{alg:framework}. Given a NNCS and a bounded time horizon $[0,N\delta_c]$, the algorithm computes $Nk$ flowpipes, each of which is an overapproximation of the reachable set in a small time interval, and the union of the flowpipes is an overapproximation of the reachable set in the time interval of $[0,N\delta_c]$. As stated in Theorem~\ref{thm:overapproximation}, this fact is guaranteed by (a) the flowpipes are reachable set overapproximations for the continuous dynamics in each iteration, and (b) the set $U_i$ is an overapproximation of the neural-network controller output. Notice that this framework is also used in~\cite{Dutta_Others__2019__Reachability}. However, we present a technique to compute $U_i$ for more general neural networks.

%Our approach follows the mode \textbf{Approximate Model + Bounded Error}. In detail, we leverage the classical framework of computing flowpipe over-approximations for the reachable set of a NNCS over a bounded time horizon. The main framework is given by Algorithm~\ref{alg:framework}, in which the number of flowpipes computed for each control step is determined by the flowpipe construction stepsize, and the resulting flowpipes form an overapproximation of the reachable set in the time horizon of $[0, N\delta_c]$.

\begin{algorithm} [tbp]
    \SetAlgoLined
    \KwData{NNCS $(\mathcal{X}, \mathcal{U}, F, \kappa, \delta_c, X_0)$, time horizon $N\delta_c$}
    \KwResult{Flowpipes}
     Flowpipes $\gets \emptyset$\;
    \For{$i \gets 0$ to $N-1$}{
        Compute a set $U_i$ which contains the value of $\kappa(x)$ for all $x \in X_i$\;
        Compute the flowpipes $\flowpipe_1,\dots,\flowpipe_k$ for the continuous dynamics $\dot{x} = f(x, y)$ with $x(0)\in X_i$ and $y\in U_i$\;
        Evaluate the flowpipe $X_{i+1}$ based on $\flowpipe_k$ for the reachable set at $t = (i+1)\delta_c$\;
        Flowpipes $\gets$ Flowpipes $\cup\, \{\flowpipe_1,\dots,\flowpipe_k\}$\;
    }
    \Return Flowpipes\;
\caption{Flowpipe construction for NNCS}
\label{alg:framework}
\end{algorithm}

%The soundness of Algorithm \ref{alg:framework} is obvious if we can always construct $P_i$ and $E_i$ for $\kappa$ in $X_i$\ satisfying Equation \eqref{eq:mode3} (line 3).

Let us briefly revisit the technique of computing Taylor model flowpipes for continuous dynamics.

\smallskip
\noindent\textbf{Taylor model.}
Taylor models are introduced to provide higher-order enclosures for the solutions of nonlinear ODEs (see~\cite{Berz+Makino/1998/Verified}), and then extended to overapproximate reachable sets for hybrid systems~\cite{Chen+/2012/taylor_models} and solve optimization problems~\cite{Makino+Berz/2005/TM_Optimization}. A \emph{Taylor Model (TM)} is denoted by a pair $(p, I)$ where $p$ is a (vector-valued) polynomial over a set of variables $x$ and $I$ is an (vector-valued) interval. A continuous function $f(x)$ can be overapproximated by a TM $(p(x), I)$ over a domain $D$ in the way that $f(x) \in p(x) + I$ for all $x\in D$. When the approximation $p$ is close to $f$, $I$ can be made very small.

%\fan{In Algorithm 1, will there be ambiguous to claim a set $U_i$? It seems like an output range analysis result.}

\smallskip
\noindent\textbf{TM flowpipe construction.}
The technique of TM flowpipe construction is introduced to compute overapproximations for the reachable set segments of continuous dynamics. Given an ODE $\dot{x} = f(x)$, an initial set $X_0$ and a time horizon $[0,T]$, the method computes a finite set of TMs $\flowpipe_1,\dots,\flowpipe_k$ such that for each $i=1,\dots,k$, the flowpipe $\flowpipe_i$ contains the reachable set in the time interval of $[t_i,t_{i+1}]$, and $\bigcup_{i=1}^k [t_i,t_{i+1}] = [0,T]$, i.e., the union of the flowpipes is an overapproximation of the reachable set in the given time horizon. The standard TM flowpipe construction is described in~\cite{Berz+Makino/1998/Verified}, and it is adapted in~\cite{Chen/2015/phd,Chen+Sankaranarayanan/2016/Decomposed} for better handling the dynamical systems.

%\fan{What is higher-order overapproximations? Also, I think we should emphasize on the advantage of function approximation v.s. output range.}

\smallskip
\noindent

The main contribution of our paper is a novel approach to compute a higher-order overapproximation $U_i$ for the output set of neural networks with a more general form of activation functions, including such as ReLU, sigmoid, and tanh. We show that this approach can generate accurate reachable set overapproximations for NNCSs, in combination with the TM flowpipe construction framework. Our main idea can be described as follows.

Given a neural-network controller with a single output, we assume that its input/output mapping is defined by a function $\kappa$ and its input interval is defined by a set $X_i$. In the $i$-th (control) step, we seek to compute a TM $U_i = P(x) + [-\bar{\varepsilon},\bar{\varepsilon}]$ such that
\begin{equation}\label{eq:output_overapproximation}
    \kappa(x) \in P(x) + [-\bar{\varepsilon},\bar{\varepsilon}] \mbox{\quad for all\quad } x\in X_i.
\end{equation}
 Hence, the TM is an overapproximation of the neural network output. In the paper, we compute $P$ as a Bernstein polynomial with bounded degrees. Since $\kappa$ is a continuous function that can be approximated by a Bernstein polynomial to arbitrary precision, according to the Stone-Weierstrass theorem~\cite{de1959stone}, we can always ensure that such $P$ exists.

%$U_i$ that contains at least the output set $K_i = \{\kappa(x)\,|\,x\in X_i\}$ and captures the input-output mapping relation of $\kappa$. Notice that $\kappa$ is a continuous function. Therefore, according to the Stone-Weierstrass theorem~\cite{de1959stone}, $\kappa$ can be approximated by a Bernstein polynomial $P$, which is an universal approximation to arbitrary precision. We then evaluate an error estimation $\bar{\varepsilon} > 0$ that is at least as large as the approximation error of $P$. Hence, we set
%$U_i = \{P(x) + z \,|\, x\in X_i, z\in [-\bar{\varepsilon},\bar{\varepsilon}]\}$ and have that $K_{i, \kappa} \subseteq U_i$.

In our reachability computation, the set $X_i$ is given by a TM flowpipe. To obtain a TM for the neural network output set, we use TM arithmetic to evaluate a TM for $P(x) + [-\bar{\varepsilon},\bar{\varepsilon}]$ with $x\in X_i$. Then, the polynomial part of the resulting TM can be viewed as an approximation of the mapping from the initial set to the neural network output, and the remainder contains the error. Such representation can partially keep the variable dependencies and much better limit the overestimation accumulation than the methods that purely use interval arithmetic.

\begin{theorem}\label{thm:overapproximation}
   The union of the flowpipes computed by Algorithm~\ref{alg:framework} is an overapproximation of the reachable set of the system in the time horizon of $[0, N\delta_c]$, if the flowpipes are overapproximations of the ODE solutions and the TM $U_i$ in every step satisfies (\ref{eq:output_overapproximation}).
\end{theorem}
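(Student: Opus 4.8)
The plan is to prove the statement by induction on the control step index $i \in \{0,1,\dots,N\}$, with the inductive invariant that the set $X_i$ computed in iteration $i$ of Algorithm~\ref{alg:framework} (where $X_0$ is the given initial set) contains every reachable state at time $i\delta_c$, i.e.\ $\varphi(x_0, i\delta_c) \in X_i$ for all $x_0 \in X_0$. The base case $i = 0$ is immediate, since $X_0$ is exactly the initial state set and $\varphi(x_0,0) = x_0$.

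For the inductive step, fix $x_0 \in X_0$ and assume $\varphi(x_0, i\delta_c) \in X_i$. By property (b) of the flowmap, $u(i\delta_c) = \kappa(\varphi(x_0, i\delta_c))$; since $\varphi(x_0, i\delta_c) \in X_i$ and the TM $U_i$ satisfies~(\ref{eq:output_overapproximation}), which asserts $\kappa(x) \in U_i$ for all $x \in X_i$, we obtain $u(i\delta_c) \in U_i$. By property (a), on the interval $[i\delta_c, (i+1)\delta_c]$ the trajectory $t \mapsto \varphi(x_0, t)$ is the solution of $\dot{x} = f(x, u(i\delta_c))$ with initial value $\varphi(x_0, i\delta_c) \in X_i$. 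Hence this trajectory is one of the solutions of the parametric ODE $\dot{x} = f(x, y)$ with $x(0) \in X_i$ and constant $y \in U_i$ that the flowpipes $\flowpipe_1,\dots,\flowpipe_k$ are assumed to overapproximate. Therefore, for every $t \in [i\delta_c, (i+1)\delta_c]$, the state $\varphi(x_0, t)$ lies in whichever flowpipe $\flowpipe_j$ has a time window containing $t - i\delta_c$, and in particular $\varphi(x_0, (i+1)\delta_c)$ lies in $X_{i+1}$, which is read off from $\flowpipe_k$ at the end time $t = (i+1)\delta_c$. This re-establishes the invariant at step $i+1$ and closes the induction.

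To conclude for the whole horizon, I would observe that any reachable state at any time $t \in [0, N\delta_c]$ equals $\varphi(x_0, t)$ for some $x_0 \in X_0$, and $t$ lies in some sub-interval $[i\delta_c, (i+1)\delta_c]$ with $0 \le i \le N-1$; the argument above then places $\varphi(x_0, t)$ in one of the flowpipes $\flowpipe_1,\dots,\flowpipe_k$ produced in iteration $i$, each of which is inserted into the returned collection. Hence the union of all returned flowpipes overapproximates the reachable set on $[0, N\delta_c]$.

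The one place that requires care — and the crux of the argument — is the inductive step's invocation of the flowpipe-construction hypothesis: one must be precise that $\flowpipe_1,\dots,\flowpipe_k$ overapproximate the solutions of $\dot{x} = f(x,y)$ over the \emph{entire} uncertainty set $x(0) \in X_i$, $y \in U_i$, with $y$ treated as a free constant parameter, so that the \emph{actual} control value $u(i\delta_c) = \kappa(\varphi(x_0, i\delta_c))$ — which is state-dependent but constant over a single control step — is among the covered cases. This is exactly where the two premises of the theorem (soundness of the continuous flowpipes, and the enclosure property~(\ref{eq:output_overapproximation}) of $U_i$) are combined; everything else is bookkeeping about matching a time $t$ to the flowpipe whose window contains it and about $X_{i+1}$ being the evaluation of $\flowpipe_k$ at $(i+1)\delta_c$.
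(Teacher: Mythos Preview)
The paper states Theorem~\ref{thm:overapproximation} without proof, so there is no original argument to compare against. Your inductive proof is correct and is exactly the standard soundness argument for step-by-step flowpipe construction: the invariant $\varphi(x_0,i\delta_c)\in X_i$ is the right one, the inductive step correctly combines the two hypotheses (enclosure~(\ref{eq:output_overapproximation}) for $U_i$ and soundness of the continuous flowpipes over the parametric ODE $\dot{x}=f(x,y)$ with $x(0)\in X_i$, $y\in U_i$), and the wrap-up over the full horizon is straightforward. Your closing paragraph about treating $y$ as a constant parameter ranging over $U_i$ is precisely the point that makes the argument go through, and you have identified it correctly.
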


\begin{remark}
    In our framework, Taylor models can also be considered as a candidate of high-order approximation for a neural network's input-output mapping. However, comparing with Bernstein polynomial based approach adopted in this paper, Taylor models suffer from two main limitations: (1) The validity of Taylor models relies on the function differentiability, while ReLU neural networks are not differentiable. Thus Taylor models cannot handle a large number of neural networks; (2) There is no theoretical upper bound estimation for Taylor models, which further limits the rationality of using Taylor models.
\end{remark}

\subsection{Bernstein Polynomials for Approximation}

%We first introduce the notion of Bernstein polynomials.
\begin{definition}[Bernstein Polynomials] \label{de:bernstein}
    Let $d=(d_1,\cdots,d_m) \in \mathbb{N}^m$ and $f$ be a function of $x=(x_1.\cdots,x_m)$ over $I=[0,1]^m$. The polynomials
    \begin{displaymath}
        B_{f,d}(x)\ =\ \sum_{\mathclap{\substack{0\leq k_j\leq d_j \\ j\in\{1,\cdots, m\}}}} f(\frac{k_1}{d_1},\cdots,\frac{k_m}{d_m}) \prod_{j=1}^{m}\left(\binom{d_j}{k_j}x_j^{k_j}(1-x_j)^{d_j-k_j}\right)
    \end{displaymath}
    are called the \emph{Bernstein polynomials} of $f$ under the degree $d$.
\end{definition}

We then construct $P(x)$ over $X$ by a series of linear transformation based on Bernstein polynomials. Assume that $X=[l_1,u_1]\times\cdots \times [l_m,u_m]$. Let $x'=(x'_1,\cdots ,x'_m)$, where
\begin{displaymath}
    x'_j=(x_j-l_j)/(u_j-l_j), \quad j=1,\cdots , m
\end{displaymath}
and 
\begin{equation} \label{eq:transformed_kappa}
    \kappa'(x')
    =
    \kappa(x)
    =
    \kappa
    \left(
    \begin{pmatrix} u_1-l_1 & \cdots & 0 \\ \vdots & \ddots & \vdots \\ 0 & \cdots &  u_m-l_m \end{pmatrix}
    x'
    + 
    \begin{pmatrix}
		l_1  \\
		\vdots \\
		l_m
    \end{pmatrix}
    \right).
\end{equation}
It is easy to see that $\kappa'$ is defined over $I$. For $d=(d_1,\cdots, d_m) \in \mathbb{N}^m$, let $B_{\kappa',d}(x')$ be the Bernstein polynomials of $\kappa'(x')$.
We construct the polynomial approximation for $\kappa$ as: 
\begin{equation} \label{eq:poly_approx_construction}
    P_{\kappa,d}(x)
    = 
    B_{\kappa',d}
    \left(
        \begin{pmatrix} \frac{1}{u_1-l_1}  & \cdots & 0 \\ \vdots & \ddots & \vdots \\ 0 & \cdots &  \frac{1}{u_m-l_m} \end{pmatrix}
    x
    - 
    \begin{pmatrix}
		\frac{l_1}{u_1-l_1} \\
		\vdots \\
		\frac{l_m}{u_m-l_m} 
    \end{pmatrix}
    \right).
\end{equation}

When we want to compute a Bernstein polynomial over a non-interval domain $I$, we may consider an interval enclosure of $I$, since we only need to ensure that the polynomial is valid on the domain and it is sufficient to take its superset. Hence, in Algorithm~\ref{alg:framework}, the Bernstein polynomial(s) in $U_i$ are computed based on an interval enclosure of $X_i$.

%In Algorithm~\ref{alg:framework}, we take an interval enclosure of $X_i$ to compute the Bernstein polynomial(s) in $U_i$.

%\begin{remark}
%    To apply $P_{\kappa,d}$ in Algorithm \ref{alg:framework}, we need to guarantee any $\mathbb{X}_i$, $i=0,\cdots ,N-1$ is a hyper-rectangle.
%\end{remark}

\subsection{Approximation Error Estimation}

After we obtain the approximation of the neural network controller, a certain question is how to estimate a valid bound for approximation error $\varepsilon$ such that Theorem \ref{thm:overapproximation} holds. Namely, from any given initial state set $X$, the reachable set of the perturbed system $\dot{x} = f(x,P_{\kappa,d},\varepsilon)$, $\varepsilon \in [-\bar{\varepsilon},\bar{\varepsilon}]$ at any time $t\in [0,\delta_c]$ is a superset of the one of the NNCS with ODE $f(x,\kappa)$. A sufficient condition can be derived based on the theory of differential inclusive \cite{smirnov2002introduction}:
\begin{lemma}
    Given any state set $X$, let $P_{\kappa,d}$ be the polynomial approximation of $\kappa$ with respect to the degree $d$ defined as Equation \eqref{eq:poly_approx_construction}. For any time $t\in [0,\delta_c]$, the reachable set of the perturbed system $\dot{x} = f(x,P_{\kappa,d}+\varepsilon)$, $\varepsilon \in [-\bar{\varepsilon},\bar{\varepsilon}]$ is a superset of the one of the NNCS with ODE $f(x,\kappa)$ from $X$, if 
    \begin{equation}\label{eq:condition}
        \kappa(x) \ \in \{\ u\ |\ u=P_{\kappa,d}(x) + \varepsilon ,\ \ \varepsilon\in [-\bar{\varepsilon},\bar{\varepsilon}]\}, \quad \forall x\in X.
    \end{equation}
\end{lemma}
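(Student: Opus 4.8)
The plan is to reduce the set-containment claim to a statement about individual trajectories over one control step, and then to use the hypothesis \eqref{eq:condition} to realize the true control value $\kappa(x_0)$ as one particular admissible perturbation of the polynomial surrogate. Fix an arbitrary $x_0\in X$. Over the control interval $[0,\delta_c]$ the NNCS follows the autonomous ODE $\dot x = f(x,\kappa(x_0))$ with $x(0)=x_0$, where the control value $\kappa(x_0)$ is a fixed vector throughout the step; by the assumed Lipschitz continuity of $f$ in $x$ and continuity in $u$ this has a unique solution $\xi_{x_0}(\cdot)$. The perturbed system, over the same step, follows $\dot x = f(x, P_{\kappa,d}(x_0)+\varepsilon)$ for some constant $\varepsilon\in[-\bar\varepsilon,\bar\varepsilon]$.

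First I would invoke \eqref{eq:condition}: since $\kappa(x_0)\in\{P_{\kappa,d}(x_0)+\varepsilon : \varepsilon\in[-\bar\varepsilon,\bar\varepsilon]\}$, the vector $\varepsilon_0 := \kappa(x_0)-P_{\kappa,d}(x_0)$ lies componentwise in $[-\bar\varepsilon,\bar\varepsilon]$ and satisfies $P_{\kappa,d}(x_0)+\varepsilon_0 = \kappa(x_0)$. Plugging this particular constant perturbation into the perturbed ODE produces exactly $\dot x = f(x,\kappa(x_0))$, i.e.\ the NNCS dynamics over this step; by uniqueness of solutions, $\xi_{x_0}$ is also the trajectory of the perturbed system from $x_0$ under the admissible perturbation $\varepsilon_0$. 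Consequently, for every $t\in[0,\delta_c]$ the point $\xi_{x_0}(t)$ --- an arbitrary element of the NNCS reachable set at time $t$ from $X$ --- lies in the reachable set of the perturbed system from $X$ at time $t$. Letting $x_0$ range over $X$ (and $t$ over $[0,\delta_c]$) gives the claimed superset relation.

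To match the stated reliance on differential inclusions, I would equivalently cast the perturbed dynamics as the inclusion $\dot x \in \mathcal{G}(x) := \{f(x,P_{\kappa,d}(x_0)+\varepsilon) : \varepsilon\in[-\bar\varepsilon,\bar\varepsilon]\}$ (or, allowing $\varepsilon$ to vary measurably in time, an even larger solution family), whose reachable set is by definition the union of the values of all its solutions. The NNCS trajectory $\xi_{x_0}$ is absolutely continuous with $\dot\xi_{x_0}(t) = f(\xi_{x_0}(t),\kappa(x_0)) \in \mathcal{G}(\xi_{x_0}(t))$ for a.e.\ $t$ --- precisely because \eqref{eq:condition} supplies $\kappa(x_0)\in\{P_{\kappa,d}(x_0)+\varepsilon\}$ --- so $\xi_{x_0}$ is itself a solution of the inclusion, and the standard results of \cite{smirnov2002introduction} place its trajectory inside the inclusion's reachable set.

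I do not anticipate a real obstacle: this is essentially a soundness bookkeeping lemma. The only points needing care are (i) being explicit that $\kappa$ and the surrogate $P_{\kappa,d}$ are evaluated at the state $x_0$ at the \emph{start} of the control step, so that the compensating perturbation $\varepsilon_0$ is a single constant for that step; (ii) the componentwise reading of $[-\bar\varepsilon,\bar\varepsilon]$ for vector-valued controllers; and (iii) citing the existence--uniqueness hypothesis on $f$ to justify the uniqueness step. Note also that the reverse inclusion is in general strict --- the surrogate-plus-error family typically realizes dynamics not attainable by the NNCS --- which is harmless, since only overapproximation is needed.
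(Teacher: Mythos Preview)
Your argument is correct and matches the paper's (implicit) approach: the paper does not give a detailed proof of this lemma but simply states it as a consequence of the theory of differential inclusions \cite{smirnov2002introduction}. Your proof makes explicit exactly the one-line idea behind that citation---namely, that for each initial state $x_0\in X$ the constant admissible perturbation $\varepsilon_0=\kappa(x_0)-P_{\kappa,d}(x_0)$ reproduces the NNCS dynamics---and your differential-inclusion rephrasing in the second paragraph is precisely the framework the paper invokes.
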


Intuitively, the approximation error interval $E=[-\bar{\varepsilon},\bar{\varepsilon}]$ has a significant impact on the reachable set overapproximation, namely a tighter $\bar{\varepsilon}$ can lead to a more accurate reachable set estimation. In this section, we will introduce two approaches to estimate $\bar{\varepsilon}$, namely theoretical error estimation and sampling-based error estimation. The former gives us a priori insight of how precise the approximation is, while the latter one helps us to obtain a much tighter error estimation in practice.

\smallskip
\noindent
\textbf{Compute a Lipschitz constant.} 
We start from computing the Lipschitz constant of a neural network, since Lipschize constant plays a key role in both of our two approaches, which we will see later. 

\begin{definition}
    A real-valued function $f: X \rightarrow \mathbb{R}$ is called Lipschitz continuous over $X\subseteq \mathbb{R}^m$, if there exists a non-negative real $L$, such that for any $x,x' \in X$:
    \begin{displaymath}
        \left\|f(x)-f(x')\right\| \leq L\left\|x-x'\right\|.
    \end{displaymath}
    Any such $L$ is called a Lipschitz constant of $f$ over $X$.
\end{definition}

Recent work has shown that a large number of neural networks are Lipschitz continuous, such as the fully-connected neural networks with ReLU, sigmoid, and tanh activation functions and the estimation of Lipschitz constant upper bound for a neural network has been preliminary discussed in \cite{ruan2018reachability,szegedy2013intriguing}.
\begin{lemma}[Lipschitz constant for sigmoid/tanh/ReLU~\cite{ruan2018reachability,szegedy2013intriguing}] \label{lma:basic_lipschitz}
    Convolutional or fully connected layers with the sigmoid activation function $\mathcal{S}(Wx+b)$, hyperbolic tangent (tanh) activation function $\mathcal{T}(Wx+b)$, and ReLU activation function $\mathcal{R}(Wx+b)$ have $\frac{1}{4}\|W\|$, $\|W\|$, $\|W\|$ as their Lipschitz constants, respectively.
\end{lemma}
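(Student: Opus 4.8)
The plan is to treat each layer as a composition $\Sigma \circ A$, where $A(x) = Wx + b$ is the affine part and $\Sigma$ applies the scalar activation coordinatewise, bound the Lipschitz constant of each factor separately, and then use the fact that Lipschitz constants multiply under composition. A convolutional layer fits this pattern as well, since a convolution is a linear operator and can be written as multiplication by a (sparse, Toeplitz-structured) matrix $W$, so no separate argument is needed for it.

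For the affine factor, $\|A(x) - A(x')\| = \|W(x - x')\| \le \|W\|\,\|x - x'\|$, where $\|W\|$ denotes the operator norm induced by the chosen vector norm; hence $A$ is $\|W\|$-Lipschitz. For the nonlinear factor, I would first establish the elementary lemma that if a scalar function $\sigma$ satisfies $|\sigma(a) - \sigma(b)| \le c\,|a - b|$ for all $a,b \in \reals$, then the coordinatewise map $\Sigma$ is $c$-Lipschitz in every $\ell_p$ norm, since $\|\Sigma(z) - \Sigma(z')\|_p^p = \sum_i |\sigma(z_i) - \sigma(z_i')|^p \le c^p \sum_i |z_i - z_i'|^p$ (with the obvious variant for $p = \infty$). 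It then remains to pin down $c$ for each activation: for sigmoid, $\mathcal{S}'(t) = \mathcal{S}(t)(1 - \mathcal{S}(t))$ is maximized at $t = 0$ with value $\tfrac14$, so by the mean value theorem $c = \tfrac14$; for tanh, $\mathcal{T}'(t) = 1 - \tanh^2 t \le 1$, so $c = 1$; for ReLU, $\mathcal{R}(t) = \max\{0, t\}$ satisfies $|\mathcal{R}(a) - \mathcal{R}(b)| \le |a - b|$ by a short case analysis on the signs of $a$ and $b$, so $c = 1$. Composing, the layer $\Sigma \circ A$ has Lipschitz constant $c\,\|W\|$, i.e.\ $\tfrac14\|W\|$, $\|W\|$, $\|W\|$ in the three cases.

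There is no deep obstacle here — the result is essentially bookkeeping — but the two points that need genuine care are: (i) justifying the passage from the one-dimensional bound on $\sigma$ to a bound on $\Sigma$ in the particular norm used for $\|W\|$, so that the two Lipschitz estimates compose cleanly; and (ii) handling ReLU without invoking a derivative, since $\mathcal{R}$ is not differentiable at $0$, which is why the direct case-analysis estimate rather than the mean value theorem is the right tool there. Everything else reduces to the standard facts that affine maps are Lipschitz with constant equal to the operator norm and that Lipschitz constants are sub-multiplicative under composition.
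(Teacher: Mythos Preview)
Your argument is correct and is the standard one. Note, however, that the paper does not actually prove this lemma: it is quoted as a known result from \cite{ruan2018reachability,szegedy2013intriguing}, so there is no ``paper's own proof'' to compare against. That said, your decomposition into $\Sigma \circ A$ with sub-multiplicativity of Lipschitz constants is exactly the mechanism the paper itself relies on immediately afterward, when it sharpens these constants in \eqref{eq:sigm_lips}--\eqref{eq:relu_lips}: there the sigmoid and tanh cases are handled by the chain rule $\|\partial \sigma(Wx+b)/\partial x\| \le \|\partial \sigma/\partial y\|\,\|W\|$ and a bound on the activation derivative (your mean-value-theorem step), while ReLU is handled by a direct estimate on $\|\mathcal{R}(x_1)-\mathcal{R}(x_2)\|/\|x_1-x_2\|$ (your case analysis). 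So your proposal is fully aligned with the paper's methodology, and your two flagged subtleties---matching the norm in the coordinatewise bound to the operator norm, and avoiding differentiability for ReLU---are precisely the points one must be careful about.
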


% Observing Theorem \ref{thm:error_lip}, we can see that a smaller Lipschitz constant can effectively reduce the bound $\varepsilon_1$ when $X$ and $d$ are given. Thus, we first train another neural network that mimicing the behavior of the original neural network controller but with smaller Lipschitz Constant upper bound. Second, 

Based on Lemma \ref{lma:basic_lipschitz}, we further improve the Lipschitz constant upper bound estimation.
% We verify that the hyperparameter behaves in an intuitive manner: when set to a small value the retrained error is very small.
Specifically, we consider a layer of neural network with $n$ neurons shown in Figure \ref{fig:range}, where $W$ and $b$ denote the weight and the bias that are applied on the output of the previous layer. \emph{Input Interval} and \emph{Output Interval} denote the variable space before and after applied by the activation functions of this layer, respectively. Assume $X=[l_1,u_1]\times \cdots \times [l_n,u_n]$ be the Input Interval. 

We first discuss layers with sigmoid/tanh activation functions based on the following conclusion:
\begin{lemma} \label{lma:diff_lipschitz} \cite{royden1968real}
    Given a function $f: X \rightarrow \mathbb{R}^m$, if $\left\|\partial f/\partial x\right\|\leq L$ over $X$, then $f$ is Lipschitz continuous and $L$ is a Lipschitz constant.
\end{lemma}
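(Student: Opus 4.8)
The plan is to reduce the statement to the one–dimensional fundamental theorem of calculus applied along straight-line segments. Since in our setting $X$ is a box $[l_1,u_1]\times\cdots\times[l_m,u_m]$, it is convex, so for any two points $x,x'\in X$ the segment $\gamma(t)=x'+t(x-x')$, $t\in[0,1]$, stays entirely inside $X$. This convexity is the only structural assumption on $X$ that the argument really needs, and it is automatically satisfied by the interval domains considered here.

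First I would introduce the curve $g(t)=f(\gamma(t))$ on $[0,1]$. By the chain rule $g$ is differentiable with $g'(t)=\frac{\partial f}{\partial x}(\gamma(t))\,(x-x')$, where $\frac{\partial f}{\partial x}$ is the Jacobian matrix. Integrating componentwise and invoking the fundamental theorem of calculus yields the integral identity
\[
 f(x)-f(x') \;=\; g(1)-g(0) \;=\; \int_0^1 \frac{\partial f}{\partial x}\big(\gamma(t)\big)\,(x-x')\,dt .
\]
Next I would take norms, pull the norm inside the integral via the triangle inequality for integrals, and apply the submultiplicativity of the (consistent) matrix/vector norm together with the hypothesis $\|\partial f/\partial x\|\le L$ on $X$:
\[
 \|f(x)-f(x')\| \;\le\; \int_0^1 \Big\|\tfrac{\partial f}{\partial x}\big(\gamma(t)\big)\Big\|\,\|x-x'\|\,dt \;\le\; L\,\|x-x'\| .
\]
Since $x,x'\in X$ were arbitrary, this shows $f$ is Lipschitz continuous with Lipschitz constant $L$.

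The main obstacle is not the computation but the justification of the fundamental theorem of calculus step: strictly, one needs $f$ to be continuously differentiable (or at least $g$ absolutely continuous on $[0,1]$) for the integral identity to hold. For the smooth activation functions relevant to this lemma (sigmoid and tanh) differentiability is immediate, so the issue does not arise; the non-differentiable ReLU case is excluded here and treated separately in the sequel. A secondary technical point to keep track of is that the vector norm in the Lipschitz definition and the matrix norm bounding $\partial f/\partial x$ must be a consistent pair, so that $\|Av\|\le\|A\|\,\|v\|$ holds with the same $\|\cdot\|$ throughout.
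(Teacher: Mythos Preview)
Your argument is correct and is precisely the standard mean-value-inequality proof of this classical fact. Note, however, that the paper does not supply its own proof of this lemma at all: it is stated with a citation to Royden's \emph{Real Analysis} and used as a black box. So there is no ``paper's proof'' to compare against; your write-up simply fills in the referenced textbook argument, including the appropriate caveats about convexity of $X$ and consistency of the matrix/vector norms.
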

\noindent
\textbf{Sigmoid.} For a layer with sigmoid activation function $\mathcal{S}(y)=1/(1+e^{-y})$ with $y=Wx+b$ and $y\in X$, we have
\begin{equation} \label{eq:sigm_lips}
    \begin{aligned}
        & \left\|\frac{\partial \mathcal{S}(x)}{\partial x}\right\| 
        =  \left\|\frac{\partial \mathcal{S}(y)}{\partial y} \frac{\partial y}{\partial x}\right\| \leq  \left\|\frac{\partial \mathcal{S}(y)}{\partial y}\right\| \left\|\frac{\partial y}{\partial x}\right\|\\
        = &  
        % \left\|
        % \begin{pmatrix} 
        %     s(y_1)(1-s(y_1)) & \cdots & 0 \\    
        %     \vdots & \ddots & \vdots \\ 
        %     0 & \cdots & s(y_m)(1-s(y_m)) 
        % \end{pmatrix}
        % \right\| 
        \left\|\diag(\mathcal{S}(y_1)(1-\mathcal{S}(y_1)),\cdots, \mathcal{S}(y_n)(1-\mathcal{S}(y_n)))\right\| 
        \left\|W\right\|\\
       \leq &  \max_{1\leq i \leq n}\sup_{a_i\leq \mathcal{S}(y_i) \leq b_i}  \{\mathcal{S}(y_i)(1-\mathcal{S}(y_i))\} \left\|W\right\|\\
       = &  \max_{1\leq i \leq n} \left(\frac{1}{4}-\left(\frac{\sgn(\mathcal{S}(a_i)-0.5)+\sgn(\mathcal{S}(b_i)-0.5)}{2}\right)^2 \right. \cdot\\
       & \qquad \quad \left. \min\left\{\left(\frac{1}{2}-\mathcal{S}(a_i)\right)^2,\left(\frac{1}{2}-\mathcal{S}(b_i)\right)^2\right\}\right) \left\|W\right\|
    \end{aligned}
\end{equation}

\noindent
\textbf{Hyperbolic tangent.} For a layer with hyperbolic tangent activation function $\mathcal{T}(y)=2/(1+e^{-2y})-1$ with $y=Wx+b$ and $y\in X$, we have
\begin{equation} \label{eq:tanh_lips}
    \begin{aligned}
        & \left\|\frac{\partial \mathcal{T}(x)}{\partial x}\right\| 
       =   \left\|\frac{\partial \mathcal{T}(y)}{\partial y} \frac{\partial y}{\partial x}\right\| \leq  \left\|\frac{\partial \mathcal{T}(y)}{\partial y}\right\| \left\|\frac{\partial y}{\partial x}\right\|\\
       =  & 
            % \left\|
            % \begin{pmatrix} 
            %     1-(t(y_1))^2 & \cdots & 0 \\    
            %     \vdots & \ddots & \vdots \\ 
            %     0 & \cdots & 1-(t(y_m))^2 
            % \end{pmatrix}
            % \right\| 
            \left\|\diag(1-(\mathcal{T}(y_1))^2,\cdots, 1-(\mathcal{T}(y_n))^2)\right\| 
            \left\|W\right\|\\
        =  &\max_{1\leq i \leq n} \sup_{a_i\leq \mathcal{T}(y_i) \leq b_i} \{1-(\mathcal{T}(y_i))^2\} \left\|W\right\|\\
       = &  
        \max_{1\leq i \leq n} \left(1-\left(\frac{\sgn(\mathcal{T}(a_i))+\sgn(\mathcal{T}(b_i))}{2}\right)^2 \right. \cdot\\
        & \qquad \quad \left.
        \min\{(\mathcal{T}(a_i))^2,(\mathcal{T}(b_i))^2\}\right) \left\|W\right\| 
    \end{aligned}
\end{equation}

For ReLU networks, we try to derive a Lipschitz constant directly based on its definition:

\noindent
\textbf{ReLU.} For a layer with ReLU activation function $\mathcal{R}(y)=\max\{0,y\}$ with $y=Wx+b$ and $y\in X$, we have
\begin{equation} \label{eq:relu_lips}
    \begin{aligned}
        &\sup_{x_1\neq x_2}\frac{\left\|\mathcal{R}(x_1)-\mathcal{R}(x_2)\right\|}{\left\|x_1-x_2\right\|} \\
        = & \sup_{x_1\neq x_2}\frac{\left\|\begin{pmatrix}\max\{0,W_1 x_1+u_1\} - \max\{0,W_1 x_2+u_1\}\\ \cdots \\ \max\{0,W_n x_1+u_n\} - \max\{0,W_n x_2+u_n\}\end{pmatrix}\right\|}{\left\|x_1-x_2\right\|} \\
        \leq & \sup_{x_1\neq x_2}\frac{\left\|\begin{pmatrix} \left(\frac{1+\sgn(u_1)}{2}\right)^2W_1|x_1-x_2| \\ \cdots \\ \left(\frac{1+\sgn(u_n)}{2}\right)^2W_n|x_1-x_2| \end{pmatrix}\right\|}{\left\|x_1-x_2\right\|} \\
        \leq & \left\|\left(\left(\frac{1+\sgn(u_1)}{2}\right)^2W_1, \cdots , \left(\frac{1+\sgn(u_n)}{2}\right)^2W_n \right)\right\|
    \end{aligned}.
\end{equation}

\begin{figure}[tbp]
	\centering	
	\includegraphics[width=0.3\textwidth]{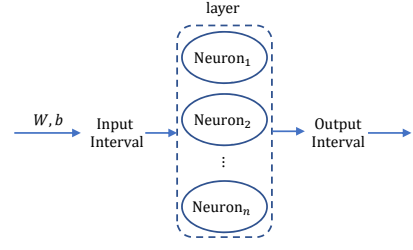}%
	\caption{Schematic diagram of Input Interval and Output Interval of a layer.}
	\label{fig:range}
%	\vspace{-0.7cm}
\end{figure}

\begin{algorithm} [tbp]
    \SetAlgoLined
    \KwData{Neural network $\kappa$, input space $X$}
    \KwResult{Lipschitz constant $L$}
    $S \gets$ the number of layers of $\kappa$\;
    $L \gets 1$\;
    $\text{OutputInterval}_0 \gets X$\;
    \For{$s \gets 1$ to $S$}{
        $act \gets$ the activation function of the $s$-th layer\;
        $\text{InputInterval}_i \gets $ ComputeInputInterval($\text{OutputInterval}_{i-1}$)\;
        $L_i \gets $ ComputeLipchitzConstant($act$, $W$, $\text{InputInterval}_i$)\;
        $L \gets L\cdot L_i$\;
        $\text{OutputInterval}_i \gets $ ComputeOutputInterval($\text{InputInterval}_{i}$)\;
    }
    \Return{$L$}\;
\caption{Compute a Lipschitz constant for $\kappa$}
\label{alg:lips}
\end{algorithm}

In Algorithm~\ref{alg:lips}, we first do the initialization (line 1-3) by letting the variable denoting Lipschitz constant $L=1$. Note that $\text{OutputInterval}_i (i=1,\cdots ,S$) denotes the input interval and output interval of layer $i$, as shown in Figure \ref{fig:range}. For convenience, we let $\text{OutputInterval}_0$ be the input space $X$. Then we do the layer-by-layer interval analysis and compute the corresponding Lipschitz constant (line 4-10). For each layer $s$, the function \textbf{ComputeInputInterval} is first invoked to compute the $\text{InputInterval}_s$ (line 6). Then the Lipschitz constant $L_s$ of layer $s$ is evaluated by the function \textbf{ComputeLipchitzConstant}, namely Equation \eqref{eq:sigm_lips}, \eqref{eq:tanh_lips}, \eqref{eq:relu_lips} in terms of the  activation function $act$ of this layer (line 7). $L$ is updated to the current layer by multiplying $L_s$ (line 8). Finally, $\text{OutputInterval}_s$ is computed by the function \textbf{ComputeOutputInterval} (line 9). Note that the implementation of \textbf{ComputeInputInterval} and \textbf{ComputeOutputInterval} is a typical interval analysis problem of neural networks, which have been adequately studied in \cite{duttoutputa2018,xiang2018reachability,ruan2018reachability}. We do not go into details here due to the space limit.

\noindent
\textbf{Naive theoretical error (T-error) estimation.}
After obtaining a Lipschize constant of $\kappa$, we can directly leverage the existing result on Bernstein polynomials for Lipschitz continuous functions to derive the error of our polynomial approximation.

\begin{lemma} \cite{lorentz2013bernstein}
     Assume $f$ is a Lipschitz continuous function of $x=(x_1.\cdots,x_m)$ over $I=[0,1]^m$ with a Lipschitz constant $L$. Let $d=(d_1,\cdots,d_m) \in \mathbb{N}^m$ and $B_{f,d}$ be the Bernstein polynomials of $f$ under the degree $d$. Then we have
    \begin{equation} \label{eq:error_lips}
        \left\|B_{f,d}(x)-f(x)\right\|\leq \frac{L}{2}\left( \sum_{j=1}^{m}({1}/{d_j}) \right)^{\frac{1}{2}}, \quad \forall x\in I.
    \end{equation}
\end{lemma}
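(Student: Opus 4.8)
The plan is to prove the multivariate Bernstein approximation bound \eqref{eq:error_lips} by the classical probabilistic argument, reducing the statement to a variance estimate. First I would fix $x\in I=[0,1]^m$ and introduce, for each coordinate $j$, an independent random variable $K_j$ with the binomial distribution $\mathrm{Bin}(d_j,x_j)$; then the vector $\bigl(K_1/d_1,\dots,K_m/d_m\bigr)$ has $B_{f,d}(x)$ as the expectation of $f\bigl(K_1/d_1,\dots,K_m/d_m\bigr)$, because the product of binomial weights $\prod_{j=1}^m \binom{d_j}{k_j}x_j^{k_j}(1-x_j)^{d_j-k_j}$ appearing in Definition~\ref{de:bernstein} is exactly the joint probability mass function. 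Hence
\begin{displaymath}
  \bigl\|B_{f,d}(x)-f(x)\bigr\| = \bigl\| \mathbb{E}\bigl[f(K_1/d_1,\dots,K_m/d_m) - f(x)\bigr]\bigr\| \leq \mathbb{E}\bigl\| f(K_1/d_1,\dots,K_m/d_m) - f(x)\bigr\|.
\end{displaymath}

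Next I would apply the Lipschitz hypothesis on $f$ to bound the integrand by $L\,\|(K_1/d_1,\dots,K_m/d_m) - x\|$, and then use concavity of the square root together with Jensen's inequality to pass the expectation inside:
\begin{displaymath}
  \mathbb{E}\bigl\|(K_1/d_1,\dots,K_m/d_m) - x\bigr\| = \mathbb{E}\sqrt{\textstyle\sum_{j=1}^m (K_j/d_j - x_j)^2} \leq \sqrt{\textstyle\sum_{j=1}^m \mathbb{E}\bigl[(K_j/d_j - x_j)^2\bigr]}.
\end{displaymath}
The key computation is then the one-dimensional variance estimate: since $K_j\sim\mathrm{Bin}(d_j,x_j)$ we have $\mathbb{E}[K_j/d_j]=x_j$ and $\mathbb{E}[(K_j/d_j-x_j)^2]=\mathrm{Var}(K_j/d_j)=x_j(1-x_j)/d_j \leq 1/(4d_j)$, the last step using $x_j(1-x_j)\leq 1/4$ on $[0,1]$. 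Substituting gives $\bigl\|B_{f,d}(x)-f(x)\bigr\| \leq L\sqrt{\sum_{j=1}^m 1/(4d_j)} = \tfrac{L}{2}\bigl(\sum_{j=1}^m 1/d_j\bigr)^{1/2}$, which is the claimed bound, and since $x$ was arbitrary it holds for all $x\in I$.

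I expect the only real subtlety — rather than an obstacle — to be bookkeeping for the vector-valued case: when $f:\,I\to\reals^m$ (or $\reals^p$) one applies the argument componentwise or directly with the chosen norm $\|\cdot\|$, noting that Jensen's inequality for the (convex) norm is what justifies moving $\mathbb{E}$ outside $\|\cdot\|$ in the first display, and that the Lipschitz bound is stated with the same norm. One should also confirm that the product structure of the Bernstein weights indeed corresponds to independent binomials, which is immediate from the factorization in Definition~\ref{de:bernstein}. Everything else is a short deterministic calculation, so no delicate estimate is needed; the proof is essentially the standard Bernstein/Weierstrass argument lifted to $m$ dimensions, and I would simply cite \cite{lorentz2013bernstein} for the one-dimensional precursor if a self-contained derivation is deemed too long.
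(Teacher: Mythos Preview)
Your argument is correct and is the standard probabilistic proof of this bound. Note, however, that the paper does not supply its own proof of this lemma: it is stated with a citation to \cite{lorentz2013bernstein} and used as a black box, so there is nothing in the paper to compare your approach against. Your derivation is precisely the classical one that Lorentz-type references give (interpret the Bernstein weights as the joint pmf of independent binomials, apply Lipschitz continuity, then Jensen for the square root, then the variance bound $x_j(1-x_j)\le 1/4$), and it would serve perfectly well as a self-contained proof in place of the citation.
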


\begin{theorem}[T-Error Estimation] \label{thm:t_error_bound}
    Assume $\kappa$ is a Lipschitz continuous function of $x=(x_1,\cdots,x_m)$ over $X=[l_1,u_1]\times\cdots \times [l_m,u_m]$ with a Lipschitz constant $L$. Let $P_{\kappa,d}$ be the polynomial approximation of $\kappa$ that is defined as Equation \eqref{eq:poly_approx_construction} with the degree $d=(d_1,\cdots, d_m) \in \mathbb{N}^m$. 
    Let
    \begin{equation} \label{eq:t_error}
        \begin{aligned}
        \bar{\varepsilon}_t = \frac{L}{2}\left( \sum_{j=1}^{m}\frac{1}{d_j}\right)^{\frac{1}{2}}\max_{j\in\{1,\cdots, m\}}\{u_j-l_j\}.
        \end{aligned}
    \end{equation}
    then $\bar{\varepsilon}_t$ satisfies \eqref{eq:condition}, namely,
    \begin{displaymath}
        \kappa(x) \ \in \{\ u\ |\ u=P_{\kappa,d}(x) + \varepsilon ,\ \ \varepsilon\in [-\bar{\varepsilon}_t,\bar{\varepsilon}_t]\}, \quad \forall x\in X.
    \end{displaymath}
\end{theorem}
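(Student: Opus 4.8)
The plan is to reduce the claim to the Bernstein error bound on the unit box already stated in Equation~\eqref{eq:error_lips}, by carefully tracking the affine change of variables used to define $P_{\kappa,d}$. First I would record the exact relationship between the two approximation errors. By Equations~\eqref{eq:transformed_kappa} and~\eqref{eq:poly_approx_construction}, setting $x'_j = (x_j - l_j)/(u_j - l_j)$ defines a bijection from $X$ onto $I = [0,1]^m$ for which $\kappa(x) = \kappa'(x')$ and $P_{\kappa,d}(x) = B_{\kappa',d}(x')$ simultaneously. Hence $\kappa(x) - P_{\kappa,d}(x) = \kappa'(x') - B_{\kappa',d}(x')$ for every $x \in X$, so it suffices to bound $\|\kappa'(x') - B_{\kappa',d}(x')\|$ uniformly over $I$.

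Second I would compute a Lipschitz constant for $\kappa'$ on $I$. Writing $D = \diag(u_1 - l_1, \dots, u_m - l_m)$ and $l = (l_1,\dots,l_m)^\top$, Equation~\eqref{eq:transformed_kappa} gives $\kappa'(x') = \kappa(Dx' + l)$, so for $x', x'' \in I$ we get $\|\kappa'(x') - \kappa'(x'')\| = \|\kappa(Dx'+l) - \kappa(Dx''+l)\| \le L\,\|D(x'-x'')\| \le L\,\|D\|\,\|x'-x''\|$, using the Lipschitz continuity of $\kappa$. Since $D$ is diagonal, its induced operator norm is $\|D\| = \max_{j\in\{1,\dots,m\}}\{u_j - l_j\}$, so $L' := L\max_{j}\{u_j - l_j\}$ is a Lipschitz constant of $\kappa'$ over $I$.

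Third I would apply Equation~\eqref{eq:error_lips} to the function $\kappa'$ on $I=[0,1]^m$ with Lipschitz constant $L'$, obtaining $\|B_{\kappa',d}(x') - \kappa'(x')\| \le \frac{L'}{2}\bigl(\sum_{j=1}^m 1/d_j\bigr)^{1/2} = \bar{\varepsilon}_t$ for all $x' \in I$. Combined with the identity from the first step, this yields $\|\kappa(x) - P_{\kappa,d}(x)\| \le \bar{\varepsilon}_t$ for all $x \in X$, which is precisely the inclusion~\eqref{eq:condition} with $\bar{\varepsilon} = \bar{\varepsilon}_t$, completing the proof.

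The only real subtlety, and therefore the main thing to get right, is the norm bookkeeping in the second step: one must use the same vector norm throughout --- the one in which the cited bound~\eqref{eq:error_lips} is stated --- and verify that for a diagonal matrix the induced operator norm equals the largest absolute diagonal entry, which is exactly where the factor $\max_j\{u_j - l_j\}$ appearing in $\bar{\varepsilon}_t$ comes from. A minor point worth flagging is the implicit nondegeneracy assumption $u_j > l_j$ for every $j$, needed for the change of variables to be well defined; any degenerate coordinate can simply be removed beforehand.
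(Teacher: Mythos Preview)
Your proposal is correct and follows essentially the same route as the paper's proof: reduce to the unit cube via the affine change of variables, obtain the Lipschitz constant $L\max_j\{u_j-l_j\}$ for $\kappa'$ on $[0,1]^m$ (the paper phrases this as composing the Lipschitz constants of $\kappa$ and the map $x(x')$, you phrase it via the operator norm of the diagonal scaling), and then invoke the Bernstein bound~\eqref{eq:error_lips}. Your added remarks on norm consistency and the implicit assumption $u_j>l_j$ are legitimate clarifications that the paper leaves tacit.
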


\begin{proof}
    First, by Equation \eqref{eq:transformed_kappa} we know that 
    \begin{displaymath}
        \left\|{\partial x}/{\partial x'}\right\| = \max_{j\in\{1,\cdots, m\}}\{u_j-l_j\}
    \end{displaymath}
    is a Lipschitz constant $L_{x(x')}$ of the function $x(x')$. Note that $\kappa'=\kappa \circ x$, then we can obtain the Lipschitz constant of $\kappa'(x')$ by
    \begin{displaymath}
        \begin{aligned}
        L_{\kappa'(x')} = L_{\kappa(x)}\cdot L_{x(x')}=  L\max_{j\in\{1,\cdots, m\}}\{u_j-l_j\}.
        \end{aligned}
    \end{displaymath}
    By Equation \eqref{eq:error_lips}, $\forall x\in X$, we have:
    \begin{align*}
            & \left\|P_{\kappa,d}(x){-}\kappa(x)\right\| =
            \left\|B_{\kappa',d}(x') {-} \kappa'(x')\right\|
            \leq  \frac{L_{\kappa'(x')}}{2}\left( \sum_{j=1}^{m}\frac{1}{d_j}\right)^{\frac{1}{2}}. \qedhere
    \end{align*}
\end{proof}

\smallskip
\noindent

\textbf{Adaptive sampling-based error (S-error) estimation.} 
While Theorem \ref{thm:t_error_bound} can help derive an approximation error bound easily, such bounds are often over-conservative in practice. Thus we propose an alternative sampling-based approach to estimate the approximation error. 
For a given box $X=[l_1,u_1]\times\cdots \times [l_m,u_m]$, we perform a grid-based partition based on an integer vector $p=(p_1,\cdots ,p_m)$. That is, we partition $X$ into a set of boxes $X=\bigcup_{0\leq k \leq p-1}B_k$, where $0\leq k \leq p-1$ is the abbreviation for $k=(k_1,\cdots ,k_m), 0\leq k_j \leq p_j-1,\ 1\leq j\leq m$,
and for any $k$,
\begin{displaymath}
    \begin{aligned}
    B_k = &[l_1+\frac{k_1}{p_1}(u_1-l_1),l_1+\frac{k_1+1}{p_1}(u_1-l_1)]\times  \cdots \times \\
    &[l_m+\frac{k_m}{p_m}(u_m-l_m),l_j+\frac{k_m+1}{p_m}(u_m-l_m)].
    \end{aligned}
\end{displaymath}

It is easy to see that the largest error bound of all the boxes is a valid error bound over $X$.
\begin{lemma} \label{lma:partition}
    Assume $\kappa$ is a continuous function of $x=(x_1,\cdots,x_m)$ over $X=[l_1,u_1]\times\cdots \times [l_m,u_m]$. Let $P_{\kappa,d}$ be the polynomial approximation of $\kappa$ that is defined as Equation \eqref{eq:poly_approx_construction} with the degree $d=(d_1,\cdots, d_m) \in \mathbb{N}^m$. Let $\{B_k\}$ be the box partition of $X$ in terms of $p$, and $\bar{\varepsilon}_k$ be the approximation error bound of $P_{\kappa,d}$ over the box $B_k$. We then have $\forall x\in X$,
    \begin{displaymath}
        \kappa(x) \ \in \{\ u\ |\ u=P_{\kappa,d}(x) + \varepsilon ,\ \varepsilon\in [-\max_{0\leq k \leq p-1}\bar{\varepsilon}_k,\max_{0\leq k \leq p-1}\bar{\varepsilon}_k]\}.
    \end{displaymath}
\end{lemma}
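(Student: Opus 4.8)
The plan is to reduce the statement to a purely pointwise covering argument, with no analytic content beyond what is already packaged in the definition of $\bar{\varepsilon}_k$. First I would make precise what ``$\bar{\varepsilon}_k$ is the approximation error bound of $P_{\kappa,d}$ over $B_k$'' means: it is the restriction of condition \eqref{eq:condition} to the box $B_k$, i.e. $\kappa(y)\in\{\,u\mid u=P_{\kappa,d}(y)+\varepsilon,\ \varepsilon\in[-\bar{\varepsilon}_k,\bar{\varepsilon}_k]\,\}$ for every $y\in B_k$, equivalently the componentwise bound $|\kappa(y)_i-P_{\kappa,d}(y)_i|\le(\bar{\varepsilon}_k)_i$ on $B_k$. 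Continuity of $\kappa$ on the compact box $B_k$ is what guarantees that such a finite $\bar{\varepsilon}_k$ exists, but it plays no role in the inclusion argument itself.

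Next I would check that the grid construction indeed covers $X$, i.e. $X=\bigcup_{0\le k\le p-1}B_k$. For each coordinate $j$ the subintervals $[\,l_j+\tfrac{k_j}{p_j}(u_j-l_j),\ l_j+\tfrac{k_j+1}{p_j}(u_j-l_j)\,]$ for $k_j=0,\dots,p_j-1$ tile $[l_j,u_j]$, so taking products over $j$ the boxes $B_k$ tile $X$. Consequently every $x\in X$ lies in at least one $B_k$ (a point on a shared face lies in several, which is harmless since we only need existence of one).

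Then, fixing an arbitrary $x\in X$ and choosing any index $k$ with $x\in B_k$, the restricted condition gives $|\kappa(x)_i-P_{\kappa,d}(x)_i|\le(\bar{\varepsilon}_k)_i$ for each $i$. Using monotonicity of the coordinatewise maximum, $(\bar{\varepsilon}_k)_i\le\bigl(\max_{0\le k'\le p-1}\bar{\varepsilon}_{k'}\bigr)_i$, so the residual $\varepsilon:=\kappa(x)-P_{\kappa,d}(x)$ satisfies $\varepsilon\in[-\max_{0\le k'\le p-1}\bar{\varepsilon}_{k'},\ \max_{0\le k'\le p-1}\bar{\varepsilon}_{k'}]$. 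Writing $\kappa(x)=P_{\kappa,d}(x)+\varepsilon$ yields exactly the claimed membership, and since $x\in X$ was arbitrary the lemma follows.

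There is no genuine obstacle here; the only things requiring care are stating the meaning of a per-box error bound precisely (the restricted form of \eqref{eq:condition}) and observing that the whole argument is pointwise, so neither the overlaps of the $B_k$ along their boundaries nor the role of continuity intrude on the logic of the inclusion.
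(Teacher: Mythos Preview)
Your argument is correct. The paper does not actually supply a proof of this lemma; it simply prefaces the statement with the remark that ``it is easy to see that the largest error bound of all the boxes is a valid error bound over $X$'' and leaves it at that. Your write-up is exactly the elementary covering-plus-pointwise argument the authors are taking for granted: each $x\in X$ lies in some $B_k$, the per-box bound gives $|\kappa(x)-P_{\kappa,d}(x)|\le\bar{\varepsilon}_k$, and replacing $\bar{\varepsilon}_k$ by $\max_{k'}\bar{\varepsilon}_{k'}$ only loosens the interval. Your added care in spelling out what ``approximation error bound over $B_k$'' means (the restriction of \eqref{eq:condition} to $B_k$) and noting that continuity is used only to guarantee finiteness of $\bar{\varepsilon}_k$, not in the inclusion itself, is accurate and more explicit than anything in the paper.
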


Leveraging the Lipschitz continuity of $\kappa$, we can estimate the local error bound $\bar{\varepsilon}_k$ for box $B_k$ by sampling the value of $\kappa$ and $P_{\kappa ,d}$ at the box center.
\begin{lemma} \label{lma:partition_error}
    Assume $\kappa$ is a Lipschitz continuous function of $x=(x_1,\cdots,x_m)$ over box $B_k$ with a Lipschitz constant $L$. Let $P_{\kappa,d}$ be the polynomial approximation of $\kappa$ that is defined as Equation \eqref{eq:poly_approx_construction} with the degree $d=(d_1,\cdots, d_m) \in \mathbb{N}^m$. Let 
    \begin{equation} \label{eq:sample_point}
        c_k = \left(l_1+\frac{2k_1+1}{2p_1}(u_1-l_1),\cdots , l_m+\frac{2k_m+1}{2p_m}(u_m-l_m)\right)
    \end{equation}
    be the center of $B_k$. 
    For $x\in B_k$, we have
    \begin{equation} \label{eq:partition_error}
        \left\|P_{\kappa,d}(x){-}\kappa(x)\right\| \leq L\sqrt{\sum_{j=1}^m(\frac{u_j{-}l_j}{p_j})^2} + \left\|P_{\kappa,d}(c_k){-}\kappa(c_k)\right\|.
    \end{equation}
\end{lemma}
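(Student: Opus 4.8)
\emph{Proof proposal.} The idea is to anchor the approximation error at the box center $c_k$ and split it by the triangle inequality into the quantity that the algorithm already samples, $\|P_{\kappa,d}(c_k)-\kappa(c_k)\|$, plus two \emph{local oscillation} terms. Concretely, for any $x\in B_k$ write
\begin{displaymath}
    P_{\kappa,d}(x)-\kappa(x)=\bigl(P_{\kappa,d}(x)-P_{\kappa,d}(c_k)\bigr)+\bigl(P_{\kappa,d}(c_k)-\kappa(c_k)\bigr)+\bigl(\kappa(c_k)-\kappa(x)\bigr),
\end{displaymath}
so that $\|P_{\kappa,d}(x)-\kappa(x)\|\le\|P_{\kappa,d}(x)-P_{\kappa,d}(c_k)\|+\|P_{\kappa,d}(c_k)-\kappa(c_k)\|+\|\kappa(x)-\kappa(c_k)\|$.

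The third term is bounded by $L\,\|x-c_k\|$ directly from the hypothesis that $\kappa$ is $L$-Lipschitz on $B_k$. For the first term, the plan is to show that the Bernstein approximant $P_{\kappa,d}$ admits $L$ as a Lipschitz constant on $B_k$ as well: differentiating \eqref{eq:poly_approx_construction}, each partial derivative of $P_{\kappa,d}$ is (after the affine rescaling) a Bernstein-type convex average of scaled first differences of $\kappa$, each of which is at most $L$ times the corresponding grid spacing; equivalently, one invokes the fact that the Bernstein operator does not enlarge the Lipschitz seminorm (readable off from its probabilistic representation via binomial sampling together with a coordinatewise monotone coupling). Combining this with Lemma~\ref{lma:diff_lipschitz} gives $\|P_{\kappa,d}(x)-P_{\kappa,d}(c_k)\|\le L\,\|x-c_k\|$, hence $\|P_{\kappa,d}(x)-\kappa(x)\|\le 2L\,\|x-c_k\|+\|P_{\kappa,d}(c_k)-\kappa(c_k)\|$.

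It then remains to estimate $\|x-c_k\|$ geometrically. Since $c_k$ from \eqref{eq:sample_point} is the center of the box $B_k$ whose $j$-th edge has length $(u_j-l_j)/p_j$, every $x\in B_k$ obeys $|x_j-(c_k)_j|\le (u_j-l_j)/(2p_j)$, whence $\|x-c_k\|\le\tfrac12\sqrt{\sum_{j=1}^m\bigl((u_j-l_j)/p_j\bigr)^2}$. Substituting this into the previous bound cancels the factor $2$ against the $\tfrac12$ and yields exactly \eqref{eq:partition_error}.

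The step I expect to be the crux is the middle one: carefully showing that $P_{\kappa,d}$ inherits (up to the norm convention in play) the Lipschitz constant $L$ of $\kappa$ on each sub-box, i.e., that composing the Bernstein operator with the affine change of variables of \eqref{eq:poly_approx_construction} does not inflate the constant. The remaining pieces — the triangle-inequality split around $c_k$ and the box-geometry bound on $\|x-c_k\|$ — are routine.
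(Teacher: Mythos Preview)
Your proposal is correct and follows essentially the same route as the paper: the triangle-inequality split around $c_k$, the bound $\|x-c_k\|\le\tfrac12\sqrt{\sum_j((u_j-l_j)/p_j)^2}$, and the key fact that $P_{\kappa,d}$ inherits the Lipschitz constant $L$ of $\kappa$ are exactly what the paper uses. The only difference is that the paper dispatches your ``crux'' step by citing a known result (that the Bernstein operator preserves the Lipschitz constant) rather than sketching a derivative/coupling argument as you do.
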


\begin{proof}
    By \cite{brown1987lipschitz}, we know the Bernstein-based approximation $P_{\kappa,d}$ is also Lipschitz continuous with the Lipschitz constant $L$. Then for $x\in B_k$, we have
    \begin{align*}
            & \quad \left\|P_{\kappa,d}(x)-\kappa(x)\right\| \\
            \leq & \quad \left\|P_{\kappa,d}(x)-P_{\kappa,d}(c_k)\right\| + \left\|P_{\kappa,d}(c_k)-\kappa(c_k)\right\| + \left\|\kappa(c_k)-\kappa(x)\right\| \\
            \leq & \quad L\max_{x\in B_k}\left\|x-c\right\| + \left\|P_{\kappa,d}(c_k)-\kappa(c_k)\right\| +
            L\max_{x\in B_k}\left\|x-c_k\right\| \\
            = & \quad L\sqrt{\sum_{j=1}^m(\frac{u_j-l_j}{p_j})^2} + \left\|P_{\kappa,d}(c_k)-\kappa(c_k)\right\|.\qedhere
    \end{align*}
\end{proof}
%\li{$u_i$ and $l_i$ should be $u_j$ and $l_j$ and $c_k$ is the same as $c$ in the proof above?}

Combining Lemma \ref{lma:partition} and Lemma \ref{lma:partition_error}, we can derive the sampling-based error bound over $X$.
\begin{theorem}[S-Error Estimation] \label{thm:s_error_bound}
    Assume $\kappa$ is a Lipschitz continuous function of $x=(x_1,\cdots,x_m)$ over $X=[l_1,u_1]\times\cdots \times [l_m,u_m]$ with a Lipschitz constant $L$. Let $P_{\kappa,d}$ be the polynomial approximation of $\kappa$ that is defined as Equation \eqref{eq:poly_approx_construction} with the degree $d=(d_1,\cdots, d_m) \in \mathbb{N}^m$, and $X_c = \{c_k\}_{0\leq k \leq p-1}$ be the sampling set with any given positive integer vector $p=(p_1,\cdots ,p_m)$.
    Let 
    \begin{equation} \label{eq:t_error}
        \begin{aligned}
        \bar{\varepsilon}_s(p) =  L\sqrt{\sum_{j=1}^m(\frac{u_j{-}l_j}{p_j})^2} + \max_{0\leq k \leq p-1} \left\|P_{\kappa,d}(c_k){-}\kappa(c_k)\right\|.
        \end{aligned}
    \end{equation}
    Then, $\bar{\varepsilon}_s$(p) satisfies \eqref{eq:condition}, namely,
    \begin{displaymath}
        \kappa(x) \ \in \{\ u\ |\ u=P_{\kappa,d}(x) + \varepsilon ,\ \ \varepsilon\in [-\bar{\varepsilon}_s(p),\bar{\varepsilon}_s(p)]\}, \quad \forall x\in X.
    \end{displaymath}
\end{theorem}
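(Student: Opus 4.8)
The plan is to obtain Theorem~\ref{thm:s_error_bound} simply by stitching together the two preceding lemmas, since the quantity $\bar{\varepsilon}_s(p)$ in the statement is precisely the maximum over the grid cells of the per-cell error bounds supplied by Lemma~\ref{lma:partition_error}. First I would recall the grid partition $X=\bigcup_{0\leq k\leq p-1}B_k$ fixed by the vector $p$, and observe that since $\kappa$ is Lipschitz continuous over $X$ with constant $L$, it is a fortiori Lipschitz continuous over every subbox $B_k\subseteq X$ with the same constant $L$. Hence Lemma~\ref{lma:partition_error} applies to each $B_k$ with $c_k$ its center as in \eqref{eq:sample_point}, giving, for every $x\in B_k$,
\[
\left\|P_{\kappa,d}(x)-\kappa(x)\right\|\ \leq\ L\sqrt{\sum_{j=1}^m\left(\frac{u_j-l_j}{p_j}\right)^2}\ +\ \left\|P_{\kappa,d}(c_k)-\kappa(c_k)\right\|\ =:\ \bar{\varepsilon}_k .
\]
Thus each $\bar{\varepsilon}_k$ is a valid approximation error bound of $P_{\kappa,d}$ over $B_k$ in exactly the sense required by Lemma~\ref{lma:partition}.

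The key simplification I would then exploit is that the first summand $L\sqrt{\sum_{j}((u_j-l_j)/p_j)^2}$ does not depend on $k$, so the maximum over cells factors:
\[
\max_{0\leq k\leq p-1}\bar{\varepsilon}_k\ =\ L\sqrt{\sum_{j=1}^m\left(\frac{u_j-l_j}{p_j}\right)^2}\ +\ \max_{0\leq k\leq p-1}\left\|P_{\kappa,d}(c_k)-\kappa(c_k)\right\|\ =\ \bar{\varepsilon}_s(p),
\]
which is the quantity appearing in the statement. Finally I would invoke Lemma~\ref{lma:partition} with this box partition and the local bounds $\{\bar{\varepsilon}_k\}$: it yields $\kappa(x)\in\{\,u\mid u=P_{\kappa,d}(x)+\varepsilon,\ \varepsilon\in[-\max_k\bar{\varepsilon}_k,\ \max_k\bar{\varepsilon}_k]\,\}$ for all $x\in X$, and substituting $\max_k\bar{\varepsilon}_k=\bar{\varepsilon}_s(p)$ gives the claim. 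Together with the earlier lemma relating condition~\eqref{eq:condition} to the reachable-set superset property, this also certifies $\bar{\varepsilon}_s(p)$ for use inside Algorithm~\ref{alg:framework}.

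As for the main obstacle: there is essentially no analytic difficulty remaining at this point, because everything substantive was already discharged in proving Lemma~\ref{lma:partition_error} (the triangle-inequality split through the cell center, plus the fact from~\cite{brown1987lipschitz} that the Bernstein approximant inherits the Lipschitz constant $L$) and Lemma~\ref{lma:partition} (monotonicity of the error bound under passing to a covering). The only points that need care in the write-up are (i) verifying that each $B_k$ is genuinely a subset of $X$ so that the Lipschitz constant $L$ transfers to it, and (ii) noting that the ``maximum over cells is a global bound'' step only requires the cells to \emph{cover} $X$, not to be disjoint, which the grid partition clearly satisfies. Consequently the proof should be short: restate the partition, apply Lemma~\ref{lma:partition_error} cell by cell, pull the $k$-independent term out of the maximum, and conclude with Lemma~\ref{lma:partition}.
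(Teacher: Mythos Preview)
Your proposal is correct and follows exactly the route the paper intends: the text immediately preceding the theorem says ``Combining Lemma~\ref{lma:partition} and Lemma~\ref{lma:partition_error}, we can derive the sampling-based error bound over $X$,'' and then states Theorem~\ref{thm:s_error_bound} without further proof. Your write-up simply makes that combination explicit, including the observation that the $k$-independent term factors out of the maximum, so nothing more is needed.
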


\begin{theorem} [Convergence of $\bar{\varepsilon}_s$] \label{thm:error_convergence}
    Assume $\kappa$ is a Lipschitz continuous function of $x=(x_1,\cdots,x_m)$ over $X=[l_1,u_1]\times\cdots \times [l_m,u_m]$ with a Lipschitz constant $L$. Let $P_{\kappa,d}$ be the polynomial approximation of $\kappa$ that is defined as Equation \eqref{eq:poly_approx_construction} with the degree $d=(d_1,\cdots, d_m) \in \mathbb{N}^m$. Let $\bar{\varepsilon}_{best}=\max_{x\in X} \left\|P_{\kappa,d}(x)-\kappa\right(x)\|$ be the exact error, we have
    \begin{displaymath}
        \lim_{p\rightarrow \infty}\bar{\varepsilon}_s(p) \rightarrow \bar{\varepsilon}_{best}.
    \end{displaymath}
\end{theorem}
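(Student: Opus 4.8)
The plan is to bracket $\bar{\varepsilon}_s(p)$ between $\bar{\varepsilon}_{best}$ and $\bar{\varepsilon}_{best}$ plus a term that vanishes as the grid is refined, and then conclude by a squeeze argument. Throughout I read $p \to \infty$ as $\min_{1\le j\le m} p_j \to \infty$, so that every box $B_k$ in the partition has all side lengths $(u_j-l_j)/p_j$ tending to $0$.

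First I would establish the lower bound $\bar{\varepsilon}_{best} \le \bar{\varepsilon}_s(p)$ for every $p$. This is immediate from Theorem~\ref{thm:s_error_bound}: that theorem guarantees $\kappa(x) \in \{P_{\kappa,d}(x) + \varepsilon : \varepsilon \in [-\bar{\varepsilon}_s(p),\bar{\varepsilon}_s(p)]\}$ for all $x \in X$, i.e.\ $\|P_{\kappa,d}(x)-\kappa(x)\| \le \bar{\varepsilon}_s(p)$ pointwise, and taking the maximum over the compact box $X$ (attained because $P_{\kappa,d}-\kappa$ is continuous) gives $\bar{\varepsilon}_{best} \le \bar{\varepsilon}_s(p)$. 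For the matching upper bound, note that each sample center $c_k$ belongs to $X$, hence $\max_{0\le k\le p-1}\|P_{\kappa,d}(c_k)-\kappa(c_k)\| \le \max_{x\in X}\|P_{\kappa,d}(x)-\kappa(x)\| = \bar{\varepsilon}_{best}$; substituting this into the definition of $\bar{\varepsilon}_s(p)$ yields $\bar{\varepsilon}_s(p) \le \bar{\varepsilon}_{best} + L\sqrt{\sum_{j=1}^m((u_j-l_j)/p_j)^2}$. Combining,
\[
\bar{\varepsilon}_{best} \;\le\; \bar{\varepsilon}_s(p) \;\le\; \bar{\varepsilon}_{best} + L\sqrt{\sum_{j=1}^m\left(\frac{u_j-l_j}{p_j}\right)^2},
\]
and since the rightmost term tends to $0$ as $p \to \infty$, we get $\lim_{p\to\infty}\bar{\varepsilon}_s(p) = \bar{\varepsilon}_{best}$.

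I do not expect any genuine obstacle: the statement is a routine squeeze argument resting on the error decomposition already proved in Lemma~\ref{lma:partition_error} and on the validity of the bound from Theorem~\ref{thm:s_error_bound}. The only places that deserve a sentence of care are making precise the meaning of $p \to \infty$ (componentwise growth, equivalently $\min_j p_j \to \infty$) and observing that $\bar{\varepsilon}_{best}$ is actually attained on $X$ --- both following from compactness of $X$ and continuity of $P_{\kappa,d}-\kappa$ --- since the name ``exact error'' and the upper-bound step both rely on it.
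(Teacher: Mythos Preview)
Your proof is correct and is essentially the same squeeze argument the paper gives: the paper writes $|\bar{\varepsilon}_s(p)-\bar{\varepsilon}_{best}|\le |\bar{\varepsilon}_s(p)-\max_k\|P_{\kappa,d}(c_k)-\kappa(c_k)\||=\delta(p)\to 0$, which implicitly uses exactly your two observations (that $\max_k\|P_{\kappa,d}(c_k)-\kappa(c_k)\|\le \bar{\varepsilon}_{best}$ and, via Theorem~\ref{thm:s_error_bound}, that $\bar{\varepsilon}_{best}\le \bar{\varepsilon}_s(p)$). Your write-up simply makes both inequalities explicit and clarifies the meaning of $p\to\infty$, which is a welcome addition.
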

\begin{proof}
    Let $\delta(p)= L\sqrt{\sum_{j=1}^m(\frac{u_j{-}l_j}{p_j})^2}$, we have
    \begin{align*}
        \left|\bar{\varepsilon}_s(p)-\bar{\varepsilon}_{best}\right| \leq
        \left|\bar{\varepsilon}_s(p)-\max_{0\leq k \leq p-1} \left\|P_{\kappa,d}(c_k){-}\kappa(c_k)\right\|\right| = \delta(p)
    \end{align*}
    Since $\delta(p)\rightarrow 0$, when $p\rightarrow \infty$, the theorem holds.
\end{proof}
Note that $\delta(p)$ actually specifies the difference between the exact error and S-error. Thus we call it sampling error precision.

\smallskip
\noindent
\textbf{[Adaptive sampling]} By Theorem \ref{thm:error_convergence}, we can always make $\delta(p)$ arbitrarily small by increasing $p$. Then the error bound is mainly determined by the sample difference over $X_c$. Thus, if our polynomial approximation $P_{\kappa,d}$ regresses $\kappa$ well, we can expect to obtain a tight error bound estimation. To bound the impact of $\delta(p)$, we set a hyper parameter $\bar{\delta}$ as its upper bound. Specifically, given a box $X=[l_1,u_1]\times\cdots \times [l_m,u_m]$, we can adaptively change $p$ to sample fewer times while ensuring $\delta(p)\leq  \bar{\delta}$.
\begin{proposition}
    Given a box $X=[l_1,u_1]\times\cdots \times [l_m,u_m]$ and a positive number $ \bar{\delta}$, let
    \begin{displaymath}
        p_j = \left\lceil{L(u_j-l_j)\sqrt{m}/ \bar{\delta}}\right\rceil, \quad j=1,\cdots ,m.
    \end{displaymath}
    Then we have $\delta(p)\leq \bar{\delta}$.
\end{proposition}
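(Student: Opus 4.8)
The plan is to unwind the two definitions involved and reduce the claim to a single coordinatewise estimate combined over the $m$ coordinates. Recall that the quantity to be bounded is $\delta(p) = L\sqrt{\sum_{j=1}^m \left(\tfrac{u_j-l_j}{p_j}\right)^2}$, so it suffices to show $\tfrac{u_j-l_j}{p_j} \le \tfrac{\bar\delta}{L\sqrt m}$ for each $j$ and then sum the squares.

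First I would apply the elementary inequality $\lceil a\rceil \ge a$ to $a = L(u_j-l_j)\sqrt m/\bar\delta$, which gives $p_j \ge L(u_j-l_j)\sqrt m/\bar\delta$ for every $j\in\{1,\dots,m\}$. Since $u_j-l_j>0$ (the box is non-degenerate) and $L,\bar\delta>0$, dividing through yields $\tfrac{u_j-l_j}{p_j}\le \tfrac{\bar\delta}{L\sqrt m}$. Squaring and summing over $j=1,\dots,m$ gives $\sum_{j=1}^m\left(\tfrac{u_j-l_j}{p_j}\right)^2 \le m\cdot\tfrac{\bar\delta^2}{L^2 m} = \tfrac{\bar\delta^2}{L^2}$. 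Taking the square root and multiplying by $L$ then produces $\delta(p)\le\bar\delta$, which is exactly the assertion.

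The argument is essentially one display line, so there is no real obstacle; the only points deserving a word of care are (i) checking that $p_j$ as defined is genuinely a positive integer, which holds because the ceiling of a positive real is a positive integer (and in the degenerate case $L=0$ the claim is trivial, since then $\delta(p)=0\le\bar\delta$), and (ii) observing that the factor $\sqrt m$ built into the definition of $p_j$ is precisely what makes the $m$-fold sum of the squared coordinate bounds collapse to $\bar\delta^2/L^2$ rather than to something a factor of $m$ larger. I would present the four steps (ceiling bound, coordinate bound, sum of squares, square root) in that order and conclude.
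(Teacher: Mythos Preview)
Your proof is correct and is exactly the natural argument. The paper states this proposition without proof, so there is nothing to compare against; your four-step derivation (ceiling bound, coordinatewise bound, sum of squares, square root) is the intended and essentially only way to verify the claim.
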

In practice, we can directly use $\bar{\varepsilon}_s$ as $\bar{\varepsilon}$ by specifying a small $\bar{\delta}$, since S-error is more precise. It is worthy noting that a small $\bar{\delta}$ may lead to a large number of sampling points and thus can be time consuming. In our implementation, we mitigate this runtime overhead by computing the sampling errors at $c_k$ for each $B_k$ in parallel.
%This runtime overhead however can be mitigated by computing the sampling errors at $c_k$ for each $B_k$ in parallel.}

\section{Experiments} \label{sec:experiment}

We implemented a prototype tool and used it in cooperation with Flow*. In our experiments, we first compare our approach ReachNN with the interval overapproximation approach on an example with a heterogeneous neural-network controller that has ReLU and tanh as activation functions. Then we compare our approach with Sherlock \cite{Dutta_Others__2019__Reachability} and Verisig \cite{ivanov2018verisig} on multiple benchmarks collected from the related works. All our experiments were run on a desktop, with 12-core 3.60 GHz Intel Core i7~\footnote{ The experiments are not memory bounded.}.

\subsection{Illustrating Example}

Consider the following nonlinear control system~\cite{Galle19}:
\begin{displaymath}
    \dot{x}_1=x_2,\quad \dot{x}_2=ux_2^2-x_1,
\end{displaymath}
where $u$ is computed from a heterogeneous neural-network controller $\kappa$ that has two hidden layers, twenty neurons in each layer, and ReLU and tanh as activation functions. Given a control stepsize $\delta_c = 0.2$, we hope to verify whether the system will reach $[0,0.2]\times[0.05,0.3]$ from the initial set $[0.8,0.9]\times[0.5,0.6]$. Since the state-of-art verification tools focus on NNCSs with single-type activation function and interval overapproximation is the only presented work that can be extended to heterogeneous neural networks, we compare our method with interval overapproximation in the illustrating example.

%We compare effectiveness of our approach ReachNN the output range analysis based approach, and the results are shown in Figure~\ref{fig:motivating}.

Figure~\ref{fig:motivating-tanh-interval} shows the result of the reachability analysis with interval overapproximation, while ReachNN's result is shown in Figure~\ref{fig:motivating-tanh}. Red curves denote the simulation results of the system evolution from $100$ different initial states, which are randomly sampled from the initial set. Green rectangles are the constructed flowpipes as the overapproximation of the reachable state set. The blue rectangle is the goal area. We conduct the reachability analysis from the initial set by combining our neural network approximation approach with Flow*. As shown in Figure~\ref{fig:motivating-tanh-interval}, interval overapproximation based approach yields over-loose reachable set approximation, which makes the flowpipes grow uncontrollably and quickly exceeds the tolerance ($25$ steps). On the contrary, ReachNN can provide a much tighter estimation for all control steps and thus successfully prove the reachability property.

\begin{figure}[tbp]
	\centering	
	\subfloat[][Interval]{%
		\includegraphics[width=0.23\textwidth]{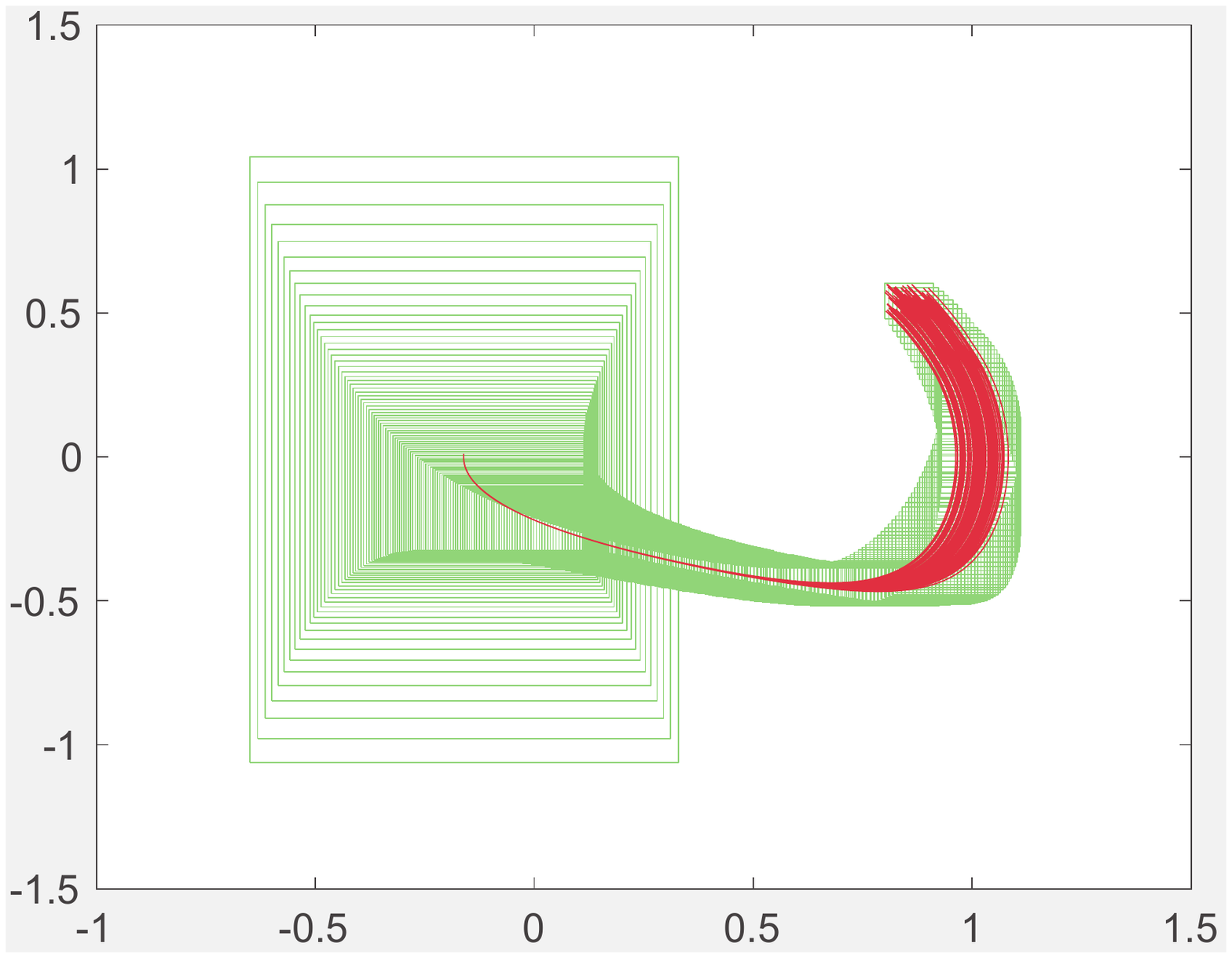}%
		\label{fig:motivating-tanh-interval}%
	}\ \
	\subfloat[][ReachNN]{%
		\includegraphics[width=0.23\textwidth]{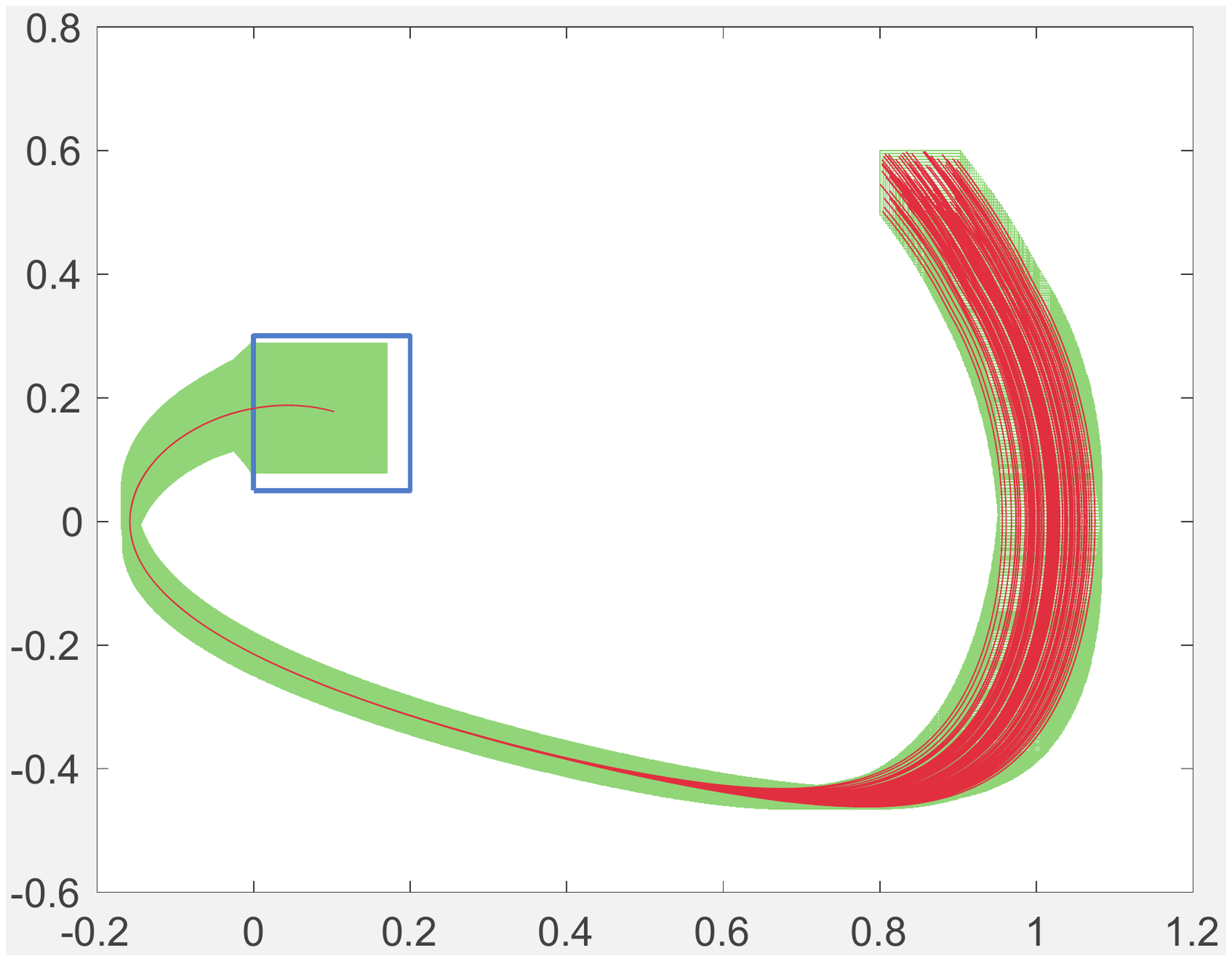}%
		\label{fig:motivating-tanh}%
	}
	\caption{Flowpipes computed by interval overapproximation approach and ReachNN on a NNCS with a heterogeneous neural-network controller that has ReLU and tanh activation functions.}
	\label{fig:motivating}
\end{figure}

\subsection{Benchmarks}

% Please add the following required packages to your document preamble:
% \usepackage{multirow}
\begin{table*} [htbp] 
\caption{Benchmark setting: For each example $\#$, $ODE$ denotes the ordinary differential equation of the example, $V$ denotes the dimension of the state variable, $\delta_c$ is the discrete control stepsize, $N$ denotes the number of control step, $Init$ denotes the initial state set, $Goal$ denotes the goal set that we hope to verify the system will enter after $N$ steps.}
\small
\setlength\tabcolsep{3.0pt}
\renewcommand{\arraystretch}{1.2}
\begin{tabular}{|c|c|c|c|c|c|}
\hline
\#                 & ODE               & V               & $\delta_c$        &  Init              & Goal  \\ 
\hline
\multirow{1}{*}{1} & \multirow{1}{*}{\minitab[c]{$\dot{x}_1=x_2$, $\dot{x}_2=ux_2^2-x_1$.}} & \multirow{1}{*}{2} & \multirow{1}{*}{0.2} & \multirow{1}{*}{\minitab[c]{$x_1{\in}[0.8,0.9]$, $x_2{\in}[0.5,0.6]$}} & \multirow{1}{*}{\minitab[c]{$x_1{\in}[0,0.2]$, $ x_2{\in}[0.05,0.3]$}} \\ 
\hline
\multirow{1}{*}{2} & \multirow{1}{*}{\minitab[c]{$\dot{x}_1=x_2-x_1^3$, $\dot{x}_2=u$.}} & \multirow{1}{*}{2} & \multirow{1}{*}{0.2} & \multirow{1}{*}{\minitab[c]{$x_1{\in}[0.7,0.9]$, $ x_2{\in}[0.7,0.9]$}} & \multirow{1}{*}{\minitab[c]{$x_1{\in}[-0.3,0.1]$, $ x_2{\in}[-0.35,0.5]$}} \\
\hline
\multirow{1}{*}{3} & \multirow{1}{*}{\minitab[c]{$\dot{x}_1{=}{-}x_1(0.1{+}(x_1{+}x_2)^2)$, $\dot{x}_2{=}(u{+}x_1)(0.1{+}(x_1{+}x_2)^2)$.}} & \multirow{1}{*}{2} & \multirow{1}{*}{0.1} & \multirow{1}{*}{\minitab[c]{$x_1{\in}[0.8,0.9]$, $x_2{\in}[0.4,0.5]$}} & \multirow{1}{*}{\minitab[c]{$x_1{\in}[0.2,0.3]$, $x_2{\in} [-0.3,-0.05]$}} \\ 
\hline
\multirow{1}{*}{4} & \multirow{1}{*}{\minitab[c]{$\dot{x_1}{=}{-}x_1{+}x_2{-}x_3$,  $\dot{x_2}{=}{-}x_1(x_3{+}1)-x_2$, $\dot{x_3}{=}{-}x_1{+}u$}} & \multirow{1}{*}{3} & \multirow{1}{*}{0.1} & \multirow{1}{*}{\minitab[c]{$x_1,x_3{\in} [0.25,0.27]$, $x_2{\in} [0.08,0.1]$}} & \multirow{1}{*}{\minitab[c]{$x_1{\in}[-0.05,0.05]$, $ x_2{\in}[-0.05,0]$}} \\ 
\hline
\multirow{1}{*}{5} & \multirow{1}{*}{\minitab[c]{$\dot{x_1}{=}x_1^3{-}x_2$, $\dot{x_2}{=}x_3$, $\dot{x_3}{=}u$}} & \multirow{1}{*}{3} & \multirow{1}{*}{0.2} & \multirow{1}{*}{\minitab[c]{$x_1{\in}[0.38,0.4]$, $x_2{\in} [0.45,0.47]$,  $ x_3{\in}[0.25,0.27]$}} & \multirow{1}{*}{\minitab[c]{$x_1{\in}[-0.4,-0.28]$, $ x_2{\in}[0.05,0.22]$}} \\ 
\hline
\multirow{2}{*}{6} & \multirow{2}{*}{\minitab[c]{$\dot{x_1} {=} x_2$,\ $
\dot{x_2}{=}{-}x_1{+}0.1\sin(x_3)$, \\ $\dot{x_3}=x_4$, $\dot{x_4}=u$}} & \multirow{2}{*}{4} & \multirow{2}{*}{0.5} & \multirow{2}{*}{\minitab[c]{$x_1{\in}[-0.77, -0.75]$,  $x_2{\in}[-0.45, -0.43]$, \\ $x_3{\in}[0.51, 0.54]$, $x_4{\in}[-0.3, -0.28]$}} & \multirow{2}{*}{\minitab[c]{$x_1 {\in}[-0.1, 0.2]$, $x_2 {\in}[-0.9, -0.6]$}}  \\ 
& & & & & \\
\hline              
\end{tabular}
\label{tab:benchmark}
\end{table*}

\begin{table*} [htbp] 
\caption{Neural network controller setting and experimental results: In each example $\#$, for a neural-network controller, $Act$ denotes the applied activation functions, $S$ and $n$ denote the number of layers and the number of neurons in each hidden layer, respectively. While applying our approach $ReachNN$, $d$ denotes the degree of the polynomial approximation, $\bar{\delta}$ denotes the sampling error precision, $\bar{\varepsilon}$ represents the error bound. For Sherlock, $d$ represents the degree of the Taylor model based approximation. 
Under general settings of Flow* (the order of taylor models are chosen between 5-12 in terms of the degree $d$ in ReachNN, the stepsizes for flowpipe construction are chosen between 1/10-1/100), the verification results for each approach are evaluated by two metrics: Boolean $ifReach$ following by a number indicates whether this approach verifies the reachability or not: $Yes(n)$ means the system is proven to reach the goal set after $n$ steps, while $Unknown(n)$ means the reachable set at any step $k\leq n$ is not a subset of the goal set and flowpipes start to blow up after $n$ steps. The computation duration $time$ (in seconds) indicates the efficiency. We use "--" to represent that a certain approach is not applicable.}\label{tab:experiment}
\small
\setlength\tabcolsep{5.8pt}
\renewcommand{\arraystretch}{1.2}
\begin{tabular}{|c|c|c|c|c|c|c|c|c|c|c|c|c|c|c|}
	\hline
	\multirow{2}{*}{\#}                                                                                                            & \multicolumn{3}{c|}{NN Controller} & \multicolumn{5}{c|}{ReachNN}  & \multicolumn{3}{c|}{Sherlock\cite{Dutta_Others__2019__Reachability}}  & \multicolumn{2}{c|}{Verisig}      \\ \cline{2-14}
	&        Act        & layers    & n   & d   & $\bar{\delta}$       & $\bar{\varepsilon}$ & ifReach & time & d       & ifReach & time & ifReach & time \\ \hline
	\multirow{4}{*}{1} & ReLU       &   3   &  20   &    $[1,1]$  & 0.001  &      0.0009995                    &    Yes(35)     &    3184    & 2 & Yes(35) & 41 & --  & -- \\ \cline{2-14} 
	& sigmoid    &   3   &  20   &   $[3,3]$   & 0.001  &       $0.0077157$       &  Yes(35)       &    779     & -- & -- & -- & Unknown(22)  &  --  \\ \cline{2-14} 
	& tanh       &   3   &  20   &   $[3,3]$   &  0.005 &      $0.0117355$    &   Unknown(35)      &     --    & -- & -- & -- & Unknown(22) &   --  \\ \cline{2-14} 
	& ReLU+tanh       &   3   &  20   &   $[3,3]$  & 0.01   &      $0.0150897$    &   Yes(35)      &    589     & -- & --  & -- & --  & --   \\ \hline
	\multirow{4}{*}{2} & ReLU       &   3   &  20   &    $[1,1]$ &  0.01  &       $0.0090560$                     &    Yes(9)     &    128    & 2 & Yes(9) & 3 & -- &   --   \\ \cline{2-14} 
	& sigmoid    &   3   &  20   &   $[3,3]$  & 0.01   &     $0.0200472$       &   Yes(9)   &    280     & --  & -- & -- & Unknown(7) & --  \\ \cline{2-14} 
	& tanh       &   3   &  20   &   $[3,3]$  &  0.01  &     $0.0194142$      &   Unknown(7)   &    --     & -- & -- & --   &  Unknown(7)  &  --   \\
	\cline{2-14} 
	& ReLU+tanh       &   3   &  20   &   $[3,3]$  & 0.001   &      $0.0214964$    &   Yes(9)      &    543     & -- & -- & -- & -- & -- \\ \hline 
	\multirow{4}{*}{3} & ReLU       &   3   &  20   &    $[1,1]$     &  0.01 &    $0.0205432$              &        Yes(60)        &   982     & 2 & Yes(60) & 139 & --  & --    \\ \cline{2-14} 
	& sigmoid    &   3   &  20   &   $[3,3]$  & 0.005  &  $0.0060632$         &    Yes(60)     &    1467    & -- & -- & -- & Yes(60) &   27   \\ \cline{2-14} 
	& tanh       &   3   &  20   &   $[3,3]$  &  0.01 &  $0.0072984$      &   Yes(60)     &   1481     & -- & -- & -- &Yes(60)  & 26     \\ \cline{2-14} 
	& ReLU+tanh       &   3   &  20   &   $[3,3]$   & 0.01  &      $0.0230050$    &   Unknown(60)      &    --     & -- & -- & -- & -- & --\\   \hline
	\multirow{4}{*}{4} & ReLU       &  3    &  20   &    $[1,1,1]$  &  0.005   &               $0.0048965$       &  Yes(5)  &   396     & 2 & Yes(5) & 19 & -- & --\\                                 \cline{2-14}
	& sigmoid    &  3    &   20  &    $[2,2,2]$     &  0.01    & $0.0096400$                     &  Yes(10)      &     253     & -- & -- & -- & Yes(10) & 7 \\ \cline{2-14} 
	& tanh    &  3    &   20  &    $[2,2,2]$     &  0.01    & $0.0095897$                     &  Yes(10)      &     244     & -- & -- & -- & Yes(10) & 7 \\ \cline{2-14} 
	& ReLU+sigmoid       &   3   &  20   &   $[2,2,2]$ &  0.01  &      $0.0096322$    &   Yes(5)      &    108     & -- & -- & -- & -- & -- \\   \hline
	\multirow{4}{*}{5} & ReLU       &   4   &  100   &    $[1,1,1]$     &  0.004    & $0.0039809$       &    Yes(10)     &    5487      & 2 & Yes(10) & 12 & -- & --   \\ \cline{2-14}
	& sigmoid    &  4    &   100  &    $[2,2,2]$     & 0.004     &   $0.0039269$      &   No(10)    &    8842      & -- & -- & -- & Unknown(10) & -- \\ \cline{2-14} 
	& tanh       &   4   &  100   &  $[2,2,2]$   & 0.004  &      $0.0038905$  &  Unknown(10)      &  7051      & -- & -- & -- & Unknown(10)  &   --   \\ \cline{2-14}
	& ReLU+tanh       &   4   &  100   & $[2,2,2]$   & 0.04   &   $0.0039028$       &   Unknown(10)    &      7369    & -- & -- & -- & -- & -- \\
	\hline
	\multirow{4}{*}{6} & ReLU       &    4  &  20   &    $[1, 1, 1 ,1]$  &  0.001   &       $0.0096789$                 &   Yes(10)      &    7842     & 2 & Yes(10) & 33 & -- & --   \\ \cline{2-14}
	& sigmoid    &  4    &   20  &    $[1,1,1,1]$     &  0.001    &         $0.0082784$             &   No(7)    & 32499       & -- & -- & -- & Yes(10) & 34 \\ \cline{2-14} 
	& tanh       &   4   &  20   &  $[1,1,1,1]$   & 0.001  &     $0.0156596$   &   No(7)    &    3683    & -- & -- & -- & Yes(10) &   35   \\ \cline{2-14}
	& ReLU+tanh       &   4   &  20   &   $[1,1,1,1]$      &  0.001   & $0.0091648$    &   Yes(10)     &   10032     & -- & -- & -- & -- & --\\   \hline
\end{tabular}
\label{tab:results}
\end{table*}

Table~\ref{tab:benchmark} shows the benchmarks settings, while Table~\ref{tab:results} shows the experiment results of ReachNN, Sherlock \cite{Dutta_Others__2019__Reachability} and Verisig \cite{ivanov2018verisig}. We can first find that our approach can verify most examples, with a few exceptions. The simulation trajectories along with overapproximate reachable set computed by ReachNN, Sherlock and Verisig of some selected examples are shown in Figure \ref{fig:simulation}. The reason why we fail in these examples are mainly due to the relatively large approximation error. As shown later in Section \ref{sec:challenges}, an interesting phenomenon is that Bernstein polynomial based approach may perform differently with respect to the type of activation functions, which we will consider in our future work.

Benefiting from its generality, our approach can handle all these examples, while Sherlock and Verisig are only applicable to a few of them. Furthermore, the heterogeneous neural networks that contain multiple types of activation functions, which is common in practice~\cite{beer1989heterogeneous,Lillicrap2016ContinuousCW}, can only be handled by our approach. However we acknowledge that our method costs much more time than Sherlock and Verisig (see Table~\ref{tab:results}). The main reason is the large number of samples needed in estimating the error for a Bernstein polynomial. Since we only require a neural network to be Lipschitz continuous, the estimation of approximation error is quite conservative. However, such limitation may be overcome by considering more information of the activation functions. We plan to explore it in our future work.

\begin{figure*}[htb]
	\centering	
	\subfloat[][Ex1-tanh]{%
		\includegraphics[width=0.24\textwidth]{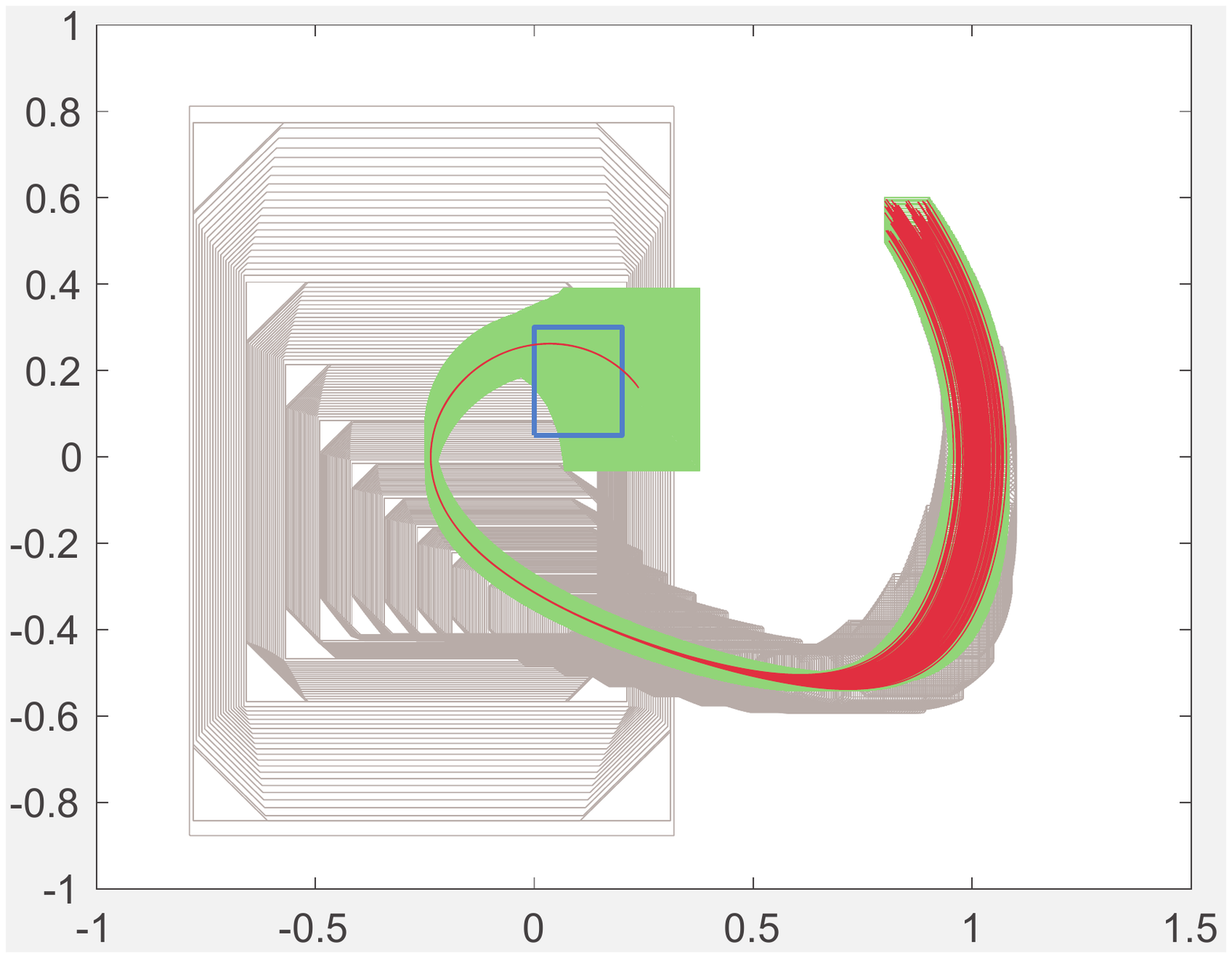}%
		\label{fig:ex3-relu}%
	}\ \
	\subfloat[][Ex2-sigmoid]{%
		\includegraphics[width=0.24\textwidth]{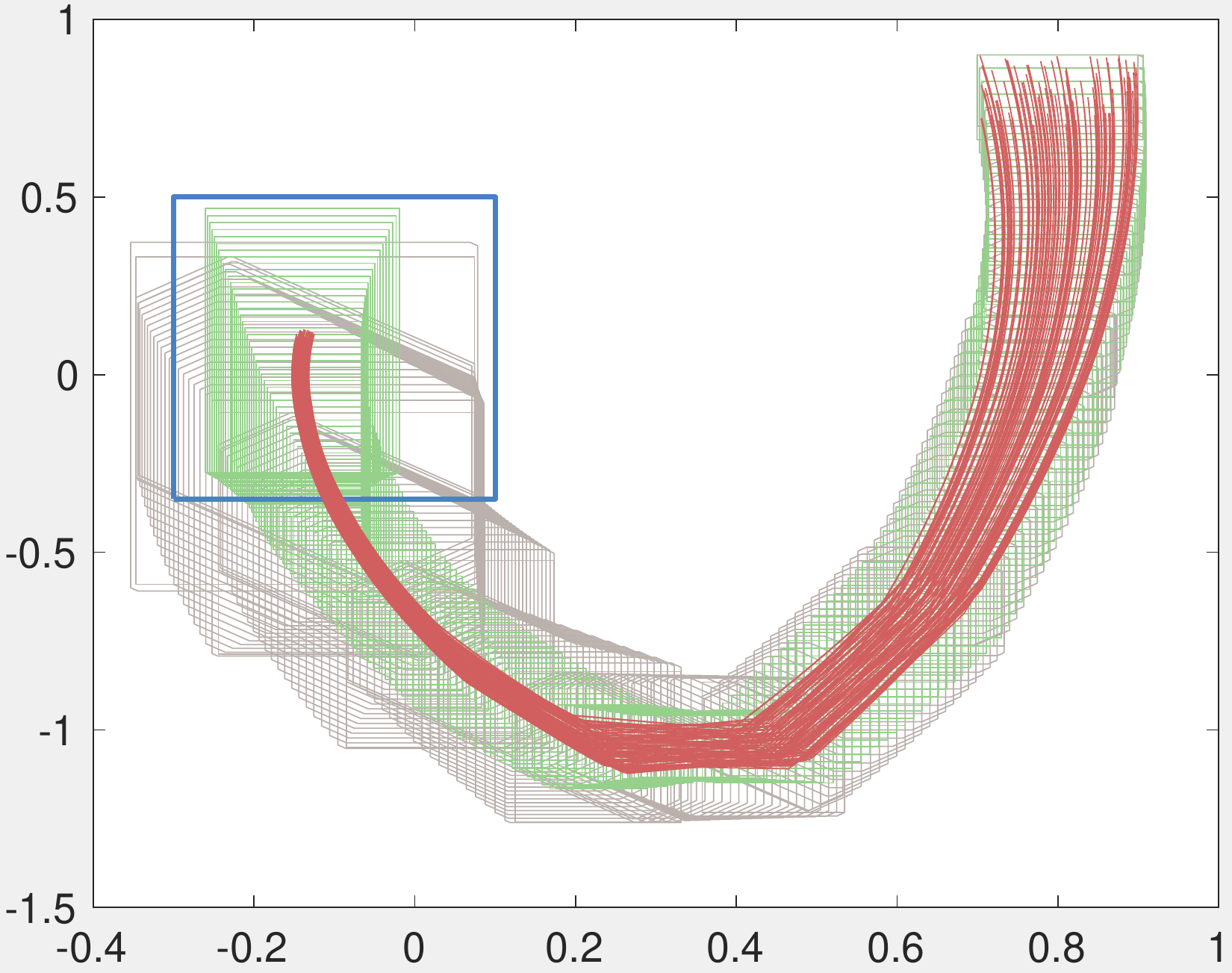}%
		\label{fig:ex6-relu}%
	}\ \
	\subfloat[][Ex4-sigmoid]{%
    	\includegraphics[width=0.24\textwidth]{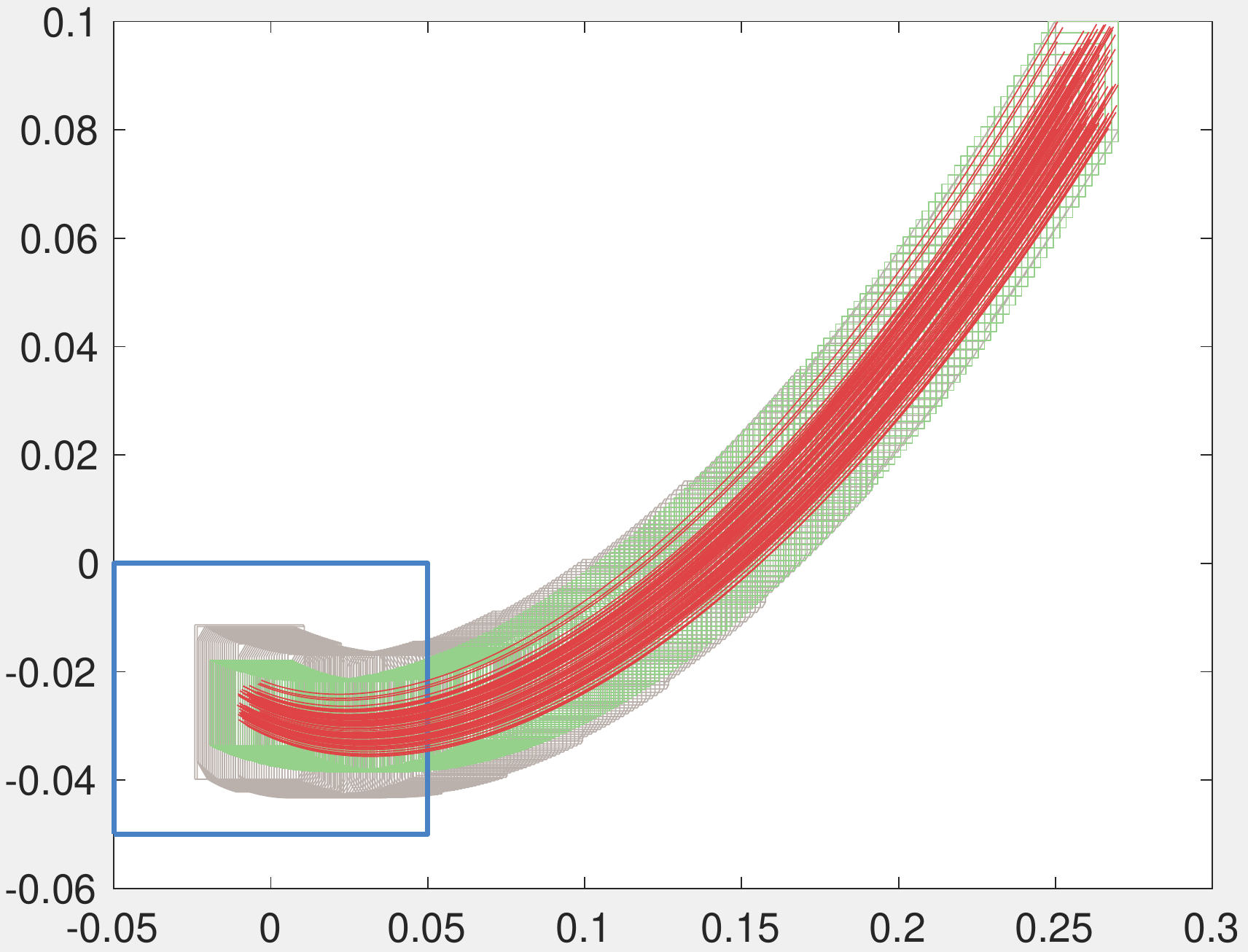}%
    	\label{fig:ex8-relu}%
	}\ \
	\subfloat[][Ex5-ReLU]{%
    	\includegraphics[width=0.24\textwidth]{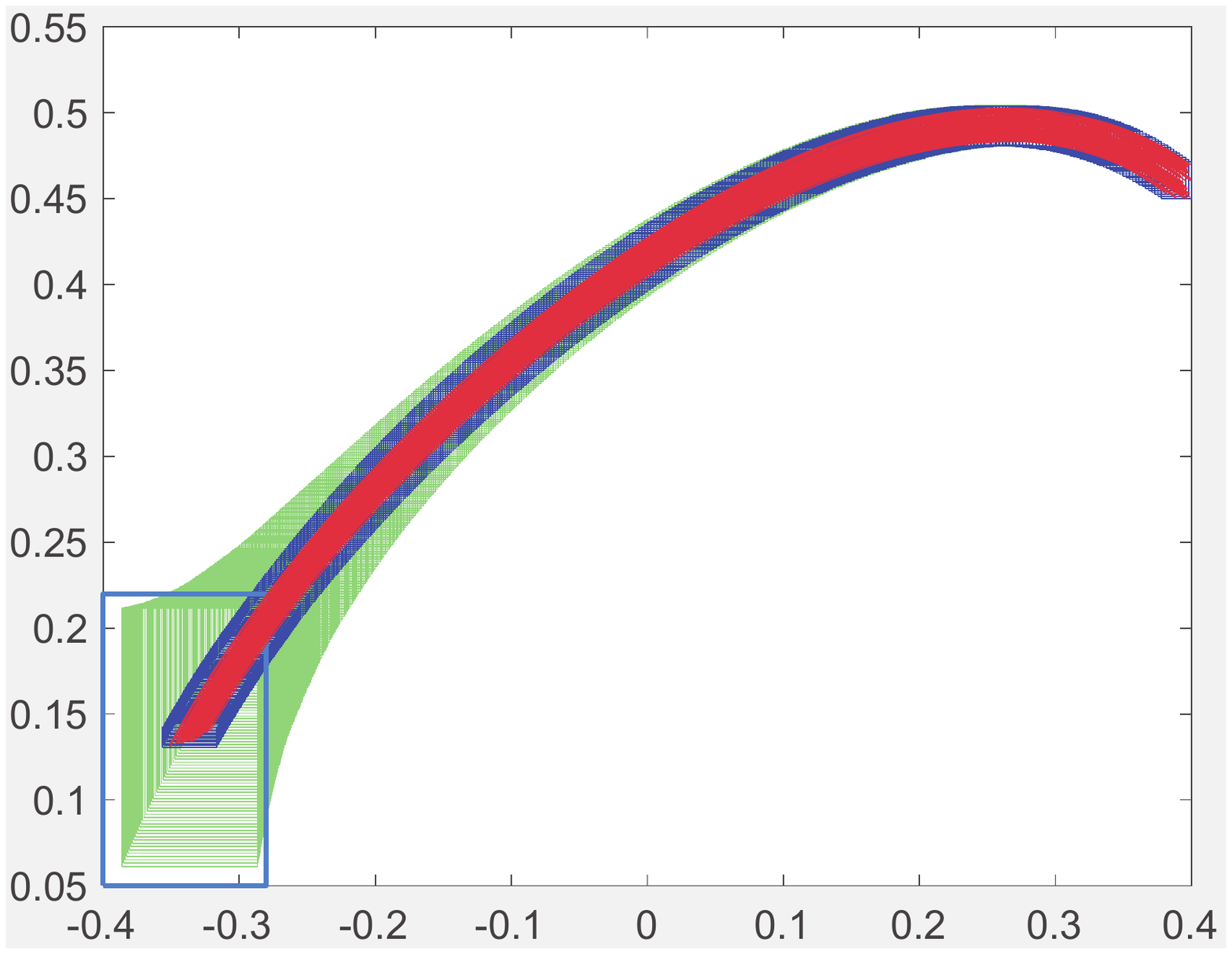}%
    	\label{fig:ex9-relu}%
	}\\
	\subfloat[][Ex1-sigmoid]{%
		\includegraphics[width=0.24\textwidth]{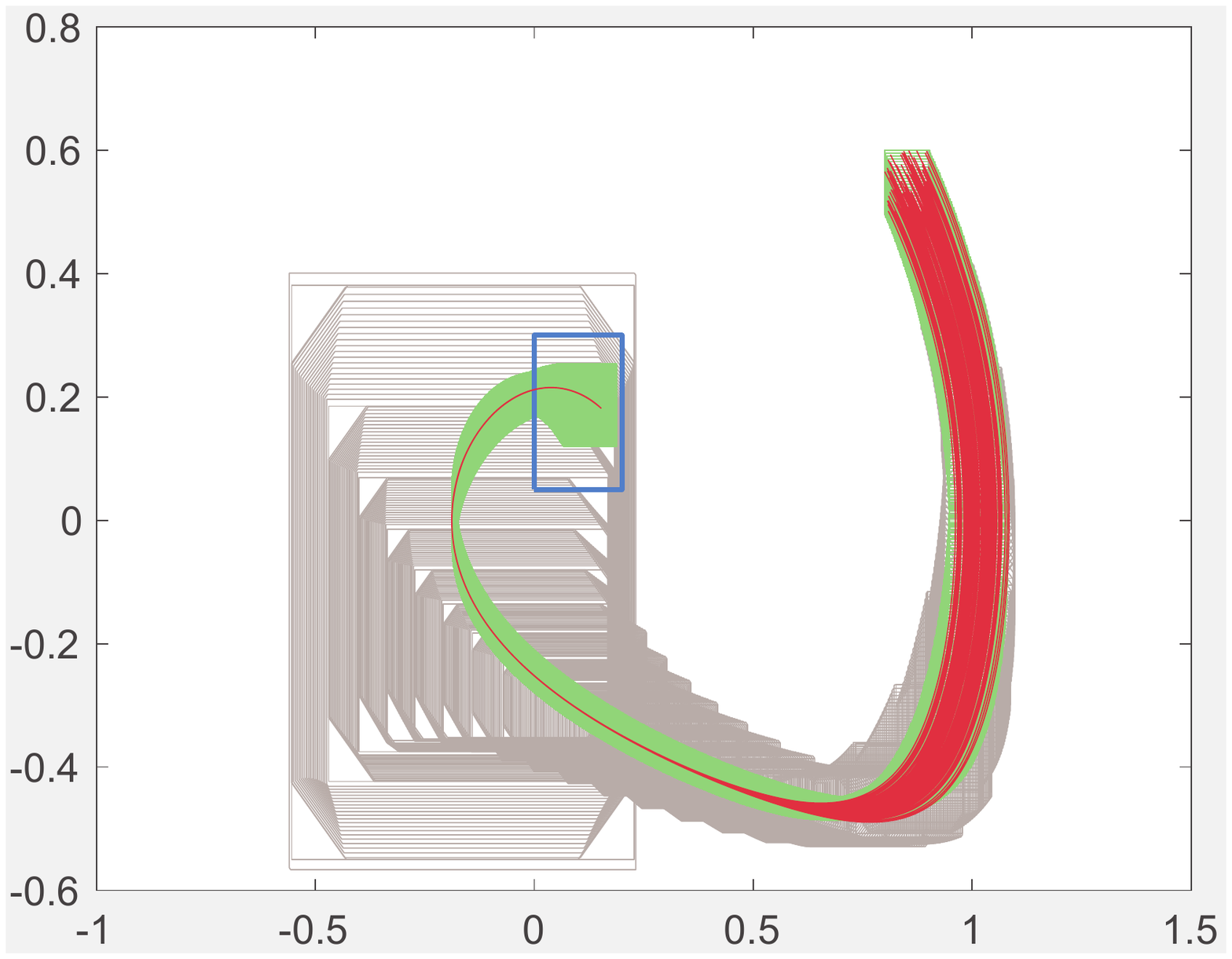}%
		\label{fig:ex3-sigmoid}%
	}\ \
	\subfloat[][Ex2-ReLU-tanh]{%
		\includegraphics[width=0.24\textwidth]{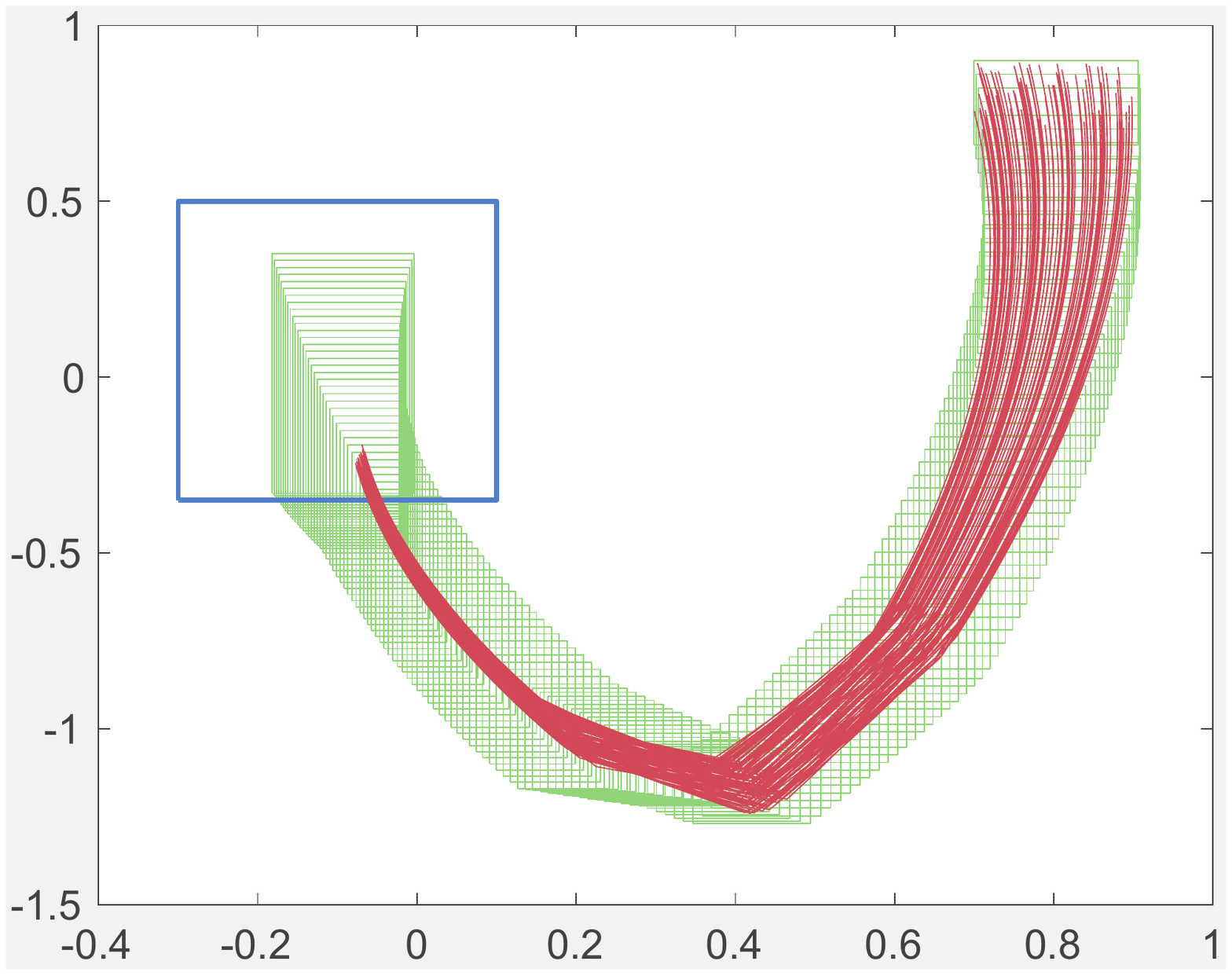}%
		\label{fig:ex6-sigmoid}%
	}\ \
	\subfloat[][Ex4-ReLU-tanh]{%
		\includegraphics[width=0.24\textwidth]{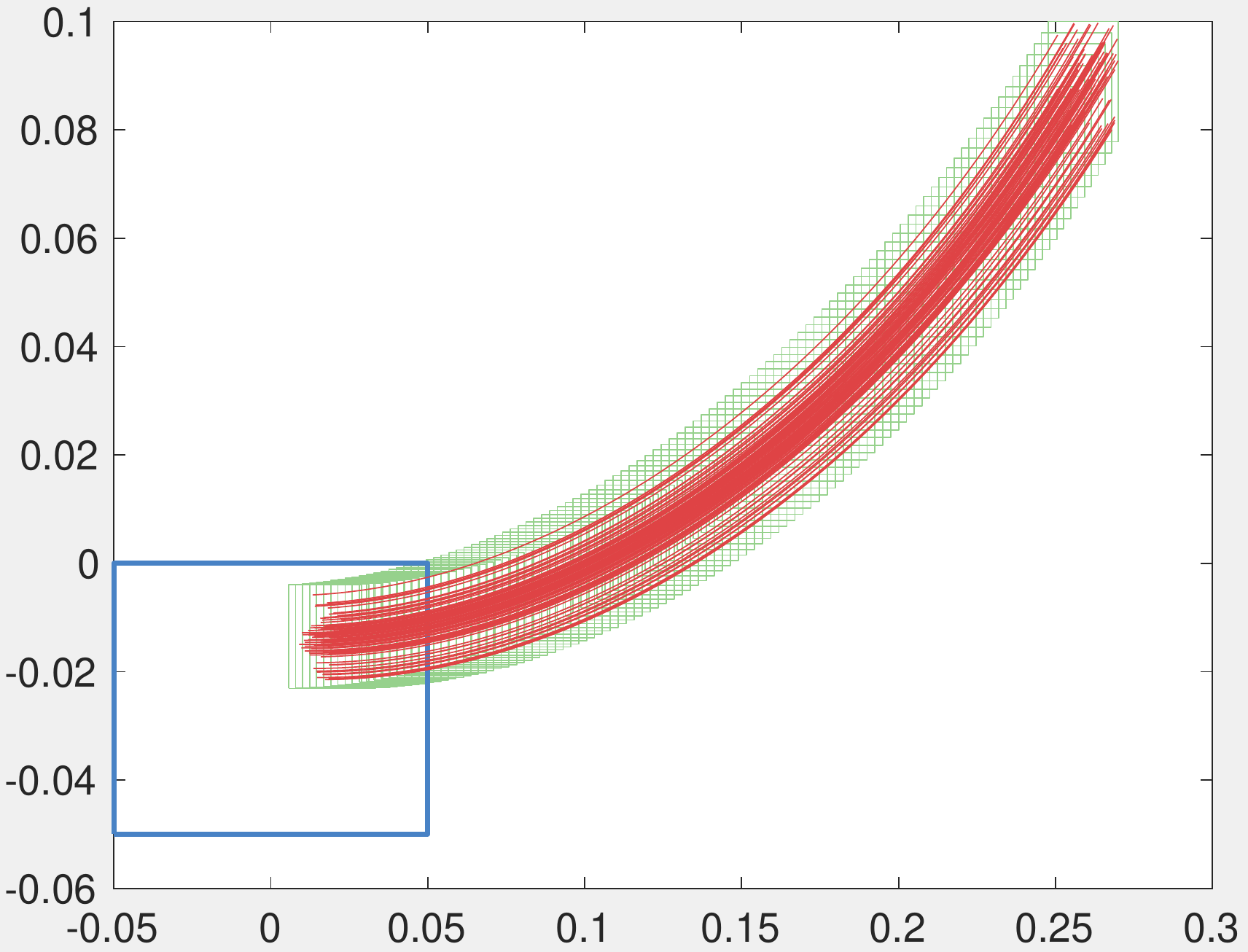}%
		\label{fig:ex7-relu}%
	}\ \
	\subfloat[][Ex6-ReLU-tanh]{%
		\includegraphics[width=0.24\textwidth]{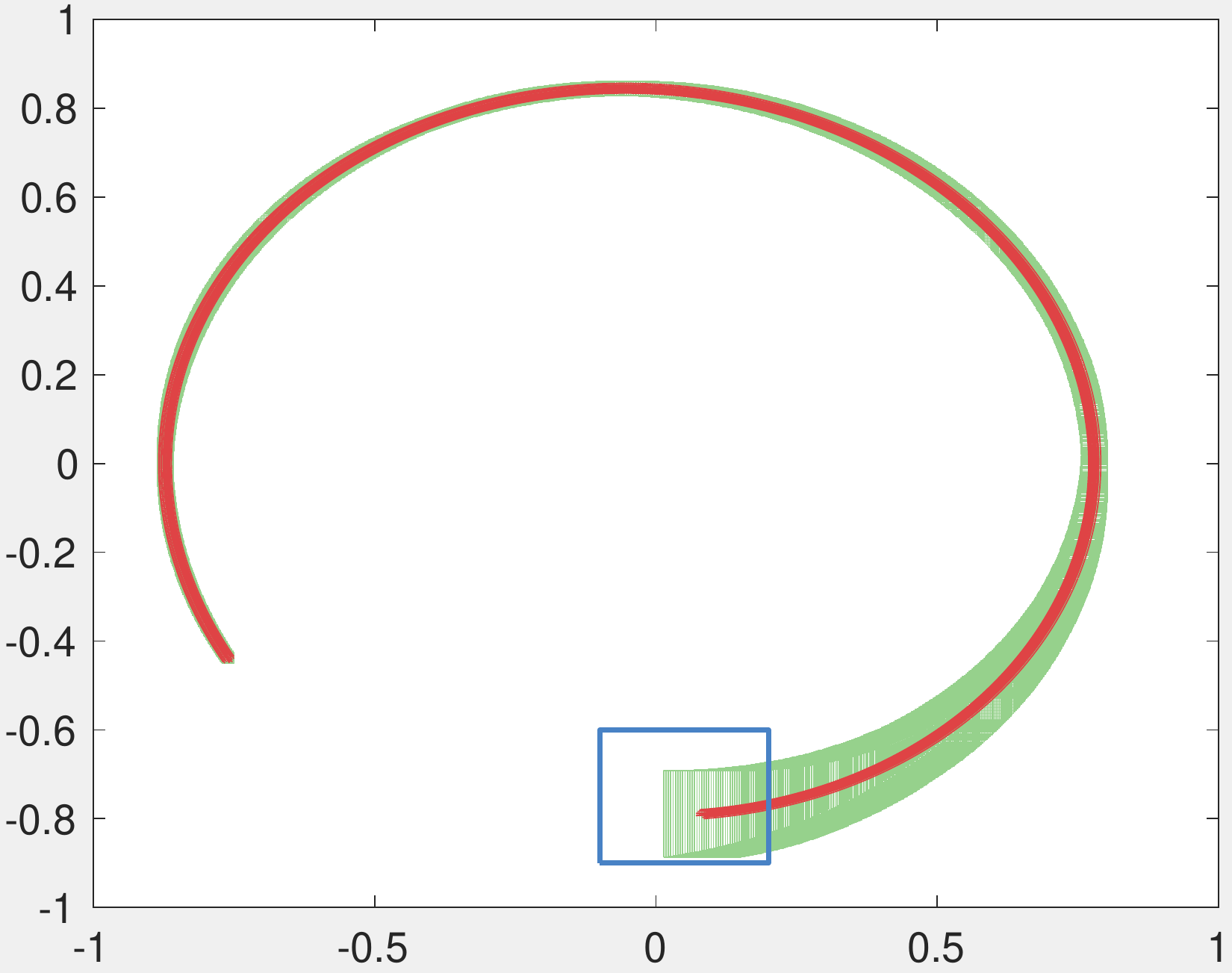}%
		\label{fig:ex8-relu-tanh}%
	}
%	\subfloat[][Ex3-tanh]{%
%		\includegraphics[width=0.24\textwidth]{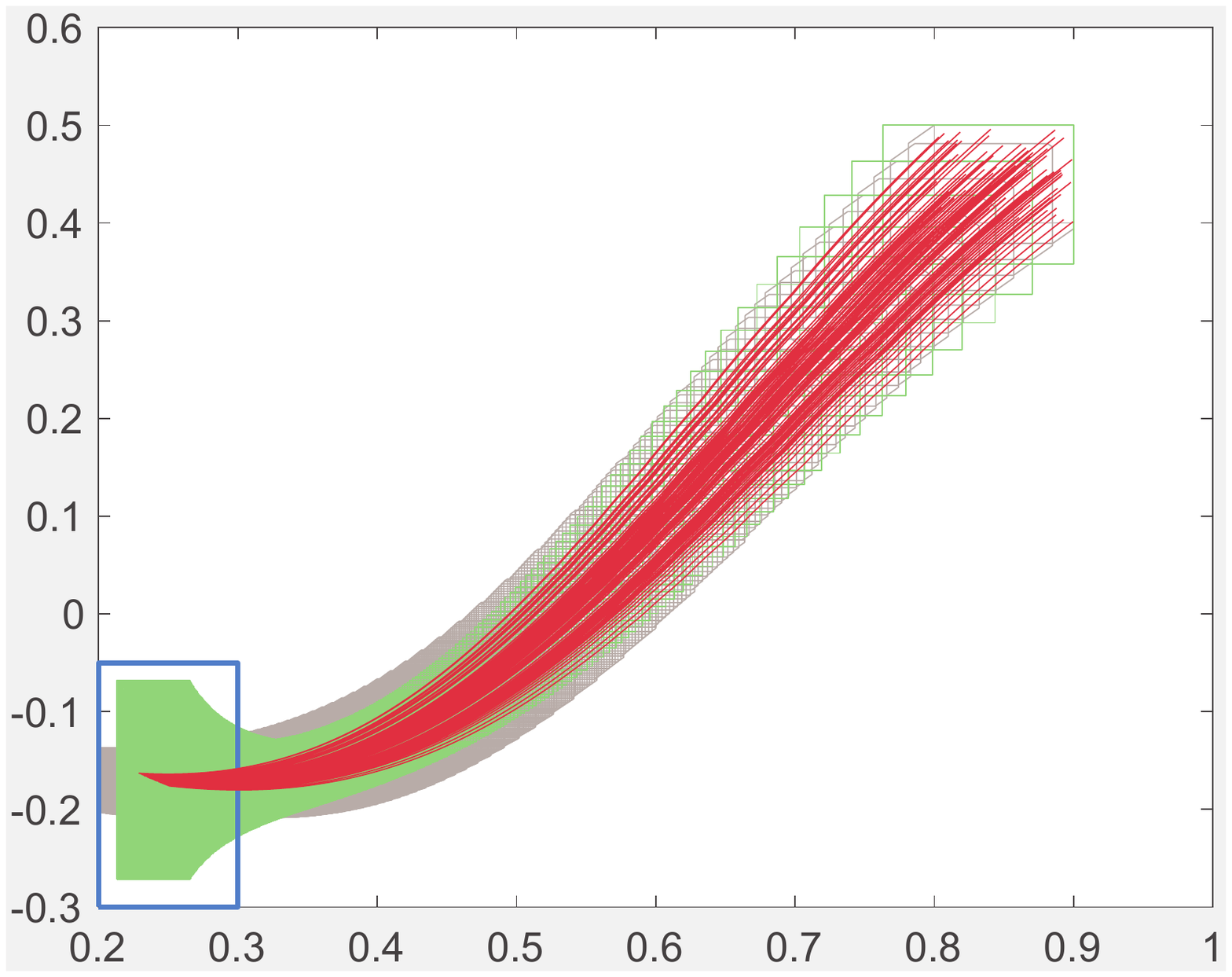}%
%		\label{fig:ex3-tanh}%
%	}\ \
%	\subfloat[][Ex6-tanh]{%
%		\includegraphics[width=0.24\textwidth]{figs/benchmark6_tanh.pdf}%
% 		\label{fig:ex6-tanh}%
% 	}\ \
% 	\subfloat[][Ex8-tanh]{%
% 		\includegraphics[width=0.24\textwidth]{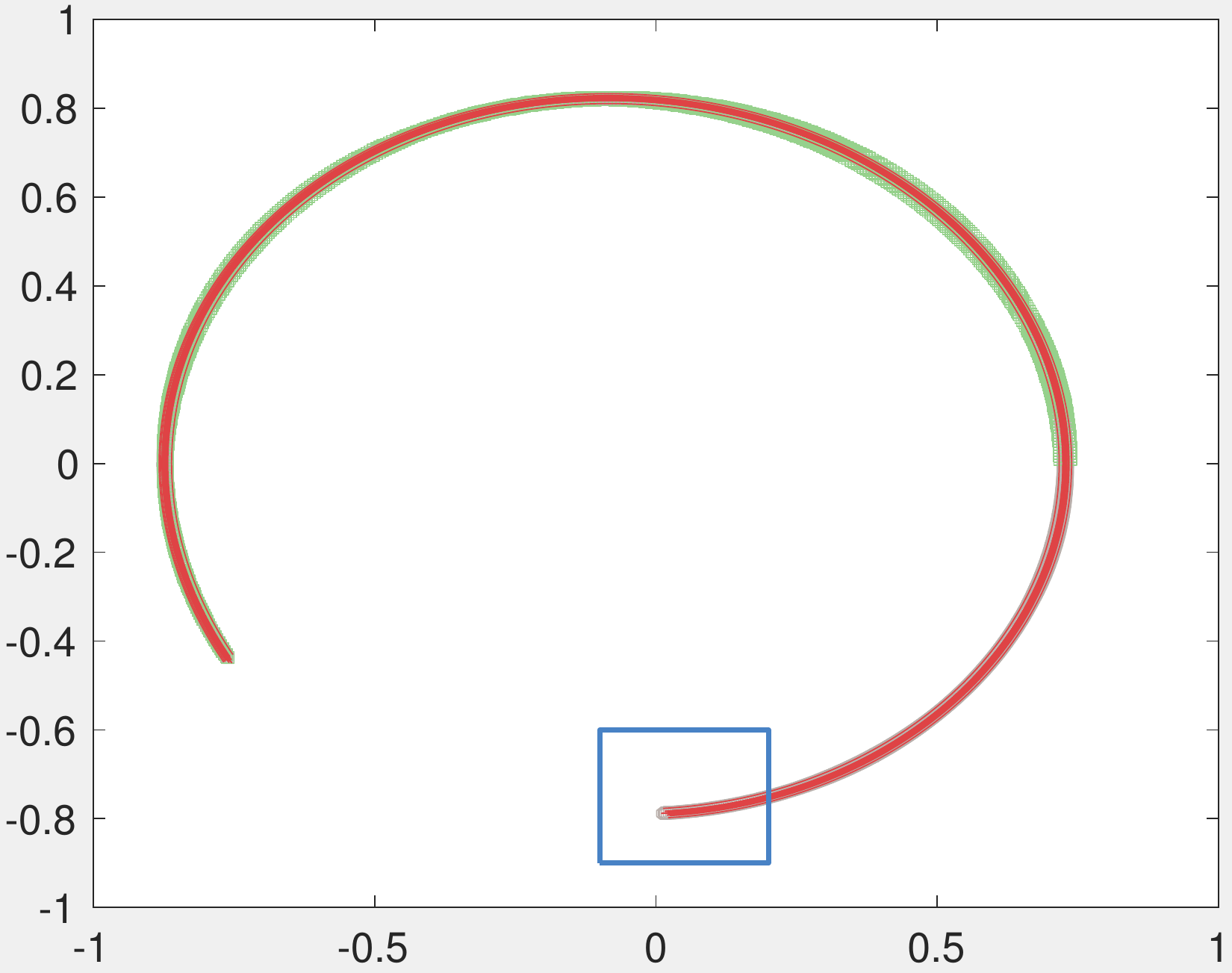}%
% 		\label{fig:ex8-tanh}%
% 	}\ \
% 	\subfloat[][Ex9-tanh]{%
% 		\includegraphics[width=0.24\textwidth]{figs/benchmark8_tanh.pdf}%
% 		\label{fig:ex9-tanh}%
% 	}
	\caption{Flowpipes for the selected examples: Red curves denote the trajectories of $x_1$ and $x_2$ of the system simulated from sampled states within the initial set. Green rectangles are the constructed flowpipes as the overapproximation of the reachable state set by our approach, gray rectangles are the flowpipes computed based on Verisig, and Sherlock computes the flowpipes represented as deep blue rectangles. The blue rectangle is the goal area. Neither Verisig nor Sherlock can analyze the networks in (f), (g) or (h) due to the presence of both ReLU and tanh activation functions.}
	\label{fig:simulation}
\end{figure*}

\section{Discussion and Open Challenges} \label{sec:challenges}
In this section, we will show further insights into our approach and discuss some remaining challenges. 

\smallskip
\noindent

\textbf{ReLU v.s. tanh v.s. sigmoid: Approximation performance of Bernstein polynomials.}
We empirically explore the approximation performance of Bernstein polynomials for different neural networks by sampling. We take Example 2 over $X=[0.7,0.9]$ in the benchmark for instance. For each neural network controller, we sample a large number of points and plot the function value of the controller and its approximation  over $X=[0.7,0.9]$ (Figure \ref{fig:approximation}). The orders of magnitude of the error for ReLU network, sigmoid network and tanh network are $10^{-12}$, $10^{-4}$, $10^{-4}$, respectively. First, the approximation errors for all these three networks are fairly small for the $0.2\times 0.2$ box, which indicates that Bernstein polynomial based approach is promising if more efficient error estimation approaches could be designed. Secondly, we can see that Bernstein polynomials can achieve a higher approximation precision for the ReLU network than the other two. This motivates us to further explore the impact of the inner structure of different neural networks on the approximation performance in future work.

\begin{figure}[tbp]
	\centering	
	\subfloat[][Approximation for ReLU neural network]{%
		\includegraphics[width=0.23\textwidth]{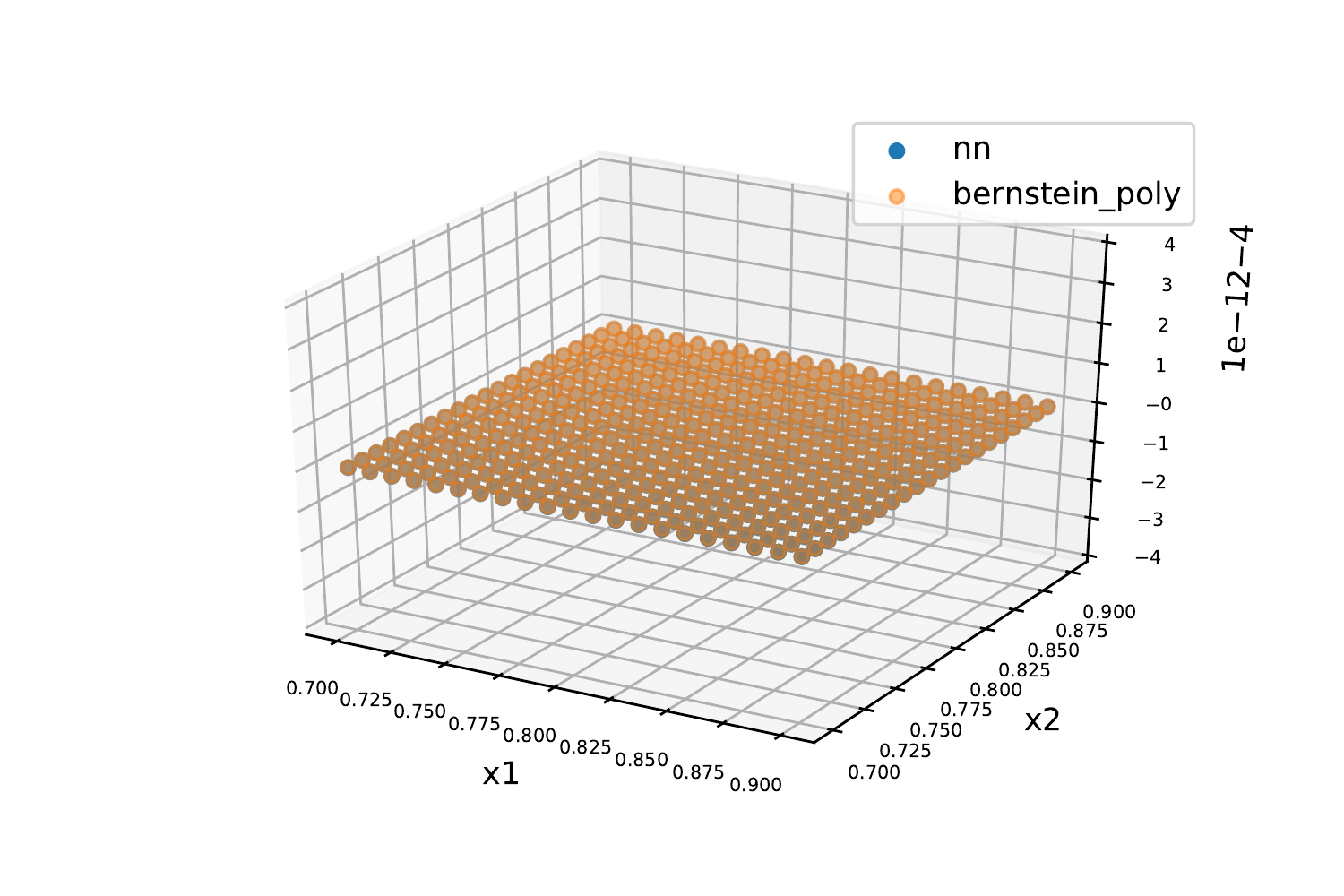}%
		\label{fig:bernsp_relu}%
	}\\
	\subfloat[][Approximation for sigmoid neural network]{%
		\includegraphics[width=0.23\textwidth]{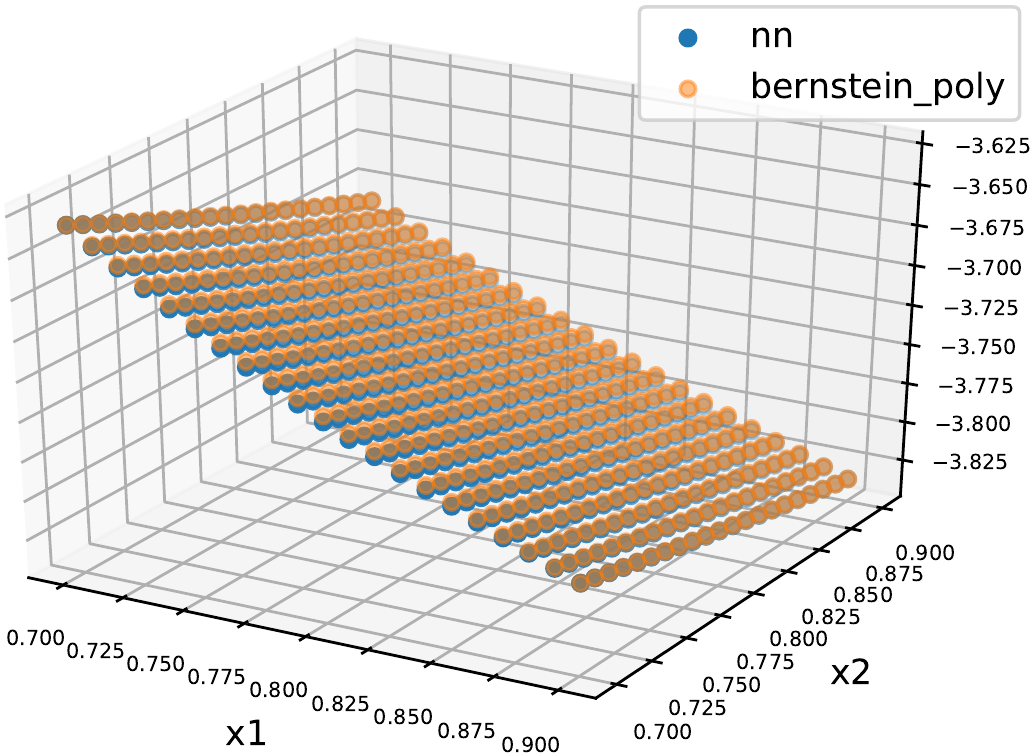}%
		\label{fig:bernsp_sigmoid}%
	}\ 
	\subfloat[][Approximation for tanh neural network]{%
		\includegraphics[width=0.23\textwidth]{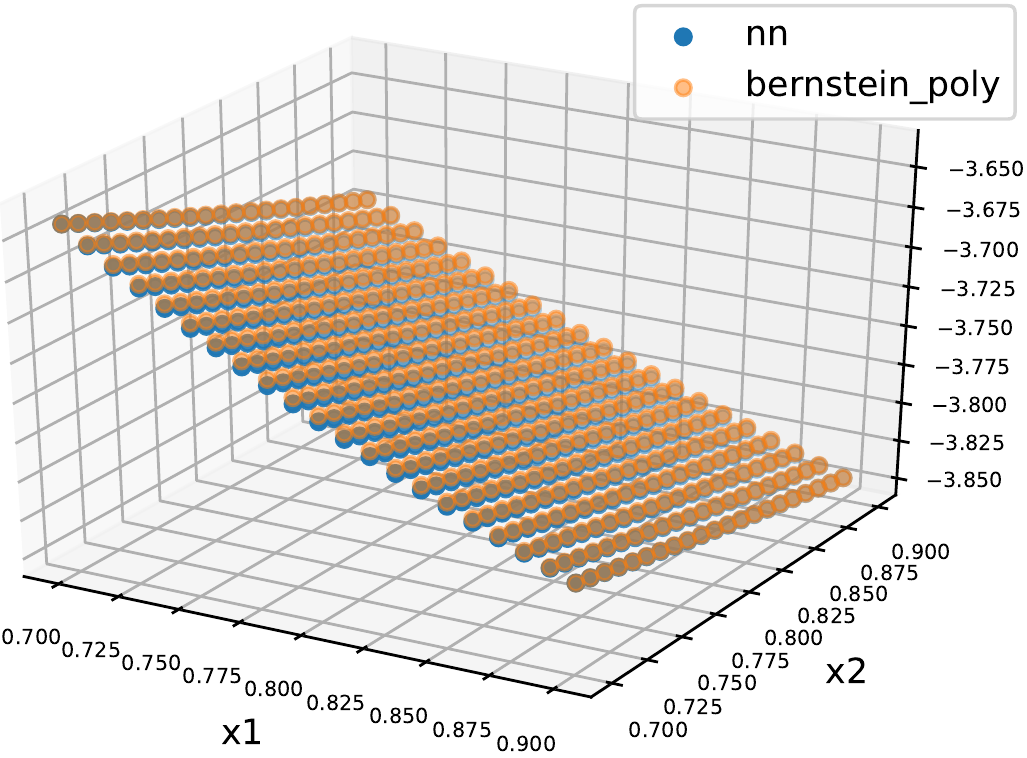}%
		\label{fig:bernsp_tanh}%
	}
	\caption{Bernstein polynomial based approximation for different neural networks. In Figure~\ref{fig:bernsp_relu}, \ref{fig:bernsp_sigmoid}, \ref{fig:bernsp_tanh}, $x1$-axis and $x2$-axis are $x_1$ and $x_2$ respectively, while $z$ axis is the value of neural network/polynomial approximation. The blue point and yellow point are the sample values of the neural network and the approximation polynomial respectively.}
	\label{fig:approximation}
\end{figure}

\smallskip
\noindent
\textbf{Small Lipschitz constant v.s. Large Lipschitz constant.} Given that the approximation error of Bernstein polynomials is upper bounded linearly by the Lipschitz constant of neural network, we conducted preliminary study on the impact of the Lipschitz constant on ReachNN. We consider the dynamical system of an inverted pendulum on a cart:
\begin{align*}
    \dot{x}_1&=x_2, \\
    \dot{x}_2&=\frac{-m g \sin{(x_3)} \cos{(x_3)}+m l x_4^2 \sin{(x_3)} + f m x_4 \cos{(x_3)} + u}{M+(1-\cos{(x_3)}^2)m}, \\
    \dot{x}_3&=x_4, \\
    \dot{x}_4&= \frac{(M + m) * (g \sin{(x_3)} - f x_4) - (l m x_4^2 \sin{(x_3)} + u)}{l (M + 1 - \cos{x_3}^2)m},
\end{align*}
where the angular position and velocity of the pendulum are $x_1$ and $x_2$, the position and velocity of the cart are $x_3$ and $x_4$, the pendulum mass is $m=0.23$, the cart mass is $M=2.4$, gravitational acceleration is $g=9.8$, the length of pendulum is $l=0.36$, and the coefficient of friction is $f=0.1$. The goal is to stabilize the pendulum at the upward position and verify whether the cart position remains in $[2, 4]$ after $25$ control steps. The system will start randomly from $x_1 \in [0.5, 0.55],x_2 \in [-1, -0.95],x_3 \in [2.5, 2.55],x_4 \in [0, 0.05]$.

Given a trained five-layer ReLU neural network $\kappa_1$ with the width of each layer as $[100, 1, 2, 1, 2]$, the computed Lipschitz constant is $874.5$~\cite{ruan2018reachability}. Although this Lipschitz constant is an upper bound of the best Lipschitz constant, by using the same Lipschitz constant computation method, we can use it as a proxy to estimate the differences between the best Lipschitz constants of different neural networks. 
%Observing Theorem \ref{thm:error_lip}, we can see that a smaller Lipschitz constant can effectively reduce the bound $\varepsilon_1$ when $X$ and $d$ are given. Thus, we first train another neural network that mimicing the behavior of the original neural network controller but with smaller Lipschitz Constant upper bound.
Figure~\ref{fig:ex10-tanh-interval}
illustrates that the Lipschitz constant of a network has a significant impact on our Bernstein polynomial-based reachablility analysis. For $\kappa_1$ as shown in Figure~\ref{fig:ex10-tanh-interval}, the error bound estimation grows rapidly and end up becoming too large at the $10$th step. The blue curves are continuations of the simulation results from $10$ to $20$ steps after reachability analysis becomes unreliable.
%Although the Bernstein polynomial approximation error is only upper bounded, given fixed degree we show the inefficiency of Bernstein polynomial approximation to capture the behavior of the neural network with a large Lipschitz constant empirically in Figure~\ref{fig:ex10-tanh-interval}.

% Model Compression~/cite{bastani2018verifiable}
\smallskip
\noindent
\textbf{Neural Network Retraining.} Thus, to handle neural-network controllers with large Lipschitz constants, we retrained the network by sampling input-output data from the original network and added a penalty term for Lipschitz constant in the loss function. To reduce the Lipschitz constant, $L_\theta$, of the retrained network $\kappa'(x;\theta)$, we consider the following empirical risk minimization problem:
\begin{equation}
    \min_\theta \frac{1}{N} \sum_{i=1}^{N}L(\kappa'(x;\theta), \kappa(x)) + \lambda L_\theta,
\end{equation}
where $\lambda \in \mathbb{R}_+$ is a regularization factor and $L_\theta = \prod_{l=1}^L \Vert W_\theta^l \Vert_2$ as the Lipschitz constant~\cite{ruan2018reachability}. We refer to the second term as the \emph{Lipschitz constant regularizer}. 
%It decreases the Lipschitz constant of the retrained network. 
The gradient of the \emph{Lipschitz constant regularizer} is only related to the largest singular value and corresponding vectors of $W_\theta^l$ and projected by weights of other layers. This means each retrained weight matrix, $W_\theta^l$, does not shrink significantly in the directions orthogonal to the first right singular vector and preserves potential important information in the original network. 
We use classical gradient descent method to do the optimization. A similar approach is also mentioned in~\cite{yoshida2017spectral}.

For the inverted pendulum on a cart example, we retrained a three-layer ReLU neural network $\kappa_2$ with 50 neurons each layer, which has the Lipschitz constant upper bound of $14.7$. In Figure~\ref{fig:ex10-stability}, we show state evolution of angular position and angular velocity controlled by the original neural network $\kappa_1$ and by the retrained neural network $\kappa_2$, respectively. In Figure~\ref{fig:ex10-tanh-interval} and Figure~\ref{fig:ex10-tanh}, we show that the retrained neural network can produce results comparable to the original neural network. For the neural network with a small Lipschitz constant, we postulate that Bernstein polynomials can track the behaviors of the neural network better. In addition, ReachNN provides tighter error bound estimation. As a result, the overapproximation quality is significantly improved.
%The retraining process essentially extracts the important features from the original neural-network controller and generates a neural-network controller for which desirable properties such as safety, stability, and robustness can be efficiently verified. 
This idea is aligned with \textit{model compression}~\cite{bucilua2006model} (or \textit{distillation}~\cite{Hinton2015DistillingTK}), which uses high-performance neural networks to guide the training of shallower~\cite{ba2014deep} or more structured neural networks. One key observation in \cite{ba2014deep} is that deep and complex neural networks perform better not because of better representation power, but because they are better regularized and therefore easier to train. Here, we consider the retraining process as a form of regularization that effectively maintains the performance of the original network but obtains a smaller Lipschitz constant.
We plan to explore the tighter connection between training and verification more thoroughly in future work.

\begin{figure}[tbp]
	\centering	
	\subfloat[][Stability performance comparison]{%
		\includegraphics[width=0.23\textwidth]{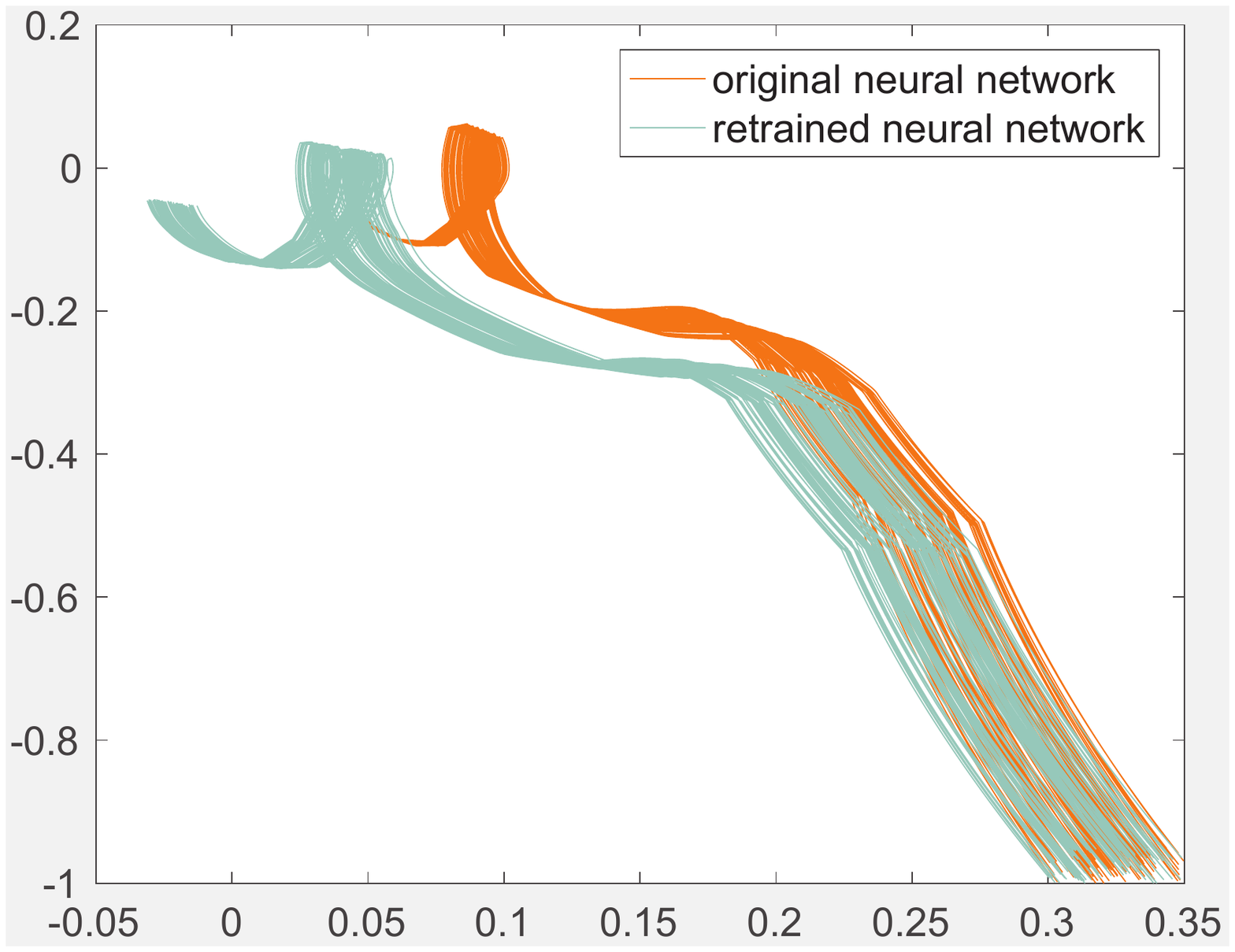}%
		\label{fig:ex10-stability}%
	}\\
	\subfloat[][Large Lipschitz constant]{%
		\includegraphics[width=0.23\textwidth]{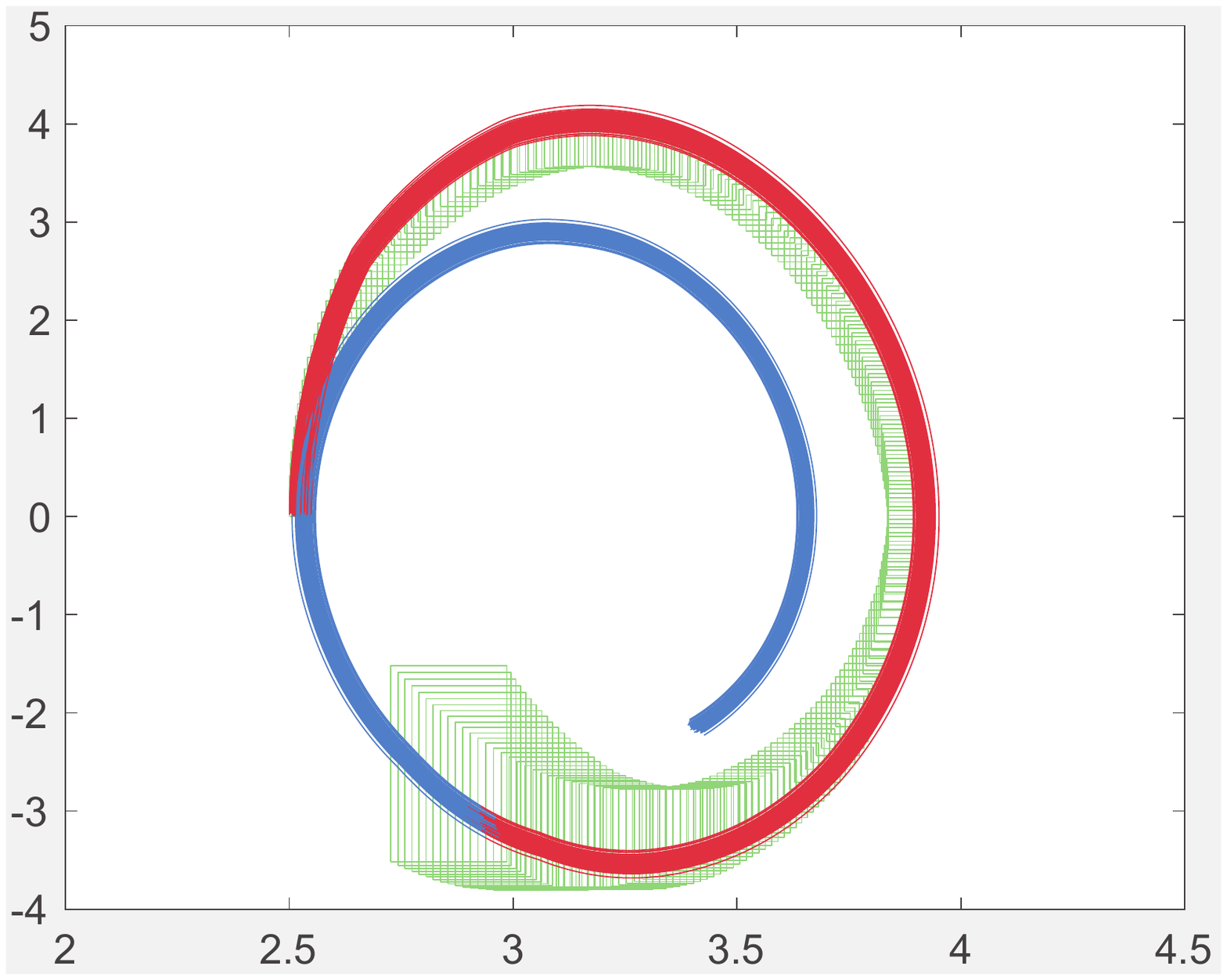}%
		\label{fig:ex10-tanh-interval}%
	}\ 
	\subfloat[][Small Lipschitz constant]{%
		\includegraphics[width=0.23\textwidth]{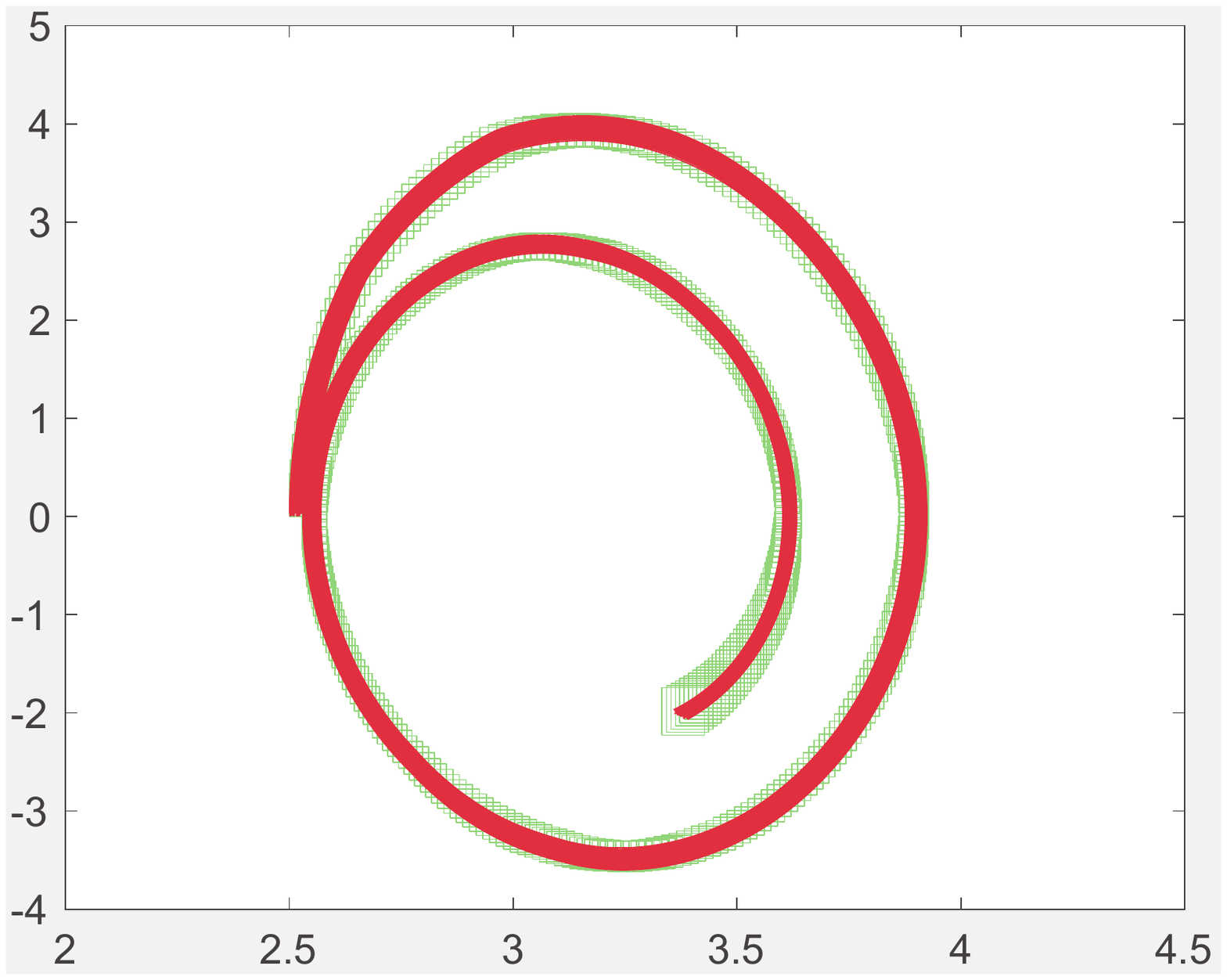}%
		\label{fig:ex10-tanh}%
	}
	\caption{ReachNN comparison between large Lipschitz constant neural-network controller and the retrained neural-network controller using Bernstein polynomials of the same degree. In Figure~\ref{fig:ex10-stability}, $x$-axis and $y$-axis are the pendulum angular position and angular velocity respectively. In Figure~\ref{fig:ex10-tanh-interval} and~\ref{fig:ex10-tanh}, $x$-axis and $y$-axis are the cart position and velocity respectively.}
	\label{fig:lipschitz}
\end{figure}

\smallskip
\noindent
\textbf{Low input dimension v.s. High input dimension.}
Thanks to the universal approximation property of Bernstein polynomials, our approach can theoretically approximate any neural network well, if the degree is high enough. However from the the perspective of implementation, we can see that the total order of the generated approximation polynomial $P_{\kappa,d}$ by ReachNN will increase exponentially along with the input dimension. Thus, our current approach may not be efficient enough to handle high dimension inputs in practice. We will investigate methods to address those high-dimensional cases in the future.

%Different from the Taylor model, $P_{\kappa,d}$ cannot be reduced by truncation operation without losing too much accuracy.

\section{Conclusion} \label{sec:conclusion}

In this paper, we address the reachability analysis of neural-network controlled systems, and present a novel approach ReachNN. 
%We presented a novel approach ReachNN, which conducts Bernstein polynomial based approximation for general neural-network controllers. 
Given an input space and a degree bound, our approach constructs a polynomial approximation for a neural-network controller based on Bernstein polynomials and provides two techniques to estimate the approximation error bound. Then, leveraging the off-the-shelf tool Flow*, our approach can iteratively compute flowpipes as over-approximate reachable sets of the neural-network controlled system. The experiment results show that our approach can effectively address various neural-network controlled systems.
%handle various neural network controlled systems well. 
Our future work includes further tightening the approximation error bound estimation and better addressing high-dimensional cases.

\bibliographystyle{ACM-Reference-Format}
\bibliography{./chao,./jiameng,./xin}

% The following will be deleted when the paper is submitted.
\appendix

\section{Appendix}
We present the plots of flowpipes for each benchmark (see Figure \ref{fig:appendix_simulation}).

\begin{figure*}[htb]
\captionsetup[subfloat]{farskip=2pt,captionskip=1pt}
	\centering	
	\subfloat[][ex1-relu]{%
		\includegraphics[width=0.24\textwidth]{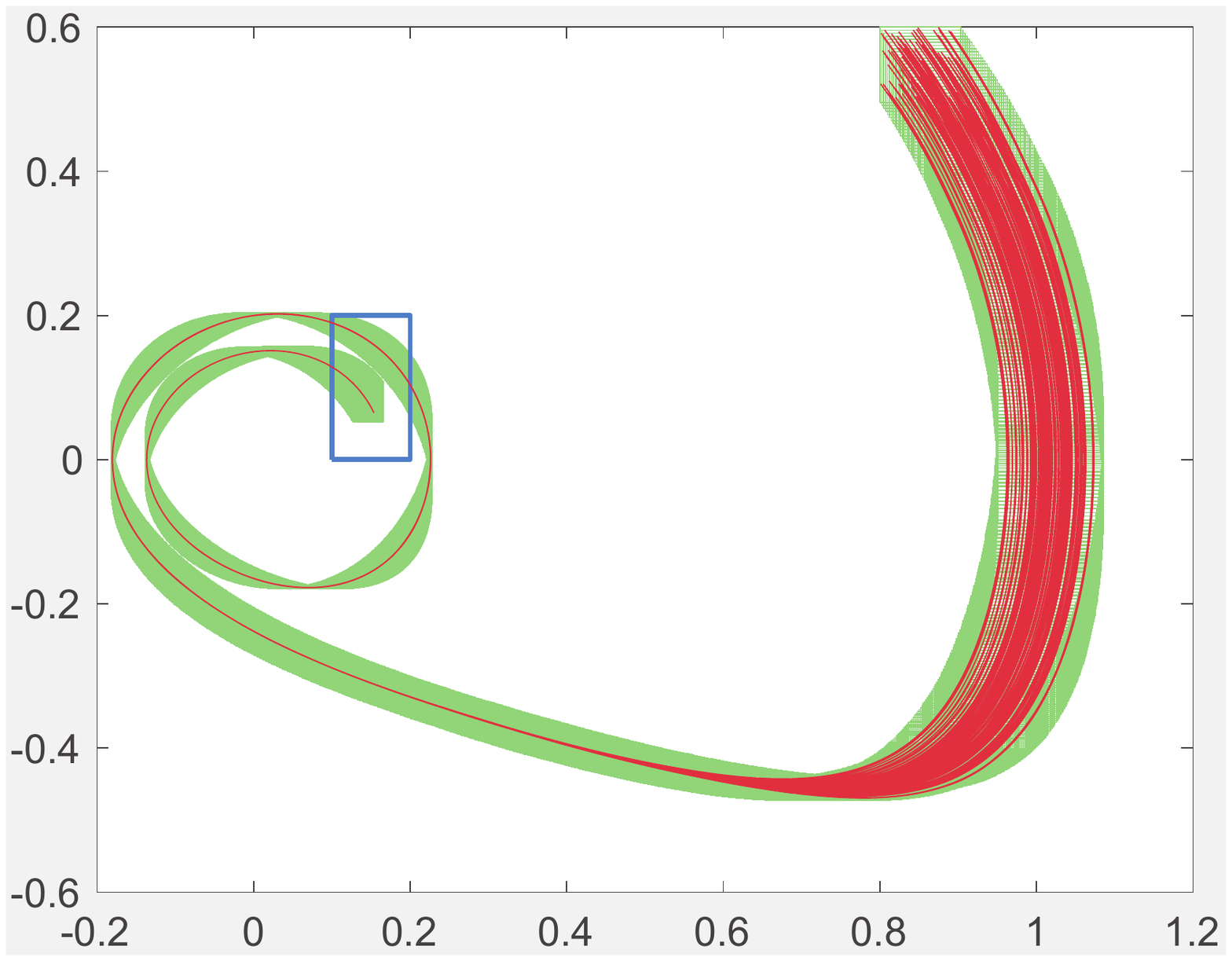}%
		\label{fig:ex1-relu}%
	}\ \
	\subfloat[][ex1-sigmoid]{%
		\includegraphics[width=0.24\textwidth]{figs/benchmark2_sigmoid.pdf}%
		\label{fig:ex1-sigmoid}%
	}\ \
	\subfloat[][ex1-tanh]{%
		\includegraphics[width=0.24\textwidth]{figs/benchmark2_tanh.pdf}%
		\label{fig:ex1-tanh}%
	}\ \
	\subfloat[][ex1-relu-tanh]{%
		\includegraphics[width=0.24\textwidth]{figs/benchmark2_relu_tanh.pdf}%
		\label{fig:ex1-relu-tanh}%
	}\\
	\subfloat[][ex2-relu]{%
		\includegraphics[width=0.24\textwidth]{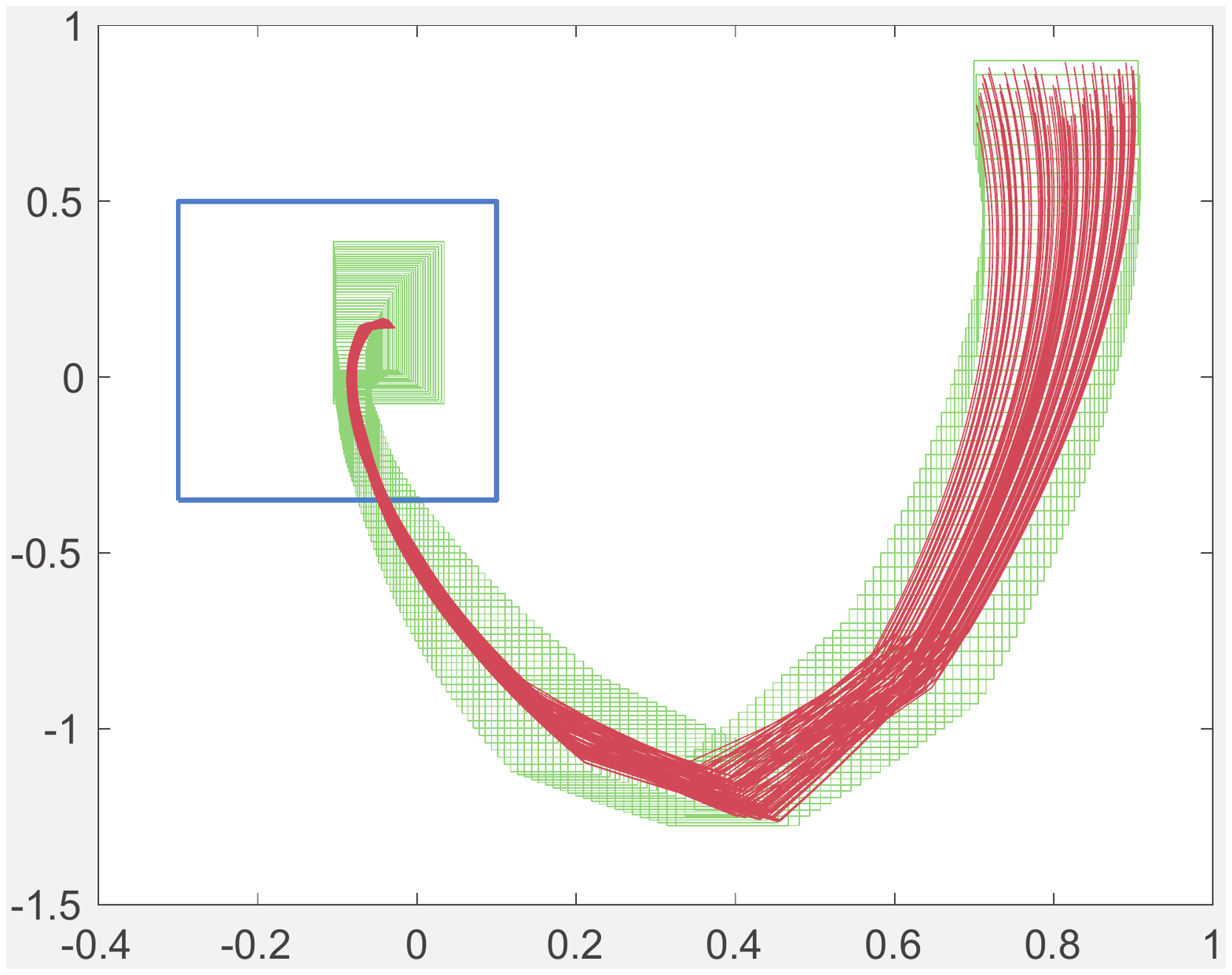}%
		\label{fig:ex2-relu}%
	}\ \
	\subfloat[][ex2-sigmoid]{%
		\includegraphics[width=0.24\textwidth]{figs/benchmark1_sigmoid.pdf}%
		\label{fig:ex2-sigmoid}%
	}\ \
	\subfloat[][ex2-tanh]{%
		\includegraphics[width=0.24\textwidth]{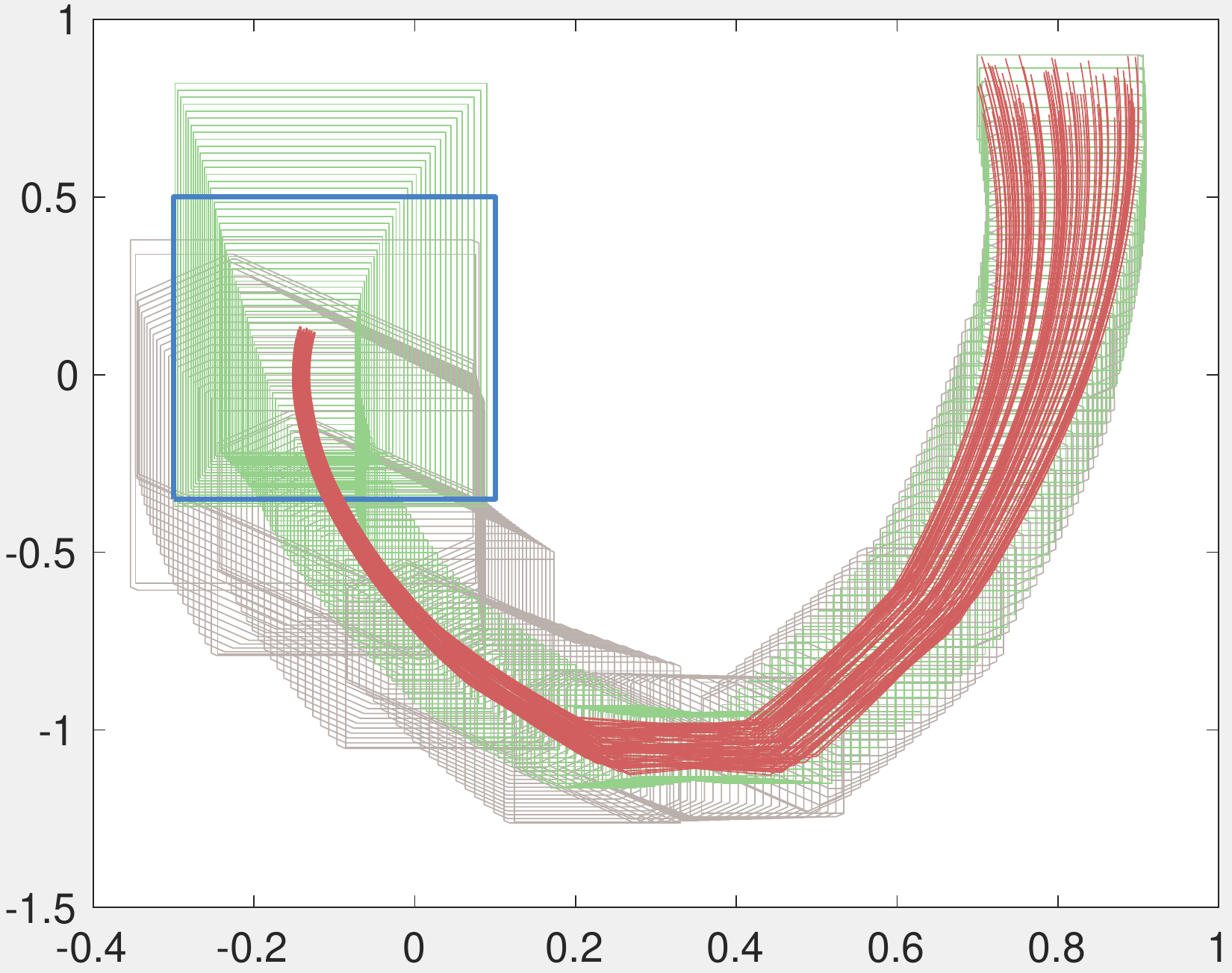}%
		\label{fig:ex2-tanh}%
	}\ \
	\subfloat[][ex2-relu-tanh]{%
		\includegraphics[width=0.24\textwidth]{figs/benchmark1_relu_tanh.pdf}%
		\label{fig:ex2-relu-tanh}%
	}\\
	\subfloat[][ex3-relu]{%
		\includegraphics[width=0.24\textwidth]{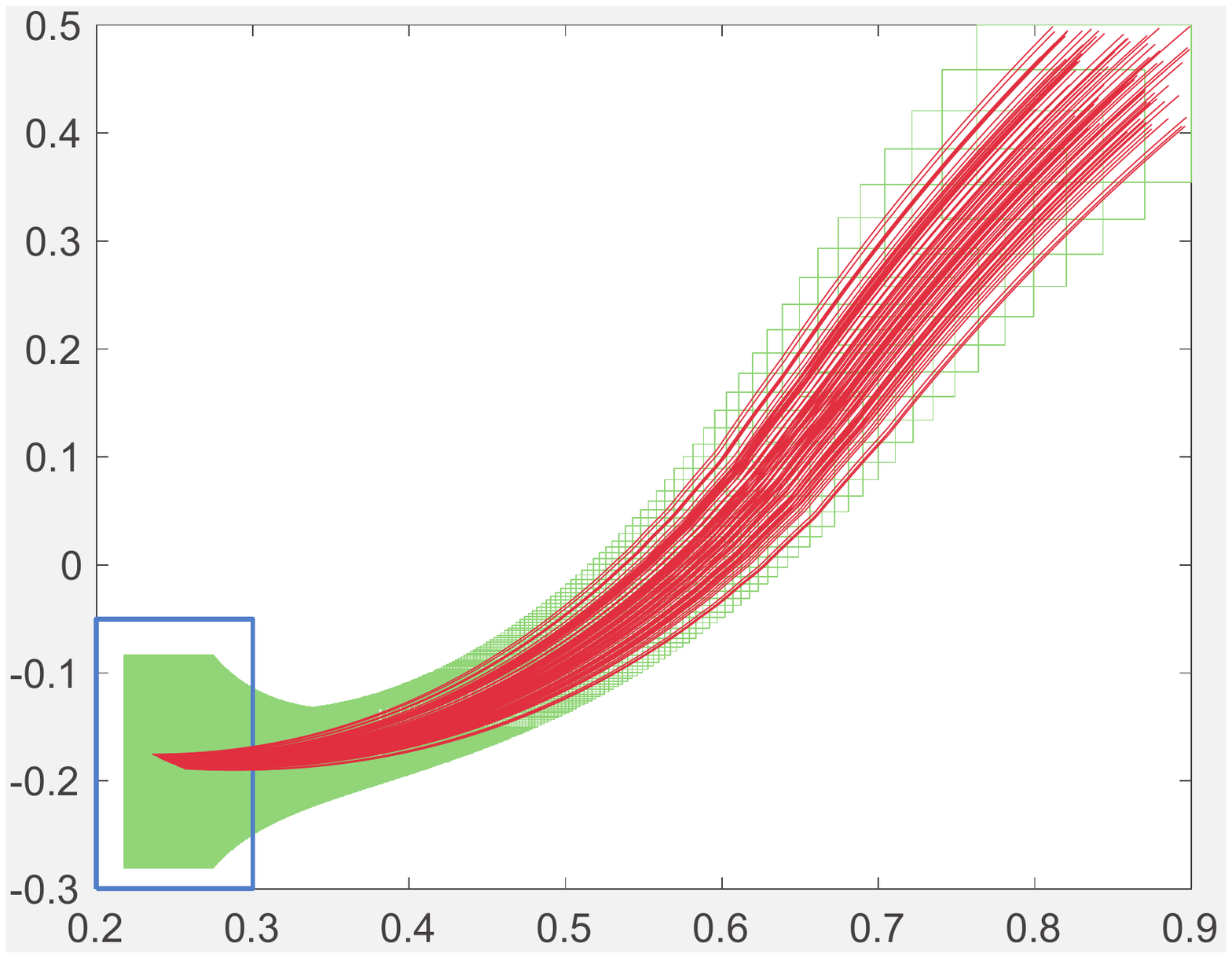}%
		\label{fig:ex3-relu}%
	}\ \
	\subfloat[][ex3-sigmoid]{%
		\includegraphics[width=0.24\textwidth]{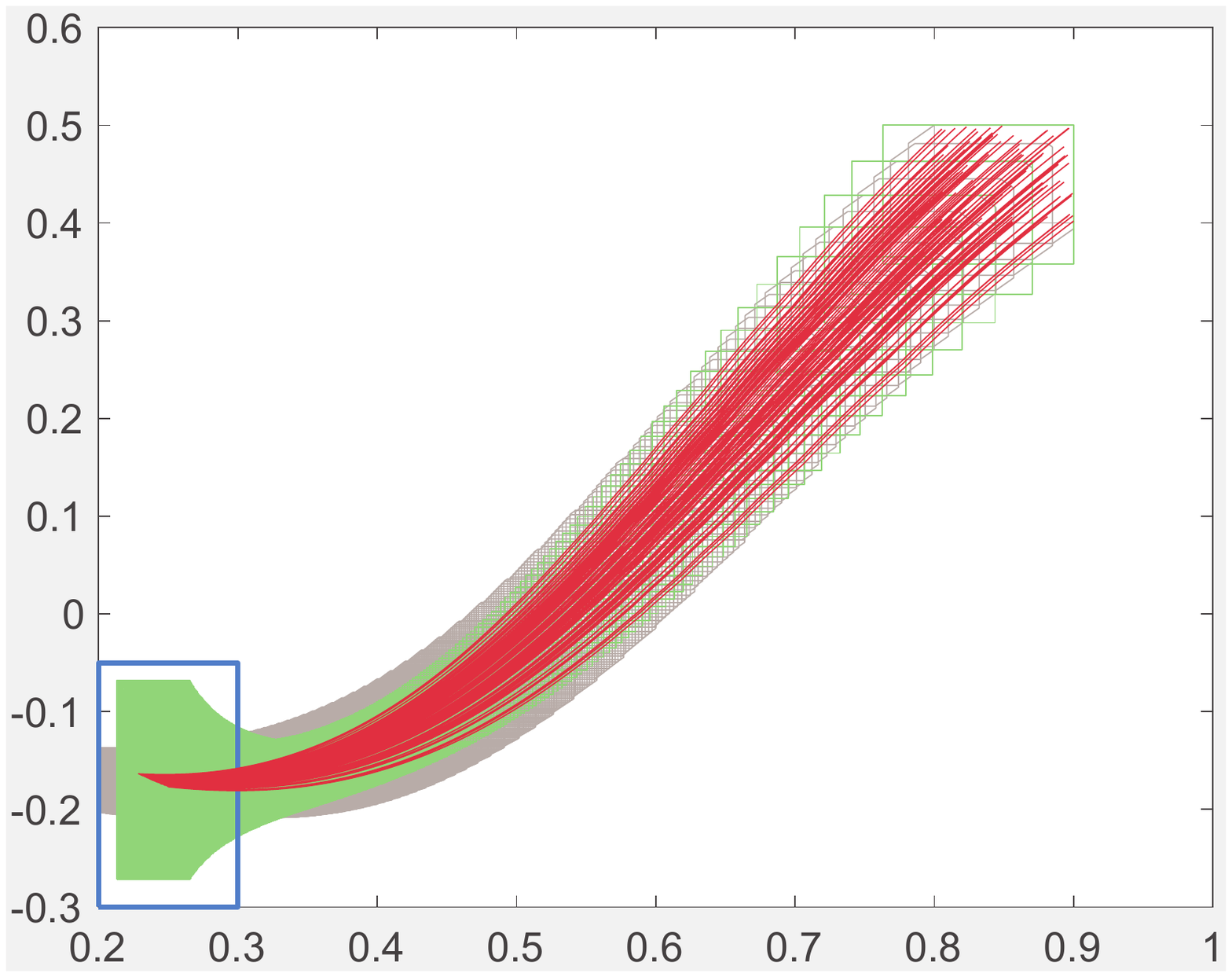}%
		\label{fig:ex3-sigmoid}%
	}\ \
	\subfloat[][ex3-tanh]{%
		\includegraphics[width=0.24\textwidth]{figs/benchmark3_tanh.pdf}%
		\label{fig:ex3-tanh}%
	}\ \
	\subfloat[][ex3-relu-sigmoid]{%
		\includegraphics[width=0.24\textwidth]{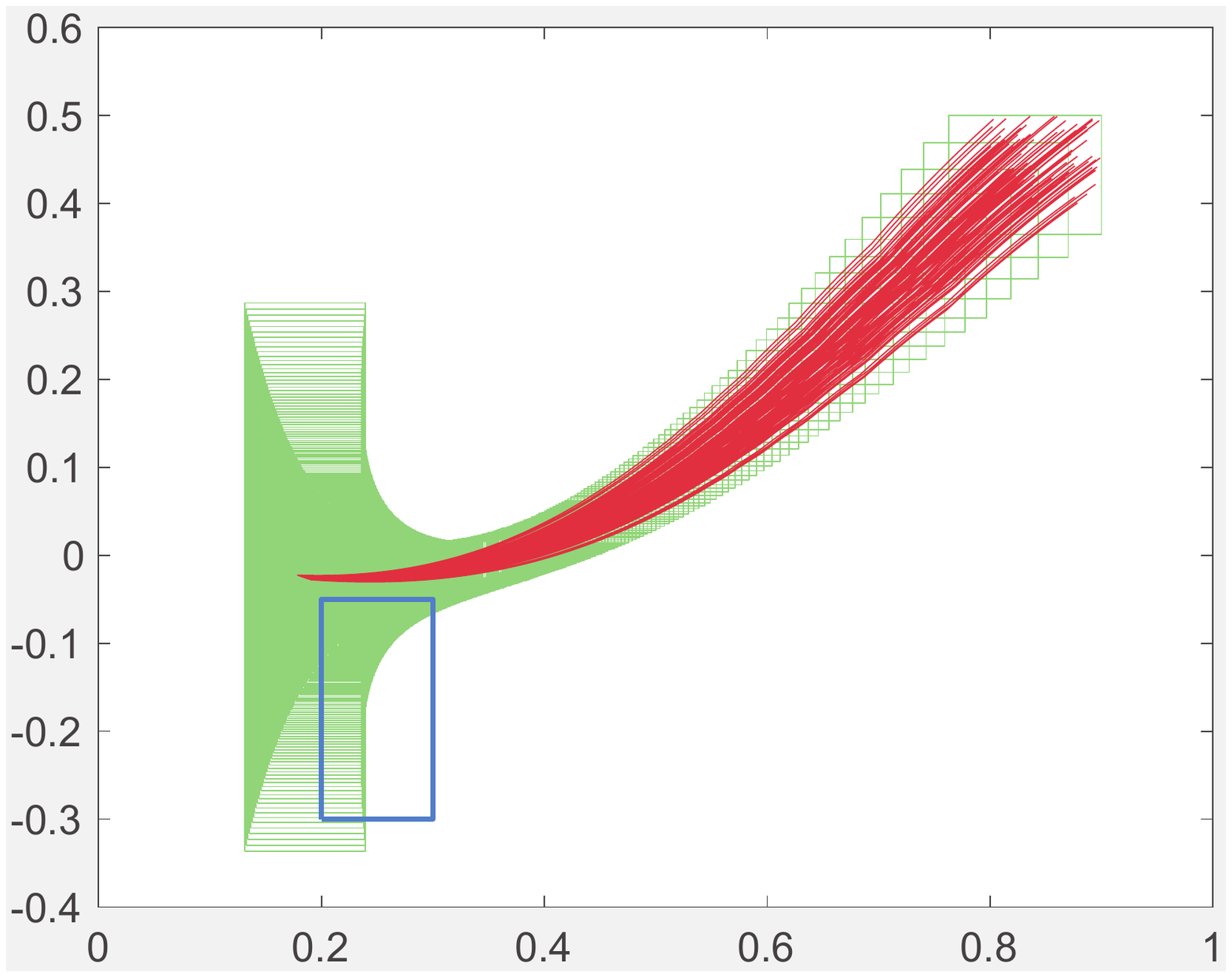}%
		\label{fig:ex3-relu-sigmoid}%
	}\\
	\subfloat[][ex4-relu]{%
		\includegraphics[width=0.24\textwidth]{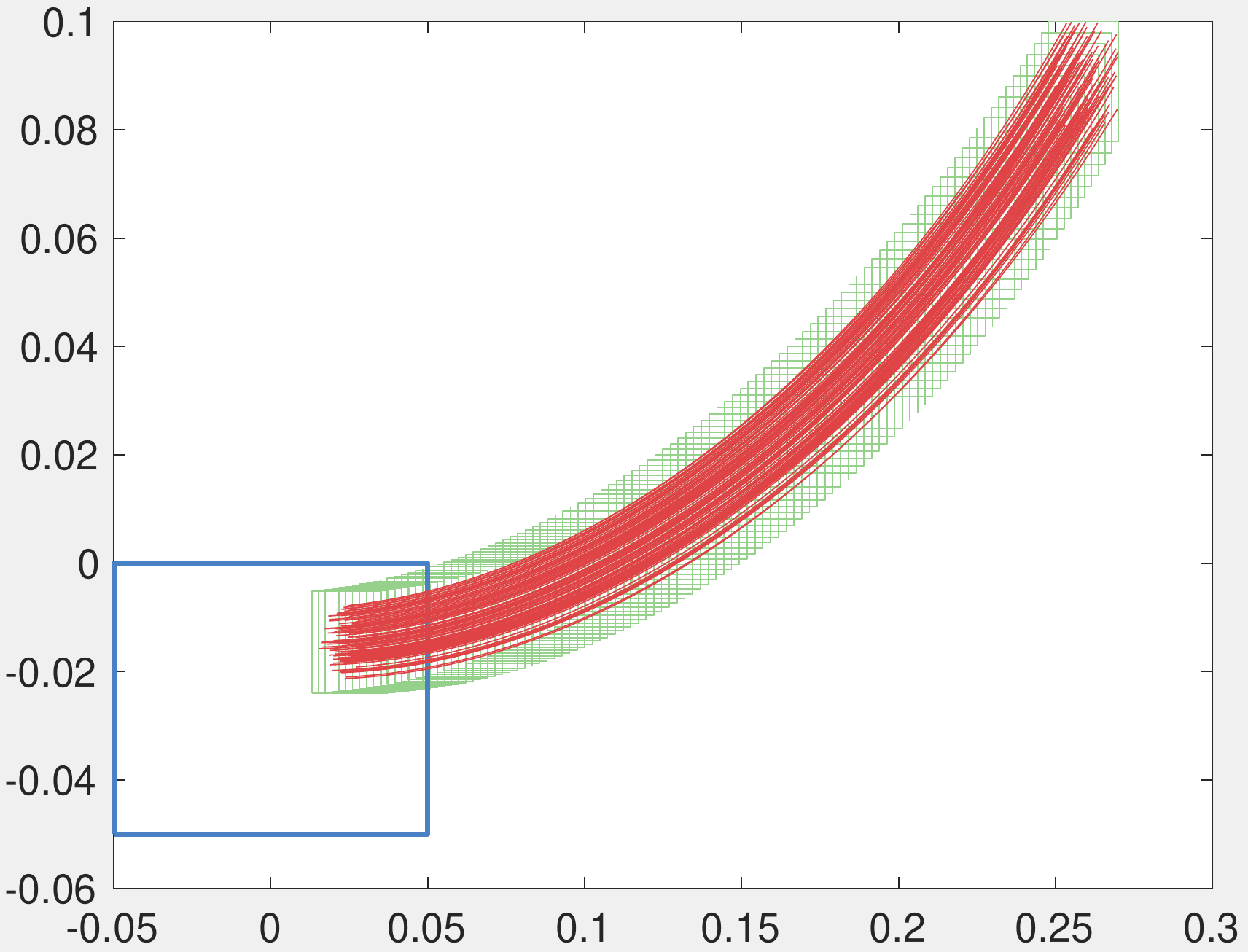}%
		\label{fig:ex4-relu}%
	}\ \
	\subfloat[][ex4-sigmoid]{%
		\includegraphics[width=0.24\textwidth]{figs/benchmark5_sigmoid.pdf}%
		\label{fig:ex4-sigmoid}%
	}\ \
	\subfloat[][ex4-tanh]{%
		\includegraphics[width=0.24\textwidth]{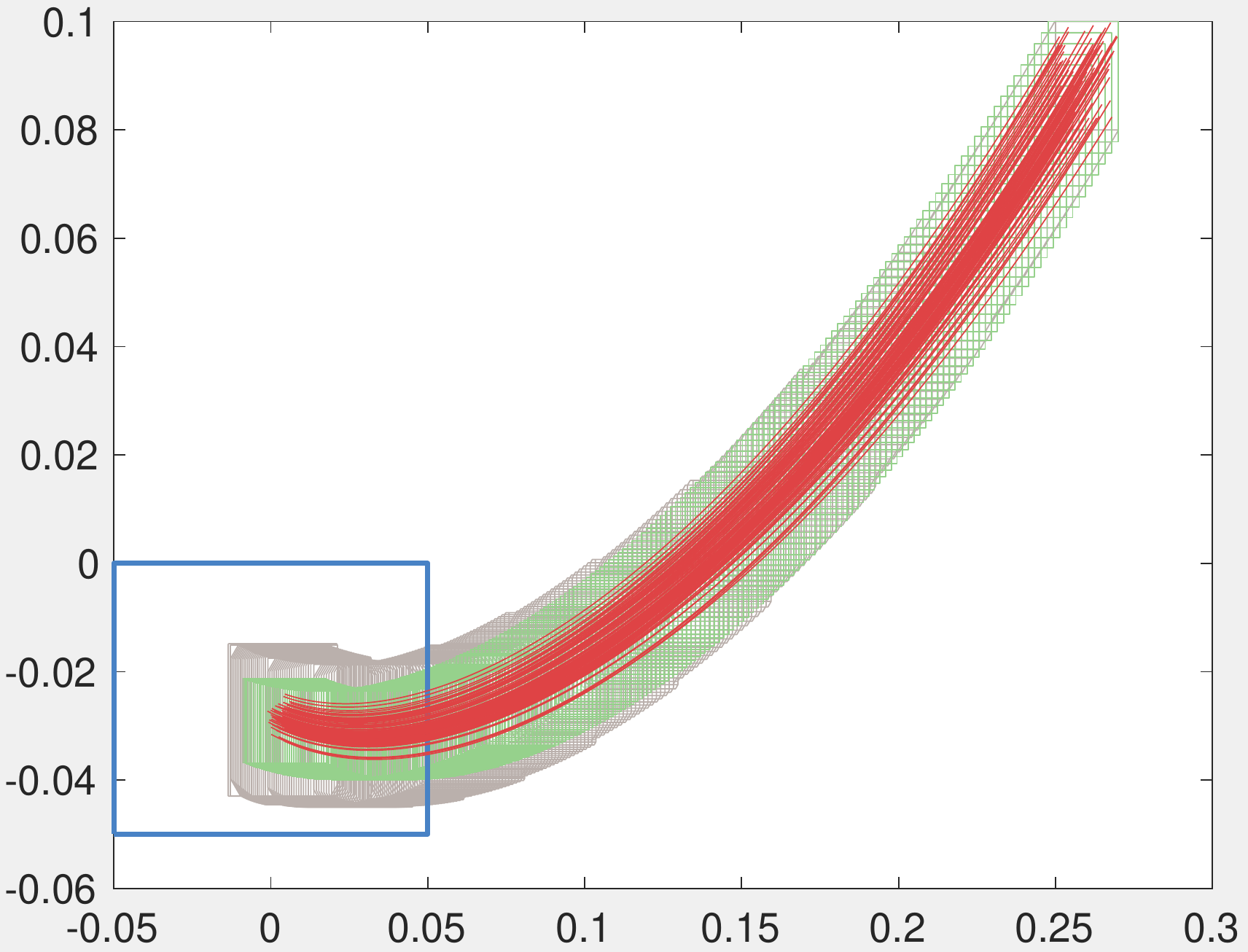}%
		\label{fig:ex4-tanh}%
	}\ \
	\subfloat[][ex4-relu-tanh]{%
		\includegraphics[width=0.24\textwidth]{figs/benchmark5_relu_tanh.pdf}%
		\label{fig:ex4-relu-tanh}%
	}\\
	\subfloat[][ex5-relu]{%
		\includegraphics[width=0.24\textwidth]{figs/benchmark7_relu.pdf}%
		\label{fig:ex5-relu}%
	}\ \
	\subfloat[][ex5-sigmoid]{%
		\includegraphics[width=0.24\textwidth]{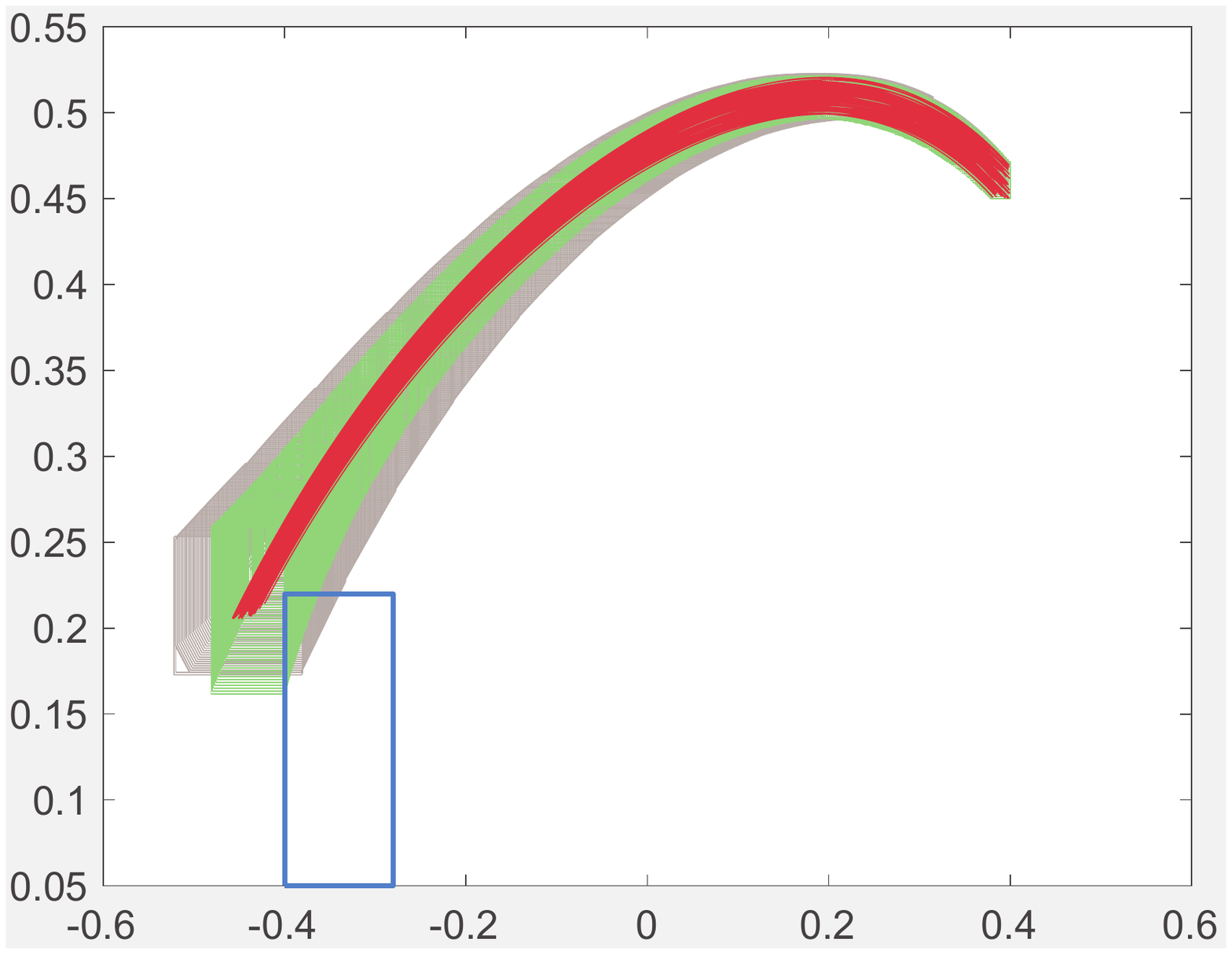}%
		\label{fig:ex5-sigmoid}%
	}\ \
	\subfloat[][ex5-tanh]{%
		\includegraphics[width=0.24\textwidth]{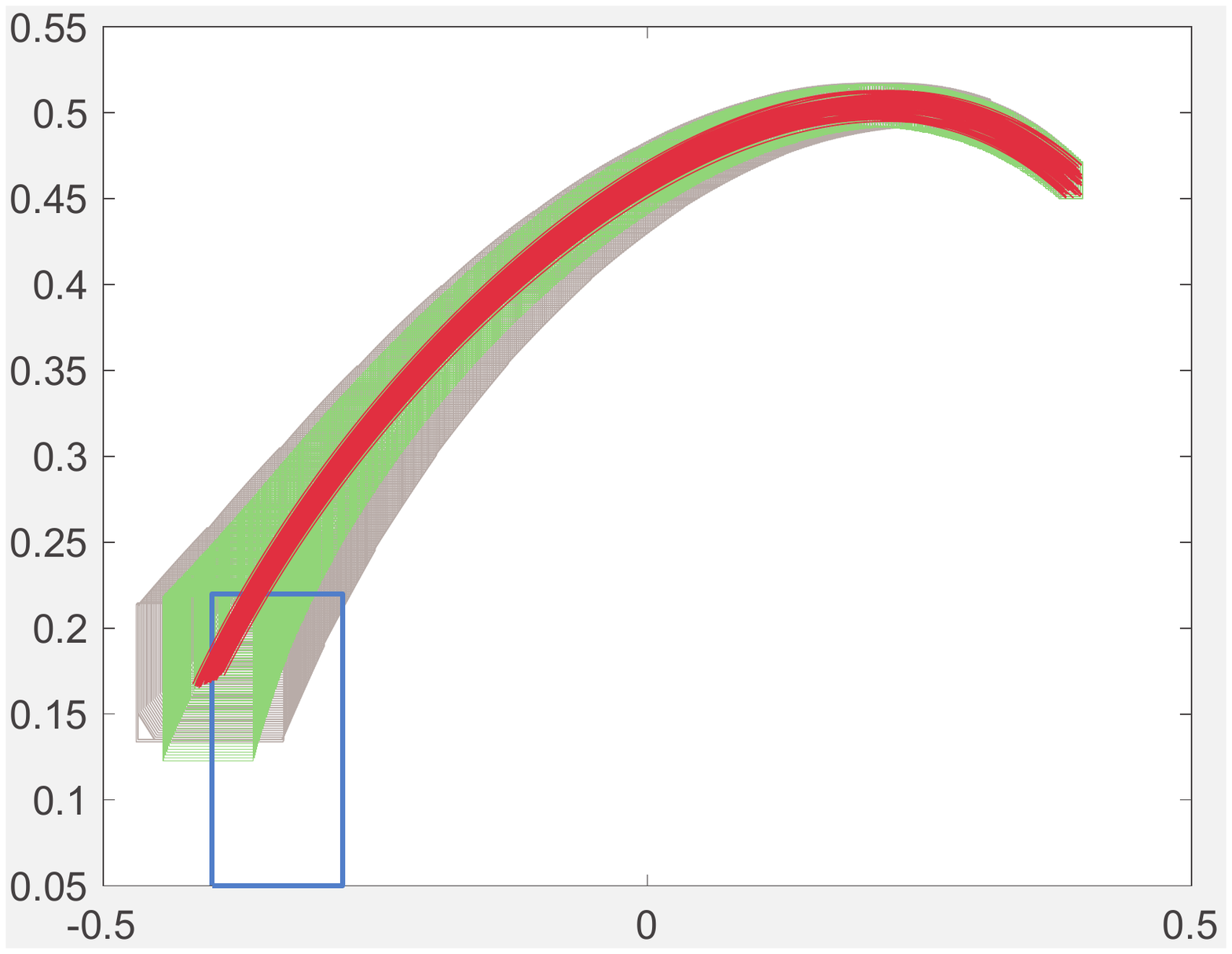}%
		\label{fig:ex5-tanh}%
	}\ \
	\subfloat[][ex5-relu-tanh]{%
		\includegraphics[width=0.24\textwidth]{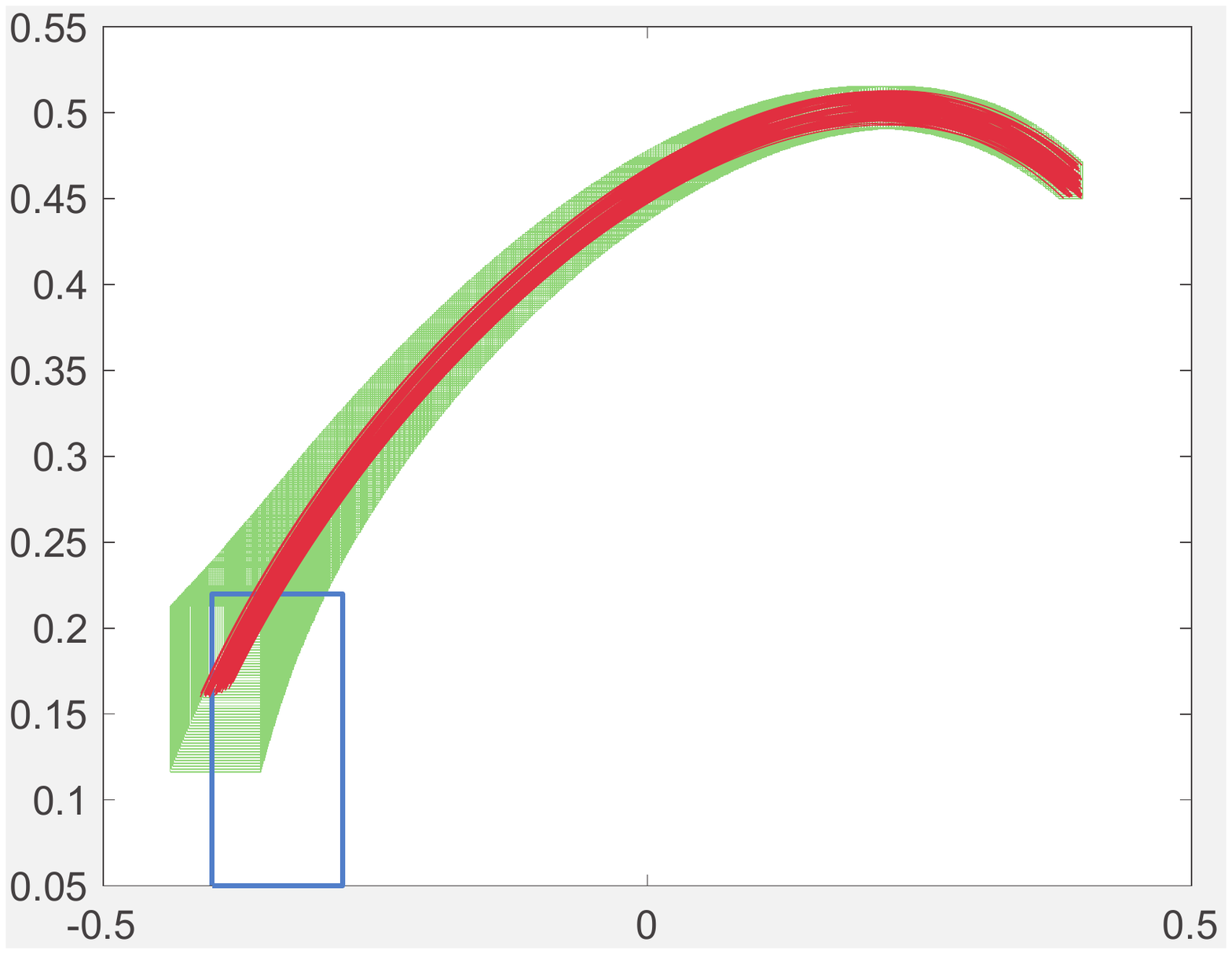}%
		\label{fig:ex5-relu-tanh}%
	}\\
	\subfloat[][ex6-relu]{%
		\includegraphics[width=0.24\textwidth]{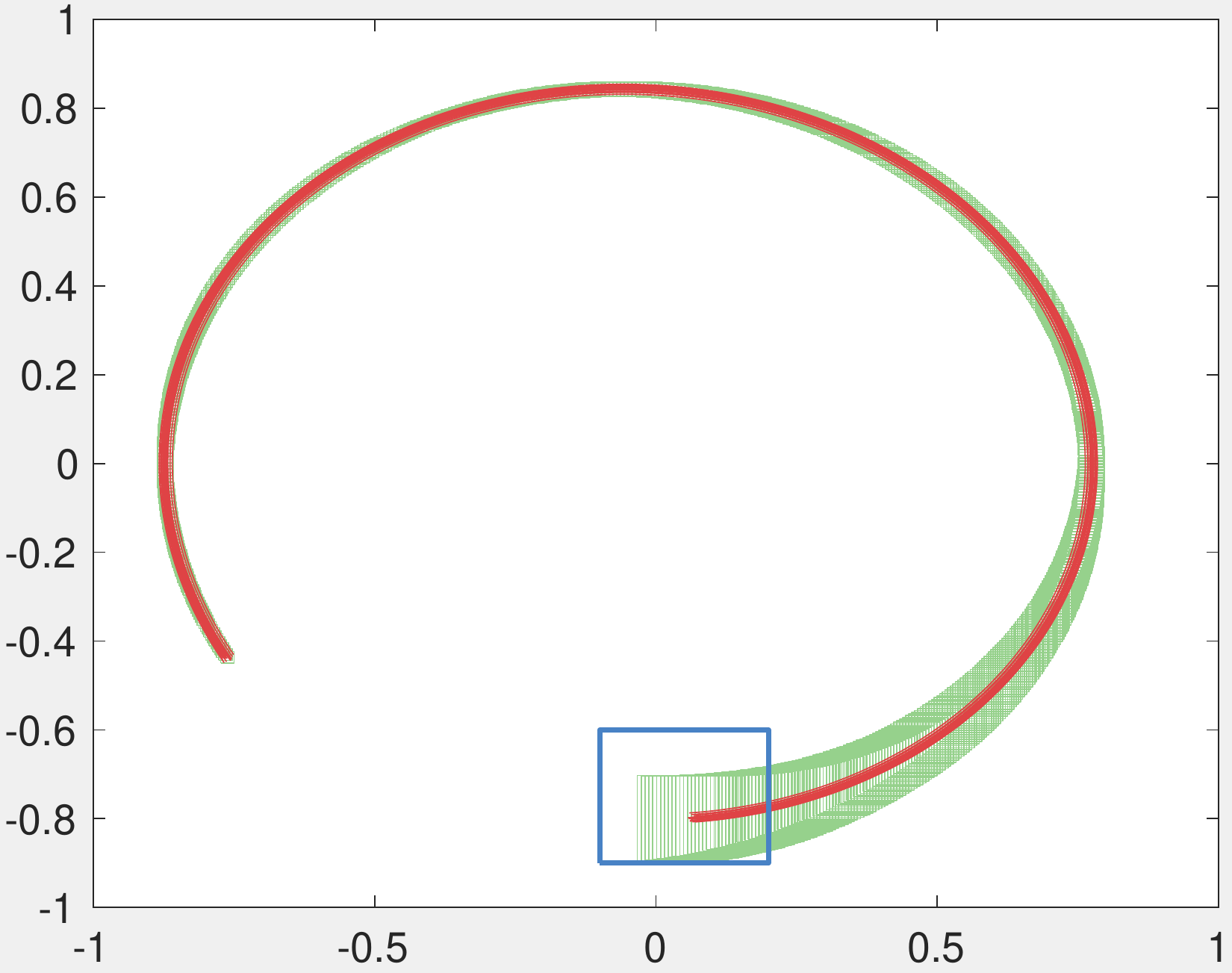}%
		\label{fig:ex6-relu}%
	}\ \
	\subfloat[][ex6-sigmoid]{%
		\includegraphics[width=0.24\textwidth]{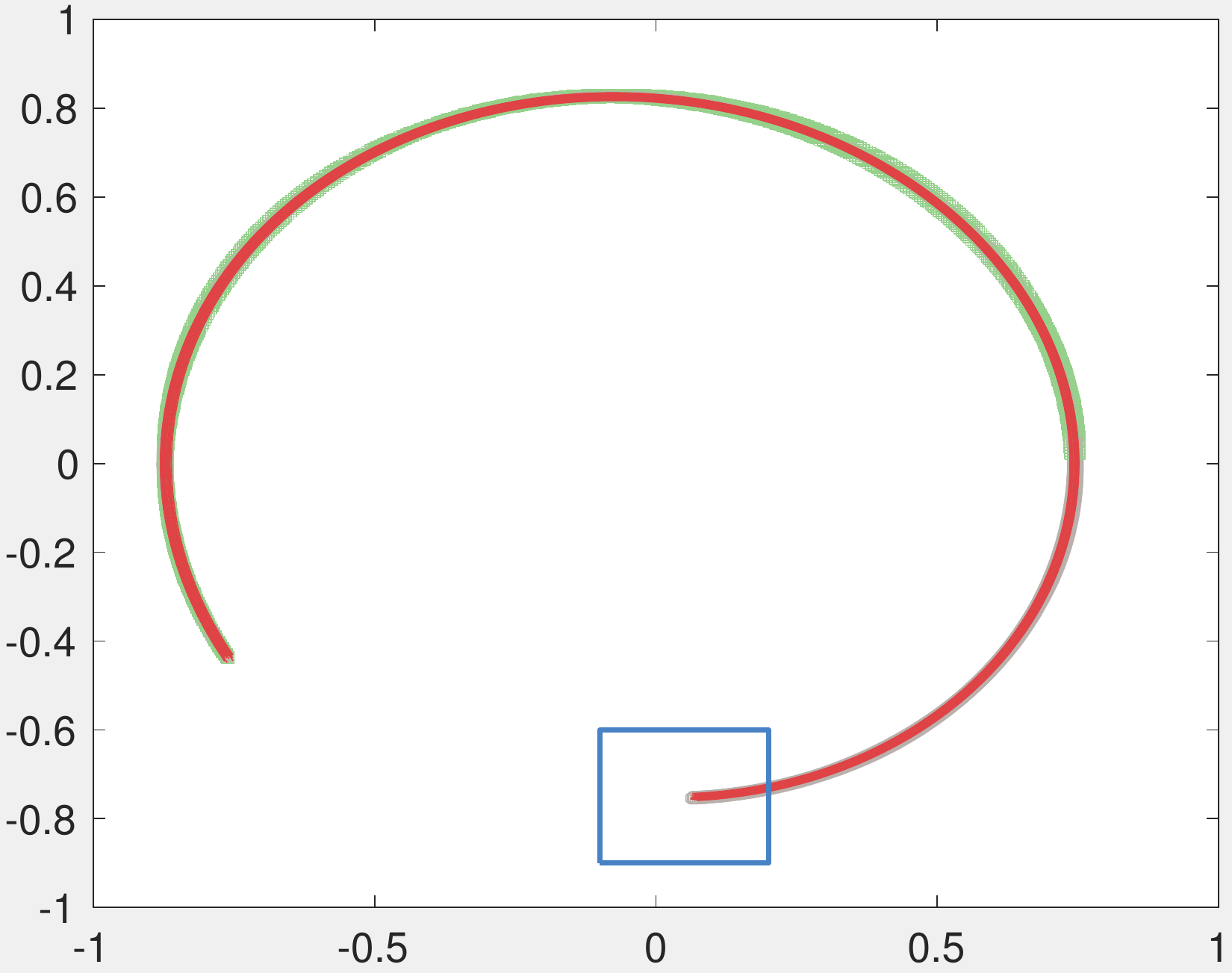}%
		\label{fig:ex6-sigmoid}%
	}\ \
	\subfloat[][ex6-tanh]{%
		\includegraphics[width=0.24\textwidth]{figs/benchmark9_tanh.pdf}%
		\label{fig:ex6-tanh}%
	}\ \
	\subfloat[][ex6-relu-tanh]{%
		\includegraphics[width=0.24\textwidth]{figs/benchmark9_relu_tanh.pdf}%
		\label{fig:ex6-relu-tanh}%
	}
	\caption{Examples}
	\label{fig:appendix_simulation}
\end{figure*}

\end{document}